\newif\ifpublic
\publictrue
\ifpublic
\documentclass[10pt,a4]{article}
\else
\documentclass[11pt,draft,letterpaper]{article}
\fi
\PassOptionsToPackage{hyphens}{url}


\usepackage[margin=1in]{geometry}
\usepackage[final,pagebackref,colorlinks=true,linkcolor=blue,urlcolor=blue,citecolor=blue,pdfstartview=FitH,breaklinks=true]{hyperref}
\usepackage[T1]{fontenc}    
\usepackage{amsmath}       
\usepackage{amssymb}
\usepackage{amsthm}
\usepackage{amsbsy}
\usepackage{nicefrac}       
\usepackage{lipsum}
\usepackage{fullpage}
\usepackage{mathpazo}
\usepackage{mathtools}
\usepackage{enumitem}
\usepackage{subfiles}
\usepackage{xcolor}
\usepackage{setspace}
\usepackage{algorithm}%
\usepackage{algorithmic}%
\usepackage{graphicx}
\usepackage{textcomp}
\usepackage{pifont}
\ifpublic
\usepackage[disable]{todonotes}
\else
\usepackage[colorinlistoftodos]{todonotes}
\fi
\newcommand{\cmark}{\ding{51}}%
\newcommand{\xmark}{\ding{55}}%

\newcommand{\specialcell}[2][c]{%
  \begin{tabular}[#1]{@{}c@{}}#2\end{tabular}}%

\newcommand*\samethanks[1][\value{footnote}]{\footnotemark[#1]}

\usepackage[capitalize,nameinlink]{cleveref}
\crefname{algocf}{alg.}{algs.}
\Crefname{algocf}{Algorithm}{Algorithms}
\renewcommand{\eqref}[1]{\hyperref[#1]{(\ref*{#1})}}

\usepackage{pgfplots}
\usepackage{appendix}
\pgfplotsset{compat=1.14}
\theoremstyle{plain}
\newtheorem{theorem}{Theorem}[section]

\newtheorem{lemma}[theorem]{Lemma}

\newtheorem{definition}[theorem]{Definition}

\theoremstyle{definition}
\newtheorem{remark}[theorem]{Remark}

\renewcommand{\epsilon}{\varepsilon}

\renewcommand{\phi}{\varphi}
\newcommand{\polylog}{\mathrm{polylog}}



\newcommand{\pg}[1]{\noindent\textbf{#1}~~}

\title{OverChain: Building a robust overlay with a blockchain}
\author{
  Vijeth Aradhya\thanks{National University of Singapore, Singapore. email: {\tt
      \{varadhya,seth.gilbert,hobor\}@comp.nus.edu.sg}.}
   \and
     Seth Gilbert\samethanks
   \and
 Aquinas Hobor\samethanks
}

\begin{document}
\maketitle
\tableofcontents
\pagebreak
\begin{abstract}
Blockchains use peer-to-peer networks for disseminating information among peers, but these networks currently do not have any provable guarantees for desirable properties such as Byzantine fault tolerance, good connectivity and small diameter. This is not just a theoretical problem, as recent works have exploited unsafe peer connection policies and weak network synchronization to mount partitioning attacks on Bitcoin. Cryptocurrency blockchains are safety critical systems, so we need principled algorithms to maintain their networks.
 
Our key insight is that we can leverage the blockchain itself to share information among the peers, and thus simplify the network maintenance process. Given that the peers have restricted computational resources, and at most a constant fraction of them are Byzantine, we provide communication-efficient protocols to maintain a hypercubic network for blockchains, where peers can join and leave over time. Interestingly, we discover that our design can \emph{recover} from substantial adversarial failures. Moreover, these properties hold despite significant churn.
 
A key contribution is a secure mechanism for joining the network that uses the blockchain to help new peers to contact existing peers. Furthermore, by examining how peers join the network, i.e., the ``bootstrapping service,'' we give a lower bound showing that (within log factors) our network tolerates the maximum churn rate possible. In fact, we can give a lower bound on churn for any fully distributed service that requires connectivity.
 \end{abstract}

\section{Introduction}

Blockchains, distributed ledgers, and most other distributed services rely on an ``overlay network'' to facilitate communication among the users of the service. For example, cryptocurrencies like Bitcoin \cite{nakamoto2008peer} rely on the peer-to-peer network to provide timely and efficient communication among the peers.  The primary goal of this paper is to give a new Byzantine-resilient algorithm for maintaining an overlay designed for blockchain networks with high rates of churn, proving it correct, robust, and efficient.  

The overlay network is specifically designed to integrate with a blockchain; in fact, the key insight in this paper is that by leveraging the blockchain itself to share critical information, we can develop a protocol that is simpler and more efficient than existing solutions. (In some ways, this is a ``cross-layer optimization,'' with the overlay protocol relying on the blockchain for which it provides the underlying communication.) For example, by using the blockchain, the overlay protocol does not have to pay the cost of running its own consensus protocols, and does not have to rely on complicated Byzantine-resilient structures. Instead, it can focus on the simpler task of maintaining an efficient communication network. 


\pg{Obstacles for overlay maintenance.} Three main challenges arise while building and maintaining an overlay network designed for use by a (public) blockchain: high rates of churn, the problem of introducing new peers, and malicious behavior.

When peers join and leave the network, the overlay needs to adapt, integrating the new peers and removing the departing peers, while maintaining the good properties of the network, e.g., small diameter and small degree. Worse, joins and leaves do not necessarily happen one at a time: many peers can (concurrently) join and leave the network at any given instant. This poses a unique challenge for peers to continually reconstruct the network in an efficient way.  

In fact, how new peers securely join the network is itself a critical aspect of a peer-to-peer system, commonly referred to as the ``bootstrapping problem''~\cite{castro2002one, conrad2007generic, dickey2008bootstrapping}.  To join, new peers need to contact some existing (honest) peers, and since overlay networks are decentralized, it is often not clear as to who should be responsible for helping new peers. Currently, cryptocurrency blockchains offer two main options for bootstrapping peers: (1) via DNS seeding and (2) hard-coded IP addresses (in the shipped software) \cite{loe2019you}. Hard-coded seeds are usually a fallback mechanism if the DNS seeding mechanism fails, as hard-coded IP addresses may become obsolete after a period of time. DNS seeding itself requires trusting a few sources to respond with random IP addresses that are within the network. Moreover, if there is a bandwidth constraint on each peer or DNS server, and there are only a handful of DNS servers responsible for bootstrapping, then the system cannot withstand high churn.

The challenge of overlay maintenance is only made worse by malicious behavior.  Malicious peers can create a large number of identities, allowing them to attack an overlay in a myriad of ways. They might attack the underlying maintenance protocol, conveying bad information about network structure and the peers joining or leaving. Or they might attack the overlay itself by strategically being concentrated in one part of the network, possibly with an aim to partition the network (or isolate a single peer), for e.g., as in an \textit{eclipse attack} \cite{singh2004defending, singh2006eclipse}.  

In an eclipse attack, the attackers isolate a subset of peers, and by taking over the connections of those peers, the adversary is able to control the information flow between them and the rest of the network. In the context of cryptocurrency blockchains, this attack can be used as a primitive for other attacks such as double spending, reducing \textit{effective} honest resources (by forcing a significant fraction of peers to mine on top of an \textit{orphan} block), and selfish mining \cite{nayak2016stubborn, eyal2014majority}.

Early eclipse attacks \cite{heilman2015eclipse, cryptoeprint:2018:236} exploited unsafe peer storage and connection policies. The adversary set up many incoming connections with the victim peer from diverse IP addresses and propagated many bogus addresses. Then, there was a high likelihood that the victim would initiate all its outgoing connections with the adversary once the victim restarted. Recently, Saad et al. \cite{saad2021revisiting} exploited Bitcoin's weak synchronization to create a partition by selectively broadcasting blocks to disjoint groups of miners. In a parallel work, Saad et al. \cite{saad2021syncattack} exploit churn, and partition existing and newly arriving peers. The adversary occupies all the incoming connections of existing peers, and as peers depart and new peers join, they only connect with adversarial peers from the sample provided by the DNS seeds.




\pg{Prior work.}  The traditional approach for providing resilience against Byzantine interference in the well-studied distributed hash table (DHT) literature is \emph{replication} \cite{naor2007novel, fiat2007censorship, awerbuch2009towards}. Thus, our starting point is a virtual network, specifically a hypercube, in which each vertex of the hypercube is implemented in a replicated fashion by a set of peers. We will refer to the set of peers that collectively replicate a vertex as a \emph{committee}. Such a replication method is useful for ensuring that there are a sufficient number of honest peers in each committee.

However, replication alone is not enough to guarantee robustness in networks where peers can join and leave over time \cite{awerbuch2004group}. For example, a subset of honest peers can stay for a long time, while other honest peers experience churn, and the Byzantine peers repeatedly rejoin to isolate those honest peers. The join algorithm becomes crucial in such a setting as it is not only used to maintain the overall network structure but also in ensuring that the malicious peers are well-spread throughout the committees. Even if there exists a join algorithm that places a new peer in a random committee, Byzanine peers can overwhelm a committee with just a linear number of rejoins \cite{scheideler2005spread}. Therefore, join algorithms are typically complemented by ``perturbing'' the network, where a small number of (existing) peers are shuffled among some committees \cite{scheideler2005spread, awerbuch2009towards, guerraoui2013highly}.


Unfortunately, existing solutions to such join-leave attacks are fairly expensive, both from a latency and message complexity perspective (cf. Table \ref{tab:comparison}). For instance, consider integrating a new peer and placing it in a random committee. The system would have to run heavy-weight protocols such as distributed coin flipping \cite{awerbuch2004group, fiat2005making, awerbuch2009towards}, or rely on random walks \cite{guerraoui2013highly}, to sample a random committee. Or if the new peer locally uses a hash function to compute a random committee, then the network needs to continually generate a (global) random string as a (partial) input to avoid pre-computation attacks \cite{jaiyeola2018tiny}. Furthermore, the new peer also needs to be routed to the appropriate committee. Such protocols involve a logarithmic join latency\footnote{As in \cite{jaiyeola2018tiny}, we employ a Sybil defense mechanism to control the number of malicious identities. Other prior works \cite{awerbuch2004group, fiat2005making, awerbuch2009towards, guerraoui2013highly} are also vulnerable to Sybil attacks but they assume such a Sybil mechanism already exists. Join latency does not include the time taken to solve the puzzle given for Sybil defense.}, and are hard to implement in practice.

Moreover, existing works do not address the bootstrapping problem, which is crucial for the join algorithms. They make strong assumptions such as the existence of trusted peers that can initiate an unlimited number of joins, or access to random peers within the network. Jaiyeola et al. \cite{jaiyeola2018tiny} point out that secure bootstrapping would aid their algorithms and other prior solutions to robust overlay maintenance. Our goal is to include the bootstrapping process as in integral part of the overlay protocol.

As with any long-lived system, it is possible eventually for something to go wrong. Consider exceptional scenarios, termed as \emph{catastrophic failures} (cf. Section \ref{sec:recovery}), where malicious peers instantly overwhelm a large number of committees. The existing solutions \cite{awerbuch2004group, fiat2005making, awerbuch2009towards, guerraoui2013highly, jaiyeola2018tiny} heavily rely on honest majority in committees for the correctness of their algorithms, thus making it hard to recover from such a scenario. One of the interesting properties of our design is that the overlay recovers fairly naturally from such disasters.

\begin{table}
\begin{center}
\begin{tabular}{ |c|c|c|c|c| }
\hline
        & \specialcell{Join latency\\(in rounds)} & \specialcell{Join comm.\\complexity} & \specialcell{Polynomial\\variation in\\network size} & \specialcell{Recovery\\(catastrophic\\failures)}\\ 
\hline
 Group spreading \cite{awerbuch2004group} & $O(\log N)$ & $O(\log^3 N)$ & \xmark & \xmark\\
\hline 
 S-Chord \cite{fiat2005making} & $O(\log N)$ & $O(\log^3 N)$ & \xmark & \xmark\\
\hline 
 Cuckoo rule \cite{awerbuch2009towards} & $O(\log N)$ & $O(\log^3 N)$ & \xmark & \xmark\\
\hline
 NOW \cite{guerraoui2013highly} & $O(\log^4 N)$ & $O(\log^6 N)$ & \cmark & \xmark\\
 \hline
 OverChain & $O(1)$ & $O(\log^3 N)$ & \cmark & \cmark\\
 \hline
\end{tabular}
\caption{Comparison under different performance metrics.}
\label{tab:comparison}
\end{center}
\end{table}

\subsection*{Our approach} In general, overlay maintenance amidst churn and Byzantine interference requires considerable coordination among peers. Our insight is that this coordination issue is exactly what blockchains are designed to solve! Thus, we store auxiliary data on the blockchain to efficiently maintain the overlay. The idea of exploiting on-chain information to simplify and facilitate off-chain distributed algorithms already exists in practice. For example, payment channel networks\footnote{They (and most other Layer-2 solutions) make implicit assumptions regarding blockchain not forking (safety) and the maximum time taken to confirm a transaction on the blockchain (liveness).} such as Bitcoin's Lightning Network \cite{poon2016bitcoin}, enable fast transaction confirmation if users are willing to lock funds in them.

An existing model that captures something of the same idea is a ``public bulletin board'' \cite{mitzenmacher2000useful} where entities within the system can write information that can be read by everyone. A cryptocurrency blockchain differs from a typical bulletin board in three ways: (1) the amount of auxiliary information in a block must be small, (2) the rate at which information can be shared is limited by the block interval, and (3) the network may never be fully synchronized where each peer holds the same chain. Thus, one of our contributions is to carefully distill the properties provided by the blockchain in a way that the overlay algorithms can be concisely described while not losing track of real world implementation.


An important question to ask, then, is how much information overlay algorithms need to store on the blockchain. We could, for example, try to store the  entire membership and topology information --- but we would then be spending a large majority of our blockchain bandwidth in handling the overlay! Instead, our goal is to use only a \emph{constant} number of entries per block. Specifically, each block stores the identity of only one peer in the overlay. To maintain a virtual hypercube with at most $N$ peers, we rely on a set of about $\Theta(\sqrt{N})$ of the most recent blocks which store a set of about $\Theta(\sqrt{N})$ peers, which we refer to collectively as a \emph{directory}. The fairness of the blockchain ensures that not too many of these peers will be malicious. Each peer in the directory is responsible for a subset of the committees, and keeps tracks of the members of those committees.

A crucial observation is that cryptocurrency blockchains are publicly available, and their contents can be read by anyone. Specifically, a recent copy of the blockchain is available to the public at all times, and this provides an entry point to the service. Blockchain explorers \cite{bitinfocharts, blockchaincom, etherchain} satisfy this requirement to some extent; multiple copies may be available but the \textit{confirmed}\footnote{In the blockchain literature, there exists a notion of a ``confirmed'' chain for which the sequence of blocks (up to that point) will not change in the future.} chain ensures some synchronization, i.e., there might be some disagreement about the last few blocks among those explorers.

Another problem we face is that malicious peers can create a large number of identities and take control of the overlay.  Again, the blockchain has already solved this problem, typically via proof-of-work.  We adopt the same mechanism for limiting the rate at which new peers join the network.

\pg{Overview.} In this way, most of the coordination required by the overlay is handled simply by storing a small amount of information on each block, leading to a significantly simpler approach than is typical for Byzantine-resilient overlay protocols. A new peer gets a public copy of the blockchain, and computes a proof-of-work puzzle that determines its (random) committee. The peer then contacts a small number of directory members to quickly obtain information about existing peers in that committee. The committees (or directory members) do not need to run a consensus protocol, perform random walks, or do any other sort of coordination. Figure \ref{fig:qe-img} gives a pictorial overview of our design. 

\begin{figure}[tb]
\centering
\includegraphics[width=0.5\linewidth]{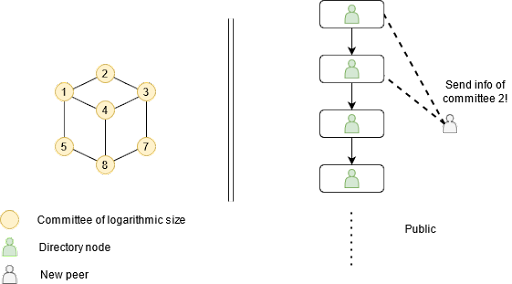}
\caption{A new node requesting information about existing nodes in a committee from a set of directory nodes.}
\label{fig:qe-img}
\end{figure}

\pg{Churn.} Furthermore, we seek to understand the limits of the \emph{rate} of churn. We adopt the \emph{half-life} approach to churn rate for honest peers: if there are $H$ peers in the system, then over a specific interval of time, the half-life, at most $H/2$ new peers can join or at most $H/2$ peers can depart. This allows for highly bursty behavior, with large numbers of concurrent joins and departures. (Malicious peers can stay for as long as they like, and can create new identities as fast as the proof-of-work mechanism will let them.) We provide a lower bound for the feasible half-life that depends on the rate of churn and the bandwidth constraints. 


As new blocks are installed on the blockchain, and as existing blocks age, the members of the directory change, handing off information from old directory members to new directory members in a controlled process. Similarly, when the number of peers changes significantly, the size of the virtual hypercube has to change, migrating information to new directory members. Both of these processes of information exchange have to be carefully managed to avoid Byzantine interference.


\pg{Recovery.} Finally, the blockchain is not just an alternative interface for new peers to join, it also aids the overlay to recover from catastrophic failures. As long as most of the committees and directory are still functioning properly, our observation is that the overlay operates sufficiently well to continue installing new blocks, to continue replacing directory and committee members, and restoring the fully correct operation of the overlay, i.e., to ensure that there are again sufficient number of honest peers in every committee.

\subsection*{Summary of results} We exploit the blockchain for bootstrapping peers, (global) coordination among peers (for e.g., agreement on new topology, etc.) and recovery from arbitrary committee failures. We summarize our contributions (that hold with high probability\footnote{In this paper, ``with high probability'' (abbreviated as ``whp'') refers to a probability of at least $1-N^{-k}$ where $N$ is the number of peers and $k > 1$ is an appropriate constant.}), in the context of a network with at most $N$ peers, where the average block interval is $\beta$.
\begin{enumerate}
    \item We design protocols to maintain a dynamic hypercubic network of $\Theta(\log N)$-sized committees, where the half-life is $\alpha = \Theta(\beta \sqrt{N} \log N)$ and the network size can vary polynomially over time.
    \item We prove that the graph formed by the honest peers remains connected for polynomial number of rounds, and each peer sends/receives only $O(\log^3 N)$ messages per round.
    \item We show that even when catastrophic failures occur (e.g., a constant fraction of committees and their corresponding directory members are instantly corrupted), the overlay recovers within a small number of half-lives.
    \item We give a lower bound, barring log-factors, for minimum feasible half-life, $\alpha = \widetilde{\Omega}(\sqrt{\beta N})$, showing that it is impossible to tolerate higher rates of churn, even if peers share a public bulletin board that can be used for joining.
\end{enumerate}

\section{Model}\label{sec:model}

\noindent\textbf{Entities.} A peer is the real-world entity that participates in the blockchain network. There are two types of peers: (1) honest peers that follow the specified protocols, and (2) Byzantine peers that may deviate from the protocols in an arbitrary way. The network size refers to the total number of peers within the network. The \emph{maximum} network size is denoted by $N$, though a peer can control multiple \textit{identities} or \textit{nodes} within the network.

\noindent\textbf{Communication.} The system proceeds in synchronous \textit{rounds}; in each round, in addition to local computation, a message that is sent at the beginning of a round by a peer is assumed to reach its neighbours by the end of that round. Each peer maintains a set of \textit{neighbouring} peers that it is said to be \emph{connected} with, i.e., those peers are used for sending and receiving information.

\noindent\textbf{Computational restriction.} The peers are associated with a \textit{hash power} constraint, i.e., each peer owns one unit of hash power that allows the peer to query a hash function (modelled as a random oracle) $q > 1$ times in a round \cite{garay2015bitcoin, pass2017analysis}. (If some entity has more hash power, then it can be viewed as a coalition of peers.) Given an input of any length, the hash function is assumed to provide a random output of (fixed) length $\kappa = \Theta(\log N)$.

\noindent\textbf{Blockchain.} If there is a graph (where vertices are honest peers and edges are connections) of at least $\mu_n (1-\rho) n$ honest peers\footnote{We consider a (large) subset of honest peers because the network can get split into multiple components during a catastrophic failure (cf. Section \ref{sec:recovery}).} with a diameter of $\Delta \leq 2 \log N$, and there are at most $\rho n$ Byzantine peers, for appropriate constants $\rho < \mu_n < 1$, where $n$ is the \emph{current} number of peers, then the blockchain is guaranteed to provide the following properties for those honest peers with probability at least $1-2^{-\kappa}$.

    \textit{Safety.} There exists a notion of confirmed chain\footnote{We do not delve into details of how to obtain this confirmed chain as this may be blockchain-specific, for example, Bitcoin deems a block to be confirmed if it is at least 6 blocks deep.} $C^r_u$ for any honest peer $u$ in round $r$.
     \begin{itemize}
        \item ($\Delta$-Synchronization) If $|C^r_u|$ is the length of the confirmed chain of honest peer $u$ at round $r$, then by round $r + \Delta$, every honest peer's confirmed chain's length is at least $|C^r_u|$.
        \item (Consistency) $C^r_u$ is always a prefix of $C^{r'}_u$ for any round $r' \geq r$. At any round $r$, if $|C^r_u| \leq |C^{r}_v|$, then $C^r_u$ is a prefix of $C^{r}_v$, and for a large enough constant $\mu_s$, $|C^r_u| - |C^{r}_v| \leq \mu_s$.
    \end{itemize}
    
    
    \textit{Liveness.} For large enough $T$ and constant $\mu_b$, any consecutive $T \geq \Omega(1)$ blocks are included in any confirmed chain in $[T\beta / \mu_b, T\mu_b\beta]$ rounds, where $\beta$ is said to be the average block interval.
    
    
    \textit{$\delta$-Approximate Fairness.} Any set of honest peers controlling a $\phi$ fraction of hash power own at least a $(1 - \delta)\phi$ fraction of the blocks in any $\Omega(\kappa/\delta)$ length segment of the chain \cite{pass2017fruitchains}.
    
    \textit{Public Availability.} There exists an introductory service $\mathcal{I}$ that provides a copy of an existing honest peer's blockchain. (In practice, this typically means that there exist a set of publicly available blockchains, of which at least some are honest. The analysis holds if the public copies are slightly outdated, say by a constant number of blocks.)

\noindent\textbf{Adversary.} We consider Byzantine peers that can collude and arbitrarily deviate from the specified protocols. In any round, the number of Byzantine peers is at most a $\rho$ fraction of the network size. They know the entire network topology in any round, but they cannot modify or delete messages sent by honest peers. The goal of Byzantine peers is to disturb the normal functioning of the overlay, for e.g., isolate a subset of honest peers by occupying all their connections, or increase the diameter of the overlay, etc.


\noindent\textbf{Churn.} We consider an adversary \cite{awerbuch2009towards, guerraoui2013highly} that specifies the join/leave sequence $\sigma$ for honest peers in advance. But it can choose to adaptively join/leave a Byzantine peer. In particular, after the first $i$ events in $\sigma$ are executed, the adversary can either choose to join/leave a Byzantine peer or initiate the $(i+1)^{\mathrm{th}}$ event in $\sigma$. (This models scenarios such as an honest peer $h$ stays for a long time, and the other honest peers are subjected to churn, and Byzantine peers can adaptively rejoin until $h$'s neighbouring peers are Byzantine.)

\noindent\textbf{Churn rate.}\label{SUBSECchurnass} We consider the \emph{half-life} measure \cite{liben2002analysis} to model the churn rate for honest peers. Formally, at any given round $r$, the halving time is the number of rounds taken for half the number of honest peers (which were alive at round $r$) to leave the network; similarly, the doubling time is the number of rounds required for the number of honest peers to double.

An \textit{epoch}, denoted by $\alpha$, is defined as the smallest halving time or doubling time over all rounds in the execution. Furthermore, we assume that the epoch is much greater than the average block interval; in other words, $\alpha \gg \beta \log N$.

\noindent\textbf{Honest peer failure.} By the churn rate assumption, at most half of the honest peers that are alive at round $r$ can leave (fail) by the end of $r + \alpha$ rounds. (We do not make any distinction between a leave event and a failure.) Since the adversary has to obliviously specify the join/leave sequence for honest peers (and they are initially assigned random identities), each honest peer is assumed to (independently) fail with a probability $p_{f} \leq 0.5$ in each epoch, where $p_{\mathit{f}}$ is the fraction of honest nodes that leave during the epoch.

\noindent\textbf{Change of network size.} The network size can change by a factor of at most 2 in any epoch. It can polynomially vary over time; the number of peers at any round $r$, $N_r \in [N^{1/y}, N]$ for some constant $y > 1$.

\section{Stable Network Size}\label{sec:stablenetsize}
In this section, we assume that the total number of peers is fixed and equal to $\Theta(N)$. In Section \ref{sec:dynamic}, this assumption is relaxed, where we allow a polynomial variation in network size over time. (For simplicity, we avoid using floor/ceiling repeatedly unless necessary.) We describe the protocols for a peer by making use of the access to its confirmed chain. We now describe the overlay structure and our high-level approach.

\noindent\textbf{Nodes.} Each peer generates and controls $\Theta(\log N)$ (virtual) \textit{nodes} in the overlay. Each peer participates (sends and receives messages) in the network through its nodes. All peers are required to perform proof-of-work to generate nodes. Each node has a \textit{lifetime} after which it will be considered invalid (or removed). There are two \textit{roles} of a node: a (1) directory node, and a (2) non-directory node. Every node is initially a non-directory node, but some nodes become directory nodes as well, playing both roles.

\noindent\textbf{Committees.} Our protocols maintain a structured network of \textit{committees}, specifically, a hypercube whose vertices correspond to committees. In total, there are $\mathcal{C} = N$ committees, each consisting of $\Theta(\log N)$ nodes. (Logarithmic redundancy is used to show that each committee has a sufficient number of honest peers amidst churn, so that the hypercube structure is maintained.) Each committee is identified by a (unique) committee ID of $\log N$ bits. As a peer may be controlling multiple nodes, it may be present in multiple committees.

\noindent\textbf{Directory node.} A peer that successfully adds a block to the blockchain promotes one of its nodes to a directory node. While trying to mine for blocks, each peer adds its network address into the prospective block. The directory nodes are responsible for helping incoming new nodes to join a (random) committee. (Recall that a new node can contact directory nodes as a copy of the blockchain is publicly available.) They do so by providing the network addresses of the required committee members.

\pg{High-level overview.} Byzantine peers can repeatedly join and leave until they get placed in a specific set of committees. This can result in most of an honest node's neighbours being Byzantine over time, as other honest neighbours can leave (due to churn). We enforce a limited lifetime (that is close to the half-life of honest peers) for all nodes so that an honest node would have to leave and rejoin some other (random) committee, whilst some of its neighbours are still honest. The technical difficulty lies in designing a secure bootstrapping mechanism while handling churn, bandwidth-constraints and Byzantine interference.

\subsection{Directories}\label{SUBSECdir}
Directory nodes are critical for maintaining membership information of nodes in the network. A directory comprises $\mathcal{B}$ consecutive ``buckets'', and each bucket consists of consecutive $\lambda_d\log^2 N$ blocks, where $\lambda_d$ is a suitable constant. Each peer that creates a block added to the confirmed chain becomes part of one of the buckets in the directory (via one of its nodes). A directory node can be associated with its block number. The total number of directory nodes is equal to $\mathcal{K} = \mathcal{B}\lambda_d\log^2 N$.

Since directory nodes help new peers join the network, we need to show that there always exists enough honest directory nodes in each bucket amidst churn. Using the fairness assumption, it turns out that $\Theta(\log^2 N)$ bucket size is sufficient to show that there are $\Omega(\log N)$ honest peers.

\noindent\textbf{Functions.} Each bucket is \textit{responsible} for a set of committees, i.e., all the directory nodes in that bucket are supposed to help new nodes join a specific set of committees. There are two main functions of a directory node.
\begin{enumerate}
    \item A directory node \textit{stores} information about nodes belonging to a set of committees. Specifically, if a new node joins one of those committees, then it stores the new node's \textit{entry information}\footnote{See \texttt{JOIN} protocol description in Section \ref{subsec:join} for the exact definition. The word ``entry'' may be dropped when the context is clear. A committee's information refers to the set of all its nodes' entry information.} (that includes network address, etc.).
    \item A directory node \textit{sends} information about committees that the directory node is responsible for. It sends entry information about all the existing nodes in the committee that the new node belongs to, or its neighbouring committees.
\end{enumerate}

\noindent\textbf{Phases of a bucket.} The buckets can belong to one of the following four phases; the directory nodes are also classified into those four phases in the same way as buckets. We describe a directory node's functions at each phase, since its inception in the confirmed chain. This is summarized in Table \ref{tab:phases}. (The chain is divided into buckets since the beginning, resulting in periods where the chain length may not be an integral multiple of $\lambda_d\log^2 N$. Except in infant phase, a bucket in every other phase has all the $\lambda_d\log^2 N$ blocks confirmed.)
\begin{enumerate}
    \item \textit{Infant.} A bucket in which at least one block (out of $\lambda_d\log^2 N$) is confirmed, but not all the $\lambda_d\log^2 N$ blocks are confirmed yet. The nodes do not store the incoming new nodes' entry information. They do not respond to the incoming new nodes about any committees.
    \item \textit{Middle-aged.} These nodes store incoming new node entry information, and reply to them about the relevant committees' information that they know about.
    \item \textit{Veteran.} These nodes do not store new nodes' entry information, but they do reply about the relevant committees' information known to them.
    \item \textit{Dead.} These nodes (as in infant phase) neither store any new node information, nor reply to new nodes about any committee information.
\end{enumerate}

\pg{Phase transitions.} As new blocks are added, new directory nodes take over the functions of old directory nodes. For a peer within the network, we detail the transitioning of a directory node from one phase to another. (In the transitions, there is a delay of $\Delta$ rounds is to ensure that the confirmed chains of honest peers reach the same height before making those transitions.) 
\begin{enumerate}
    \item \textit{Infant to Middle-aged.} If all $\lambda_d\log^2 N$ blocks of the bucket are confirmed in the blockchain, then the bucket transitions to middle-aged phase.
    \item \textit{Middle-aged to Veteran.} If the bucket is not among the most recent (confirmed) $\mathcal{B}$ buckets, then the bucket transitions to veteran phase after a delay of $\Delta$ rounds.
    \item \textit{Veteran to Dead.} If the bucket is not among the most recent (confirmed) $\mathcal{B}_{\mathit{act}}$ buckets, then the bucket transitions to dead phase after waiting for $\Delta$ rounds.
\end{enumerate}

\noindent\textbf{Committee-Directory mapping.} There exists a predetermined mapping $\mathcal{M}: [\mathcal{C}] \rightarrow [\mathcal{B}]$ from committees to buckets for a directory. The mapping is set such that each bucket is responsible for (almost) the same number of committees, and the sets of committees that any two buckets (in the same directory) are responsible for, are disjoint. (This is done so that the join requests are load balanced across the directory.) This mapping is useful for new nodes joining the network, to know which directory nodes to contact, to get information about relevant committees.

\noindent\textbf{Active directory.} The \emph{active} directory, also known as \emph{bootstrapping service}, is the most recent $\mathcal{B}_{\mathit{act}}$ consecutive (confirmed) buckets. Due to phase transitioning, the most recent $\mathcal{B}$ consecutive (confirmed) buckets, forming an entire directory, are middle-aged. The rest of the buckets, forming one or more directories, are veteran. This means that there are multiple buckets (across different directories) that are responsible for a given set of committees (but only one of them being in middle-age phase). The number of blocks and number of buckets in an active directory are denoted as $\mathcal{K}_{\mathit{act}}$ and $\mathcal{B}_{\mathit{act}}$ respectively.

New nodes figure out the sequence of buckets in the active directory (using the introductory service's chain, block numbers and committee-directory mapping), and contact the relevant buckets to get the required committees' entry information for joining the network. The delay of $\Delta$ rounds in transitions makes up for any lag in the chain provided by the introductory service. An example of an active directory with its buckets in different phases is illustrated in Figure \ref{fig:algo-wrk}.


\begin{table}
\begin{center}
\begin{tabular}{ |c|c|c| } 
\hline
 Phase & Store entry info & Reply entry info \\ 
\hline
 Infant & \xmark & \xmark\\
\hline 
 Middle-aged & \cmark & \cmark\\
\hline 
 Veteran & \xmark & \cmark\\
\hline
 Dead & \xmark & \xmark\\
 \hline
\end{tabular}
\caption{Phases and functions of a directory node.}
\label{tab:phases}
\end{center}
\end{table}

\begin{figure}[tb]
\centering
\includegraphics[width=1\linewidth]{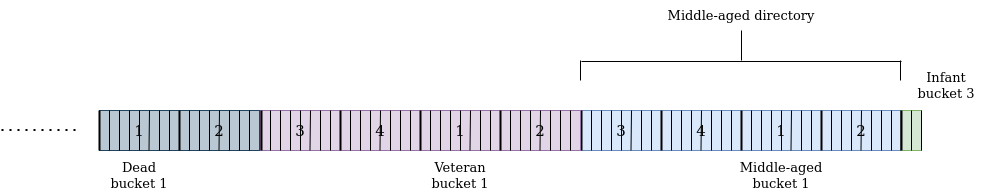}
\caption{An illustration of an active directory of length equal to two directories, wherein each directory has four buckets. The bucket numbers are denoted inside the bucket. The blue buckets, which are the most recent 4 (fully formed) buckets in the blockchain, are middle-aged. They both store and respond with committee information. The purple buckets, which are the next 4 buckets in the active directory, are veteran. They only respond with committee information. The grey buckets and the ones after them, are dead and do not participate in bootstrapping of new nodes. The green blocks being formed towards the end, are part of the infant bucket. They start functioning once they become middle-aged.}
\label{fig:algo-wrk}
\end{figure}

A new node must provide a \textit{proof} for joining the network. The directory nodes only interact with a new node if its proof is valid. Section \ref{subsec:join} provides details on the proof for joining and storing/sending committee entry information to a new node. Algorithm \ref{alg:dir} provides a succinct description of the protocol followed by a directory node, from Infant to Dead phase, including the functions and transitions. The pseudocodes for the subroutines \texttt{VERIFY\_PROOFS}, \texttt{STORE\_INFO} and \texttt{REPLY\_INFO} are given in Appendix \ref{appendix:subroutines}, though they are self-explanatory in the given context.

\begin{algorithm}
\caption{\texttt{DIR} protocol}
\label{alg:dir}
\begin{algorithmic}[1]
\REQUIRE A peer runs this protocol for its node that is embedded in block $b$ in its confirmed chain, after $b$ becomes part of a bucket $\mathit{bkt}$ (end of infant phase). Let $M_1$ be the set of all \texttt{JOINING} messages received in a round. Let $M_2$ be the set of all \texttt{REQ\_INFO} messages received in a round.
\ENSURE Contribute to the bootstrapping service via its directory node.
\WHILE{$\mathit{bkt} \in $ most recent confirmed $\mathcal{B}$ buckets}
\STATE $V_1 \leftarrow$ \texttt{VERIFY\_PROOFS}($M_1$), $V_2 \leftarrow$ \texttt{VERIFY\_PROOFS}($M_2$).
\STATE \texttt{STORE\_INFO}($V_1$).
\STATE \texttt{REPLY\_INFO}($V_2$).
\ENDWHILE
\FOR{$\Delta$ rounds}
\STATE $V_1 \leftarrow$ \texttt{VERIFY\_PROOFS}($M_1$), $V_2 \leftarrow$ \texttt{VERIFY\_PROOFS}($M_2$).
\STATE \texttt{STORE\_INFO}($V_1$).
\STATE \texttt{REPLY\_INFO}($V_2$).
\ENDFOR
\WHILE{$\mathit{bkt} \in $ most recent confirmed $\mathcal{B}_{\mathit{act}}$ buckets}
\STATE $V \leftarrow$ \texttt{VERIFY\_PROOFS}($M_2$).
\STATE \texttt{REPLY\_INFO}($V$).
\ENDWHILE
\FOR{$\Delta$ rounds}
\STATE $V \leftarrow$ \texttt{VERIFY\_PROOFS}($M_2$).
\STATE \texttt{REPLY\_INFO}($V$).
\ENDFOR
\end{algorithmic}
\end{algorithm}

\subsection{Node joins and lifetimes}\label{subsec:join}

Our network utilizes proof-of-work as Sybil defense to limit the number of nodes controlled by any peer in any round, requiring peers to continuously mine for nodes, in addition to blocks.


\noindent\textbf{Proof for joining.} Let $N_c$ be the nonce, $\hat{\mathrm{B}}_l$ be the hash of the latest confirmed block $\mathrm{B}_l$, and $\mathit{net\_addr}$ be the network address of the peer. Then, the peer evaluates, $P_{\mathit{join}} = \mathbf{H}(\hat{\mathrm{B}}_l \mathbin\Vert \mathit{net\_addr} \mathbin\Vert N_c)$,
to join the network through a new node. If $P_{\mathit{join}} < T_{\mathit{join}}$, where $T_{\mathit{join}}$ is the \textit{mining target} for joining the network, then the node is considered to be a ``valid'' node, which means that the peer would be able to communicate with the bootstrapping service to register that node, and join the network. (A directory node rejects the proof if $\mathrm{B}_l$ is not among the most recent $\mu_s$ blocks in its confirmed chain, as every pair of honest peers' confirmed chains differ by at most $\mu_s$ blocks.)

\noindent\textbf{Joining the network.} A node's \textit{entry information} constitutes its network address, the nonce $N_c$ and the block number of the block that was used while mining for that node. Recall that a new peer gets a copy of the blockchain by the introductory service. We now describe the steps taken by a peer $p$ to generate and join a new node $q$ into the network. See Algorithm \ref{alg:join} for the pseudocode of \texttt{JOIN} protocol.
\begin{enumerate}
    \item \textit{Solve puzzle.} The peer expends computational resources to generate a valid node $q$. The leftmost $\log N$ bits of $P_{\mathit{join}}$ represent the ID of the (random) committee, denoted by $c$, to which this new node would belong to. Let $C_{\mathit{rel}}$ be the set of neighbouring committees of $c$, including $c$.
    \item \textit{Store info.} Let $b^c_m$ be the middle-aged bucket responsible for committee $c$. Peer $p$ sends (\texttt{JOINING}, entry information of $q$) to all nodes in $b^c_m$. (The directory nodes in $b^c_m$ store $q$'s entry information.)
    \item \textit{Request info.} Let $B^c$ be set of all buckets responsible for committee $c$. For each $k$ in $C_{\mathit{rel}}$ and each $b$ in $B^c$, peer $p$ sends (\texttt{REQ\_INFO}, $k$, entry information of $q$) to $\lambda_{j} \log N$ nodes that are sampled uniformly and independently from $b$, where $\lambda_j$ is a suitable constant. (For each $k$ in $C_{\mathit{rel}}$ and each $b$ in $B^c$, a directory node in $b$ sends a \texttt{COMM\_INFO} message consisting of its knowledge of entry information of nodes in $k$, if it received a \texttt{REQ\_INFO} message from $q$.)
    \item \textit{Handle $\Delta$-synchrony.} Let $b_1$ and $b_2$ be the first and $(\mathcal{B} + 1)^{\mathrm{th}}$ confirmed buckets (i.e., most recent middle-aged and veteran buckets). If the number of blocks confirmed after bucket $b_1$ is at most $\mu_s$, and if $b_1$ is responsible for committee $c$, then send (\texttt{JOINING}, entry information of $q$) to all nodes in $b_2$.
    \item \textit{Join committee.} Node $q$ takes the union of entry information in the \texttt{COMM\_INFO} messages for each committee (as the adversary can only underrepresent the nodes in a committee). Node $q$ sends (\texttt{JOINING}, entry information of $q$) to the nodes in each committee $c$ in $C_{\mathit{rel}}$.
\end{enumerate}

The key observation is that a constant fraction of directory nodes in a bucket are honest and available at any time, due to the logarithmic redundancy in buckets and blockchain fairness. Thus, it is sufficient for the new node needs to hear back information from $O(\log N)$ directory nodes in each bucket (Step 3), reducing the communication complexity of join down to $O(\log^3 N)$, i.e., the new node contacts at most $\log N$ buckets, wherein $O(\log N)$ directory nodes in each bucket reply with committee information of $O(\log N)$ nodes.

\begin{algorithm}
\caption{\texttt{JOIN} protocol}
\label{alg:join}
\begin{algorithmic}[1]
\REQUIRE Chain provided by introductory service $\mathcal{I}$. Let $B^c$ be set of all buckets responsible for a committee $c$. Let $b^c_m$ be the middle-aged bucket responsible for a committee $c$. Let $b_1$ and $b_2$ be the first and $(\mathcal{B} + 1)^{\mathrm{th}}$ confirmed buckets. Let $\mathbf{H}(.)$ denote the hash function.
\ENSURE To generate and register a node in the active directory, and join the network.
\STATE $P_{\mathit{join}} \leftarrow \infty$.
\STATE $N_c \leftarrow 0$.
\WHILE{$P_{\mathit{join}} \geq T_{\mathit{join}}$}
\STATE $\mathrm{B}_{l} \leftarrow$ Most recent confirmed block of the chain given by $\mathcal{I}$.
\STATE $\hat{\mathrm{B}}_l \leftarrow \mathbf{H}(\mathrm{B}_{l})$.
\STATE $P_{\mathit{join}} \leftarrow
\mathbf{H}(\hat{\mathrm{B}}_l \mathbin\Vert \mathit{net\_addr} \mathbin\Vert N_c)$.
\STATE $N_c \leftarrow N_c + 1$.
\ENDWHILE
\STATE $c \leftarrow $ Leftmost $\log N$ bits of $P_{\mathit{join}}$.
\STATE $C_{\mathit{rel}} \leftarrow $ Set of neighbouring committees of committee $c$, including $c$.
\STATE SEND (\texttt{JOINING}, entry information of $q$) to all nodes in $b^c_m$.
\IF{number of blocks confirmed after bucket $b_1 \leq \mu_s$}
\IF{$b_1$ is responsible for a committee in $C_{\mathit{rel}}$}
    \STATE SEND (\texttt{JOINING}, entry information of $q$) to all nodes in $b_2$.
\ENDIF
\ENDIF
\FOR{each committee $k$ in $C_{\mathit{rel}}$}
\FOR{each bucket $b$ in $B^c$}
    \STATE $R \leftarrow $  $\lambda_{j} \log N$ nodes sampled uniformly and independently from $b$.
    \STATE SEND (\texttt{REQ\_INFO}, $k$, entry information of $q$) to all nodes in $R$.
\ENDFOR
\ENDFOR
\FOR{each committee $c$ in $C_{\mathit{rel}}$}
    \STATE RECEIVE (\texttt{COMM\_INFO}, $c$) from each $B^c$.
    \STATE $U \leftarrow $ Union of entry information in the received \texttt{COMM\_INFO} messages.
    \STATE SEND (\texttt{JOINING}, entry information of $q$) to all nodes in $U$.
\ENDFOR
\end{algorithmic}
\end{algorithm}

\pg{Perturbing the network.} To avoid Byzantine peers repeatedly rejoining to populate a specific set of committees, while exploiting the churn of honest peers, a standard solution is to employ a limited lifetime for the nodes (and force them to rejoin) \cite{awerbuch2004group}. As this would keep the system in a hyperactive state, the node lifetime should be as large as possible. However, if the node lifetime is too large compared to the half-life, then the adversary may be able to isolate peers. Thus, we set the node lifetime to be $\Theta(\alpha / \beta)$ blocks, which amounts to $\Theta(\alpha)$ rounds (constant number of half-lives) due to blockchain liveness. This helps us show that at most a constant fraction of any honest peer's neighbours are Byzantine at any time.

\pg{Entry time.} Peers attach the hash of the most recent confirmed block $\mathrm{B}_{l}$ (given by the introductory service) as part of the input to the hash function when they are mining a new node. The proof for joining conveniently records the new node's ``entry time'' (in terms of block number). This helps other peers to keep track of a node's ``age'' as they would have received its entry information in their first interaction with the node.

\noindent\textbf{Lifetime of non-directory node.} The lifetime of a non-directory node is $T_{\mathit{l}} =  \lambda_{l}\alpha/\beta$ blocks, where $\lambda_{l}$ is a suitable constant. The node $u$ that had joined at block $b_{l}$ would be considered invalid after block $b_{l} + T_{\mathit{l}}$ is confirmed, where $b_l$ is the block number of block $\mathrm{B}_{l}$, at which point the peer stops controlling that node $u$, and all the other peers that had node $u$ as its neighbour remove $u$ from their nodes' neighbour list.

\noindent\textbf{Lifetime of directory node.} If a node is promoted to a directory node, then it obtains another life (separate from the non-directory life). When a node becomes a directory node, it continues to perform the functions of a non-directory node as long as the non-directory role is considered to be valid. The directory node is considered to be alive for $T_{\mathit{dl}}$ blocks since the block in which it is embedded in. We set $T_{\mathit{dl}}$ to be $\lambda_{\mathit{dl}}\alpha / \beta$ blocks, where $\lambda_{l} < \lambda_{\mathit{dl}}$ is some constant.

A directory node's lifetime is determined by how long it needs to stay in each of its phases. Firstly, if a directory node is the first node to be part of a bucket, then it needs to be alive for $\lambda_d\log^2 N$ blocks (size of a bucket). Secondly, a directory node stays in the middle-aged phase until the next bucket that is responsible for the same set of committees is formed, and this takes $\mathcal{K}$ blocks. Finally, a directory node needs to be in veteran phase only until some (non-directory) node in one of the committees that it is responsible for, is still valid. And this takes at most $T_{\mathit{l}}$ blocks. The lifetime of directory node $T_{\mathit{dl}}$ is slightly more than $\mathcal{K}_{\mathit{act}}$ because there is a delay of $\Delta$ rounds in the transitions. Thus, $T_{\mathit{dl}} > (1 + \mathcal{B})\lambda_d \log^2 N + T_l$.

\pg{Node generation rate.} Let $p_{n}$ be the \textit{difficulty threshold} for node mining. It is the probability that a \emph{single} hash query is successful in generating a valid node. $T_{\mathit{join}}$ is set such that, in each epoch, the expected number of valid nodes that can be generated is equal to $p_{n}q\alpha N = \lambda_{n}N\log N$, where $\lambda_n$ is a suitable constant, so that each committee has $\Theta(\log N)$ nodes at any time. This is because in every epoch, about $\Theta(N \log N)$ nodes join, while a similar number of them leave due to limited lifetime (set to be a constant number of epochs). 

\subsection{Setting parameters} \label{subsec:setparam}

\noindent\textbf{Constraints on $\alpha$ and $\mathcal{B}$.} The first constraint between $\alpha$ and $\mathcal{B}$ arises due to the lifetime of a non-directory node,
\begin{equation}\label{eq:buck-upper-bound}
    \mathcal{B} \leq \Theta\left( \frac{\alpha}{\beta\log^2 N} \right).
\end{equation}

Furthermore, a natural constraint on $\alpha$ and $\mathcal{B}$ arises due to the entry of new nodes. Due to node generate rate, the system must be able to allow $\lambda_{n}N \log N$ nodes to enter the network in any $\alpha$ (consecutive) rounds. As a new node sends the (same) join request to all directories within the active directory (as it contacts both middle-aged and veteran buckets), it suffices to focus on the number of join requests handled by one directory. (Here, ``join request'' can be \texttt{JOINING} or \texttt{REQ\_INFO} message of a valid new node.)

Due to the Sybil defense mechanism, we can ensure that the join requests are load-balanced across all the rounds in an epoch. Let $\lambda_{\mathit{jr}}$ be the highest number of join requests that can be handled by a bucket per round. For any directory, we calculate the total number of join requests that need to be handled in any round to be
\begin{equation}\label{eq:buck-lower-bound}
    \mathcal{B}\lambda_{\mathit{jr}} \geq \Theta\left(\frac{\beta N \log^2 N}{\alpha}\right),
\end{equation}
where LHS of the inequality is the total number of join requests that can be handled in a round, and RHS represents the minimum number of join requests that need to be handled in a round. We now provide an explanation for the extra $\beta$ and $\log N$ factors, i.e., the half-life and number of buckets in a directory are appropriately increased to handle some extra join messages such that for any peer, the total communication complexity per round is $O(\log^3 N)$. (Refer to Lemma \ref{lemma:DIRbandwidthcost} for the full details.)
\begin{itemize}
    \item Recall that a directory node accepts the proof if the block used for the puzzle is among the $\mu_s$ most recent confirmed blocks. Thus, the adversary can launch a pre-computation attack of join messages generated over a period of constant number of block intervals, which amounts to about $\Theta(\beta)$ rounds (due to blockchain liveness).
    \item The extra $\log N$ factor arises due to new nodes having to contact $O(\log N)$ buckets while joining.
\end{itemize}

\noindent\textbf{Setting $\alpha$ and $\mathcal{B}$.} Ideally, the overlay needs to be robust for the smallest possible half-life, so that the network can withstand the limits of churn. Due to the aforementioned constraints, and to ensure a bandwidth cost of $O(\log^3 N)$ messages per round for a peer for overlay maintenance, we get that $\alpha = \Theta(\beta \sqrt{N} \log N)$ and $\mathcal{B} = \Theta(\sqrt{N} / \log N)$. It may seem that the churn rate is rather low, but in Section \ref{sec:lower-bound}, we show that this value of $\alpha$ is close to optimal. Specifically, we show that any dynamic system where peers have a bandwidth-constraint of $\polylog(N)$ messages per round, that uses a blockchain as an entry point in bootstrapping, must have $\alpha = \widetilde{\Omega}({\sqrt{\beta N}})$.

\subsection*{Bootstrapping network} \label{subsec:bootnetwork}
In this work, we only focus on maintaining the overlay, and make an assumption that at time zero, the network has formed a hypercube structure of committees, with $\Theta(\log N)$ honest peers in each committee, and $\Theta(\log^2 N)$ honest nodes in each bucket of the first (active) directory, and each peer controlling at most $O(\log N)$ nodes. If $N_0$ is the number of peers in the beginning of the execution, then the number of committees $\mathcal{C} = N_0$. In each committee, the honest nodes are connected to each other, and each honest node is connected to $\Omega(\log N)$ honest nodes in each neighbouring committee. Each directory node knows entry information of the required committees. Moreover, the node difficulty threshold is appropriately set, $p_n = (\lambda_n \log N) / (q \alpha)$. At this point, the network allows dynamic participation and satisfies the required properties.

\section{Dynamic Network Size}\label{sec:dynamic}

In this section, we augment the protocols described in Section~\ref{sec:stablenetsize} to handle changing numbers of peers. The main problem caused by polynomial variation in network size is that the peer redundancy factor in committees gets affected, i.e., the system needs to adapt to maintain logarithmic redundancy. One way to deal with the problem, is to keep the number of committees fixed and change the node difficulty threshold, $p_n$. But if the network size keeps decreasing, then each peer would need to simulate too many nodes at any time in order for the system to maintain $\Theta(\log N)$ nodes in every committee. And if the network size keeps increasing, then some peers may not be able to participate in the network all the time because they may take too long to generate nodes. Therefore, we fix $p_n = (\lambda_n \log N) / (q \alpha)$, and provide protocols to switch to a new hypercube of a different dimension.

Our key insight is that the switch to a new topology can be carried out in a straightforward manner by exploiting the global coordination provided by the blockchain. We begin by providing a high-level intuition of our approach. The first step is to efficiently estimate the network size in every constant number of epochs (cf. Section \ref{subsec:netsizeest}). First, each node keeps track of new nodes that joined its committee in a span of fixed number of blocks. Then, all the nodes of a (random) committee broadcast the entry information of all new nodes that joined that committee. Finally, using a balls-and-bins argument, all the peers can get a good estimate of the network size, and they simply use the blockchain to agree on it. This estimate is used for the next two components of handling varying network size: (1) resetting overlay parameters and (2) changing dimension of the hypercube.

Each peer computes the overlay parameters, namely, trigger for dimension change and number of committees (in the next hypercube), using the network size estimate. Using the blockchain fairness property, all the peers arrive at a consensus on the parameters by taking the majority over $\Theta(\log N)$ consecutive blocks (cf. Section \ref{subsec:resetparam}). During dimension change, the directory goes into a ``split state'' for about an epoch, serving committees in the current hypercube and also constructing committees in the next hypercube (i.e., new nodes also get placed in the next hypercube). Until each committee in the next hypercube has sufficient number of new (honest) nodes, the overlay operates in the old hypercube. The directory then stops serving committees in the old hypercube, and the network adopts the new hypercube for broadcasting blocks (cf. Section \ref{subsec:dimchange}).

\subsection*{Time in terms of b-epochs}
Our protocols, in contrast to the ones described in the previous section, require considerable coordination among the peers at regular intervals. To describe the protocols, we divide a period of $\Theta(\alpha)$ rounds into two ``phases''. Phase 1 is called the \textit{estimation} phase, where the existing peers calculate an estimate of the total number of nodes that joined in that phase to get a good estimate of the network size. There are $\alpha_1$ rounds in phase 1. Phase 2 is called the \textit{agreement} phase, where the peers reach an agreement (via blockchain) on the new difficulty threshold and the decision to change dimension. There are $\alpha_2$ rounds in phase 2. These two phases make up the epoch, $\alpha = \alpha_1 + \alpha_2$. The length of phase 2 is actually much smaller than the length of phase 1 (by design, see Section \ref{subsec:resetparam}), i.e., $\alpha_2 \ll \alpha_1$; in particular, $\alpha_1 = \Theta(\alpha)$.

Analogous to an epoch, we say that $\alpha / (\mu_b \beta)$ consecutive blocks is called a \textit{b-epoch}. The number of rounds elapsed during that period of block intervals is $\left[\frac{\alpha}{\mu^2_b \beta}, \alpha\right]$ due to blockchain liveness. Similarly, the phase 1 of a b-epoch consists of $\alpha_1 / (\mu_b \beta)$ consecutive blocks. The peers rely on the most recent confirmed block number to establish: (1) the start of a b-epoch, and (2) the end of the first phase of a b-epoch. The idea is to fix multiples of an appropriate block number to mark the start and the end of phase 1 of a b-epoch. Thus, using blockchain safety, the honest peers can run the protocols at (approximately) the same time, albeit at most $\Delta$ rounds delay.

\subsection{Network size estimation} \label{subsec:netsizeest}
The peers require a good estimate of the network size for changing the dimension of the hypercube. This is done in phase 1 of every b-epoch. Algorithm \ref{alg:sizeest} provides the pseudocode of \texttt{NET\_SIZE\_EST} protocol.
\begin{enumerate}
    \item Each node keeps track of the set of (new) nodes that join its committee from start of a b-epoch $e$ to end of phase 1 of b-epoch $e$.
    \item Let $b^k_e$ be the block that marks the end of phase 1 of b-epoch $e$. Let the leftmost $\lceil\log \mathcal{C}_e \rceil$ bits of $\mathbf{H}(b^k_e)$ determine a random committee $s$ (due to random oracle assumption), where $\mathcal{C}_e$ is the number of committees in b-epoch $e$.
    \item All the nodes belonging to committee $s$ broadcast the entry information of nodes that joined committee $s$ in phase 1.
    \item The peers in the other committees take the union of the responses to obtain the set of nodes that joined committee $s$ in phase 1. Let $H_e$ be the total number of those nodes.
    \item $G'_e = H_e\mathcal{C}_e$ is the estimate of total number of new nodes that joined in phase 1. Using this estimate, the nodes calculate the network size estimate $M'_e = \mu_b G'_e / (p_n \alpha_1)$.
\end{enumerate}

\begin{algorithm}
\caption{\texttt{NET\_SIZE\_EST} protocol}
\label{alg:sizeest}
\begin{algorithmic}
\REQUIRE Start this protocol at the beginning of a b-epoch $e$. Let $l$ and $k$ denote the block numbers of the most recent confirmed block and block that marks the end of phase 1 of b-epoch $e$ respectively. Let $b^t$ denote the block in the confirmed chain with block number $t$. Let the node $u$ running this protocol belong to committee $c$. Let $M$ be the set of all \texttt{JOINING} messages received in a round. 
\ENSURE Output $M'_e$ such that $M'_e \in [(1-\delta_{\mathit{err}})(1-\rho)M^L_e / \mu_b, (1+\delta_{\mathit{err}})\mu_b M^H_e]$ where $M^L_e$ and $M^H_e$ are minimum and maximum network sizes in phase 1 of b-epoch $e$, and $\delta_{\mathit{err}} < 1$ is a small positive constant.
\STATE $C_u \leftarrow \{ \}$.
\WHILE{$l < k$}
\STATE $V \leftarrow$ \texttt{VERIFY\_PROOFS}($M$).
\STATE $C_u \leftarrow C_u \bigcup V$. \COMMENT{Store entry information of nodes that join committee $c$.}
\ENDWHILE
\STATE $s \leftarrow$ Leftmost $\lceil \log \mathcal{C}_e \rceil$ bits of $\mathbf{H}(b^{k})$. \COMMENT{Pick random committee.}
\STATE Wait for $\Delta$ rounds. \COMMENT{All honest peers reach end of phase 1.}
\IF{$s$ is $c$}
\STATE $H_e \leftarrow |C_u|$. \COMMENT{Number of phase 1 node joins in committee $c$.}
\STATE BROADCAST (\texttt{EST\_INFO}, $C_u$).
\ELSE
\STATE Wait for $\Delta$ rounds. \COMMENT{Wait for the broadcast to be completed.}
\STATE RECEIVE (\texttt{EST\_INFO}, $C_v$) where $C_v$ is a set of \texttt{JOINING} messages broadcasted by some node $v$ in committee $s$.
\STATE $C^s \leftarrow \bigcup C_v$. \COMMENT{All phase 1 node joins in committee $s \neq c$.}
\STATE $H_e \leftarrow |C^s|$.
\ENDIF
\STATE $G'_e \leftarrow H_e \mathcal{C}_e$. \COMMENT{Estimate of total number of phase 1 node joins.}
\STATE $M'_e \leftarrow \mu_b G'_e / (p_n \alpha_1)$. \COMMENT{Estimate of network size.} 
\STATE Output $M'_e$.
\end{algorithmic}
\end{algorithm}


\subsection{Resetting parameters} \label{subsec:resetparam}
Once a peer gets a good estimate of the network size after phase 1, it can (locally) determine the overlay parameters for the next b-epoch. $\mathit{ch\_dim}$ is a parameter that controls whether a dimension change is to be triggered; if it is, then $\mathit{ch\_dim}$ specifies dimension increase or decrease, and otherwise, $\mathit{ch\_dim}$ specifies no change. (Section \ref{subsec:dimchange} provides details on how $\mathit{ch\_dim}$ is determined.) $\mathcal{C}_{e+1}$ is changed to $\lambda_s\mathcal{C}_{}$ if the dimension is to be increased in the next b-epoch, or to $\mathcal{C}_{e} / \lambda_s$ if the dimension is to be decreased in the next b-epoch, where $\lambda_{s}$ is a large constant; otherwise, remains the same as $\mathcal{C}_{e}$. This helps the new nodes joining in the b-epoch $e$, to know the committee-directory mapping $\mathcal{M}_e$ for \texttt{JOIN} protocol.


When a peer mines a block after phase 1 of b-epoch $e$, in the next $\lambda_{\mathit{lb}}\log N = \alpha_2/\mu_b \beta$ blocks, it adds the overlay parameters onto the block, where $\lambda_{\mathit{lb}}$ is a suitable constant. Due to blockchain fairness, we can ensure that the majority of those blocks belong to honest peers, which helps in reaching consensus on those values by the end of the epoch. Since $\alpha \gg \beta \log N$ (Section \ref{sec:model}), as previously mentioned, $\alpha_1 = \Theta(\alpha)$. (Note that if $\beta \log N$ is greater than or close to the half-life, then the network size can change significantly during the time required for reaching an agreement over its estimate.) 

\subsection{Changing dimensions of hypercube} \label{subsec:dimchange}
The challenge is to carry out a smooth transition to a new hypercube having a different committee-directory mapping, with sufficient number of nodes in each committee, whose information is held by the associated directory nodes. The key idea is to trigger a dimension change at the end of a b-epoch, wait for one b-epoch wherein new nodes get assigned to (random) committees of the new hypercube (while the system functions using the existing hypercubic overlay), and then switch to the new hypercube in the next b-epoch. The advantage of waiting for one b-epoch for the network to get ``reconstructed'', is that the directory nodes can simply perform their functions for both hypercubic overlays until the new hypercube is adopted, avoiding complicated entry information transfer between buckets.

There is a delay of one b-epoch in switching to the next hypercube of a different dimension, i.e., if the peers decide to change the dimension during phase 2 of a b-epoch $e$, and they wait for one b-epoch, and then adopt the next hypercube in b-epoch $e+2$. The b-epoch before which the next hypercube is actually adopted, is called a \textit{transformation} b-epoch. The decision to trigger a dimension change is taken in b-epoch $e$ assuming a worst-case change of factor of 2 to the network size over the next two epochs; the decision depends on whether there is a possibility that the network size in b-epoch $e+2$ is not in $[\mathcal{C}_{e} / \lambda_s, \lambda_s\mathcal{C}_e]$ using the estimate in b-epoch $e$. We now explain the functions of directory, new nodes and existing nodes during a dimension change.



\noindent\textbf{Directory.} For each (possible) dimension, the committee-directory mapping is predetermined such that no two buckets in the directory are responsible for the same committee, and that each bucket is responsible for (almost) the same number of committees (cf. Section \ref{SUBSECdir}). We describe the behaviour of different buckets during a dimension change.

\begin{itemize}
    \item All the buckets formed in a transformation b-epoch $e$, including the middle-aged buckets at the start of the b-epoch $e$, are said to be in a \emph{split} state because they serve committees in two committee-directory mappings $\mathcal{M}_e$ and $\mathcal{M}_{e+1}$ for the current hypercube and the next hypercube, which have $\mathcal{C}_e$ and $\mathcal{C}_{e+1}$ committees respectively. In other words, the directory nodes in those buckets, run \texttt{DIR} protocol for both hypercubes simultaneously.
    \item The buckets formed after the transformation b-epoch $e$, only serve the next hypercube. 
    \item The buckets formed before the transformation b-epoch $e$ that are \emph{not} in split state (i.e., veteran buckets at the start of transformation b-epoch $e$), stop functioning from b-epoch $e+2$. 
\end{itemize}

\noindent\textbf{New nodes.} The new nodes that join the network during the transformation b-epoch $e$, get a node in both the hypercubes by considering the leftmost $\log \mathcal{C}_e$ and $\log \mathcal{C}_{e+1}$ bits of $P_{\mathit{join}}$. But they only (temporarily) operate in the old hypercube (with old mapping $\mathcal{M}_e$) in the transformation b-epoch $e$. Then, from the next b-epoch $e+1$, that node (in the old hypercube) would be considered invalid, and they start operating using the node in the next hypercube.

The new nodes contact the split state buckets and the buckets formed before them, for the entry information of nodes that joined before the transformation b-epoch $e$ using the old committee-directory mapping $\mathcal{M}_e$. And they contact the split state buckets and the buckets that are formed after them, for registering, and getting entry information about nodes that joined in or after the transformation b-epoch $e$, using the new committee-directory mapping $\mathcal{M}_{e+1}$.

\noindent\textbf{Old nodes.} All non-directory nodes that joined the network before the transformation epoch $e$ are considered invalid from the start of b-epoch $e+2$. (We resorted to simplicity by not dealing with moving the old nodes to the next hypercube, as this is sufficient for us to ensure good network connectivity. In fact, this also helps us show that switching to a new topology can be securely done during catastrophic failures. See Section \ref{subsec:rec-ana}.)

\pg{Handle $\Delta$-synchrony.} The above dimension change algorithm works well if all the peers are fully synchronized. Since some peers may reach the end of a b-epoch sooner than the others, we need to ensure that our overlay structure is maintained during a transition to the new topology. Thus, $\mu_s$ blocks are added to phase 2 (that includes the $\lambda_{\mathit{lb}}\log N$ blocks for agreement on overlay parameters). A peer functions both in old and new hypercube (through its nodes) during those $\mu_b$ blocks. Finally, after the end of b-epoch, the peer completely shifts to the new hypercube by ceasing to use its nodes in the old hypercube.



\section{Recovery}\label{sec:recovery}

Since blockchain protocols are executed indefinitely, there could be exceptional scenarios where the network may fail to provide the proven robustness guarantees simply due to very bad luck. These scenarios must be especially addressed for networks that provide probabilistic robustness guarantees. Moreover, open networks can be vulnerable to denial-of-service (DoS) attacks where a targeted set of (honest) peers instantly stop functioning, for e.g., Gnutella file sharing system, while resilient to random failures, could be split into large number of disconnected components after a targeted attack \cite{saroiu2001measurement}.

\pg{Catastrophic failure.} We say that a \emph{bucket fails} if more than 1/2 fraction of honest peers in it get \emph{corrupted}. And we say that a \emph{committee fails} if it has less than, say $20 \log N$ honest peers. Moreover, if a bucket has failed, then the committees that the bucket is responsible for, are also said to have failed. Here, the corruption could be fail-stop where the honest peers stop functioning (i.e., leave), or Byzantine where the adversary starts to control the honest peers. Consider the notion of ``catastrophic failure''\footnote{Refer to Defintion \ref{def:cat-fail} (and the previous definitions) in Section \ref{subsec:rec-ana}.} where a large set of committees and/or buckets may have failed, and in total, at most a constant fraction of honest peers get corrupted, but there still exists a subgraph of honest peers of a large size and low diameter for which bootstrapping can be securely done. Such failures model exceptional scenarios that occur in practice wherein the network is split into multiple components, resulting in considerable wastage of honest peers' hash power over time.

\pg{Recovery.} Our goal is to provably show that the network \textit{recovers} from such catastrophic failures in a short period of time. Here, we naturally define recovery as the event at which the overlay retains its native properties (that the overlay originally had before the catastrophic failure). There are two basic requirements for the overlay to recover from a catastrophic failure. First, a large fraction of honest peers can run the blockchain protocol\footnote{To allow for catastrophic failures that can instantly increase the fraction of peers that are Byzantine, and isolate some existing honest peers from the large component, we assume that the blockchain provides the required guarantees if there are least $\mu_n (1-\rho) n$ honest peers and at most $\rho n$ Byzantine peers. Refer to Section \ref{sec:model}.} (ensuring the blockchain provides the same guarantees), albeit some honest peers may end up unable to participate fully  (i.e., effective honest hash power is reduced) for a brief period of time. Secondly, the introduction service is not affected by the failure, i.e., it continues to return the most updated blockchain (among all the chains).

As long as those requirements are satisfied, the blockchain continues to make progress. Our insight is that blockchain naturally provides recovery within a constant number of half-lives, due to limited lifetime of nodes and new blocks (and nodes) being generated, a new directory will eventually be formed (replacing the failed buckets) facilitating new (honest) node joins to (failed) committees. Although our design intuitively encourages recovery, proving the same is non-trivial if the network size is allowed to significantly vary, where the peers need to actively coordinate with each other (cf. Section \ref{sec:dynamic}). For instance, estimating network size and changing dimensions after a catastrophic failure need to be carefully done and analyzed.

\pg{Extent of catastrophic failures.} We can rely on the fault-tolerant properties of the underlying topology to defend against massive catastrophic failures. The $n$-node hypercube is fault-tolerant against a linear number of random (independent) node failures \cite{hastad1989fast, chlebus1996reliable}, i.e., with high probability, it has a broadcast time of $O(\log n)$. However, if a linear number of adversarial node failures are allowed, then it is known that the hypercube can be split into components having size no more than $n / \sqrt{\log n}$ \cite{hastad1989fast}. Datar \cite{datar2002butterflies} showed that in a $n$-node multihypercube\footnote{Recall that a hypercube can be visualized by treating a row in a butterfly topology as a node, a multihypercube can similarly be visualized via a multibutterfly topology. Refer to the work of Datar \cite{datar2002butterflies} for more details.} (where the node degree is only a constant factor more than the node degree in hypercube), at least $n - O(f)$ nodes can reach at least $n - O(f)$ nodes via $\log n$-sized paths, no matter which $f$ nodes are removed.

By utilizing a multi-hypercube, recovery can be ensured even if an adversary can cause failures in up to a constant fraction of buckets including the committees that they are responsible, resulting in failures amounting up to a constant fraction of honest peers. In other words, for recovery, we can allow failures of any constant fraction of committees, provided that the buckets responsible for the remaining committees, do not experience failures. The security arguments in prior work on join-leave attacks \cite{awerbuch2004group, fiat2005making, awerbuch2009towards, guerraoui2013highly, jaiyeola2018tiny}, particularly for join protocol, heavily rely on honest majority in committees. In Appendix \ref{sec:more-rec}, we highlight the inherent difficulty of such fully localized algorithms to recover from committee failures, and provide attacks by a small number of committees having malicious majority, to keep maintaining the majority over time (which can also be used to increase the number of failed committees).

\section{Lower Bound for Half-life}\label{sec:lower-bound}
In the context of dynamic overlay networks, there is always a problem of bootstrapping a new peer into the network. How does a new peer contact an existing peer within the network? We consider the implementation of bootstrapping service $\mathcal{S}$ that has two properties:
\begin{enumerate}
\item \textit{Secure.} Responds with at least one honest peer that is within the network.
\item \textit{Bandwidth-constrained.} Each peer in $\mathcal{S}$ expends only $O(\polylog(N))$ bits in any round.
\end{enumerate}

Let the total number of peers within the network is $N$ in any round. As the best case scenario, all the peers are considered to be honest. The amount of information required to uniquely represent a peer's network address is $\Omega(\log N)$ bits, i.e., there exists no encoding scheme to compress network addresses.

\noindent\textbf{Bulletin board.} All the peers are given write access to a \textit{bulletin board} that provides read access to everyone (including the public) \cite{mitzenmacher2000useful}. Each peer can write $O(\log N)$ bits to the board per write operation. However, there is a constraint on the number of write operations per unit time. An arbitrary peer is selected to write to the board at every $\beta$ rounds. More importantly, this bulletin board is the only interface through which the peers can disseminate information to the public.

\noindent\textbf{Bootstrapping service.} Since the system is dynamic, the peers must utilize the bulletin board to regularly update the peers that are responsible for new peers to join the network. The peers follow some algorithm to construct a bootstrapping service $\mathcal{S}$ using the bulletin board. For e.g., peers may write their network addresses onto the bulletin board.

\noindent\textbf{Joining the network.} There exists some join algorithm for new peers to contact the service and join the network. The minimum requirement for a new peer to join the network is to obtain a response from the bootstrapping service.

\begin{theorem}\label{theorem:lowerbound}
Any dynamic system that implements a bootstrapping service $\mathcal{S}$ using a public bulletin board, can support a half-life of only $\widetilde{\Omega}({\beta\sqrt{N}})$.
\end{theorem}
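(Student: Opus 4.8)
The plan is to bound the information throughput of the bulletin board and compare it with the information that an adequate bootstrapping service must "refresh" per unit time under churn. The key quantities are: (i) the bulletin board emits at most $O(\log N)$ bits every $\beta$ rounds, so over a half-life $\alpha$ it emits $O\!\big((\alpha/\beta)\log N\big)$ bits; (ii) since every peer needs $\Omega(\log N)$ bits just to be named, the board can "introduce" at most $O(\alpha/\beta)$ distinct peers over any window of $\alpha$ rounds; (iii) a secure, bandwidth-constrained service $\mathcal{S}$ can only forward to a new peer information that ultimately derives from the board — so the set of peers that $\mathcal{S}$ can ever point a newcomer to, at any fixed round, is essentially limited to those whose identities were written in the recent past. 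The adversarial strategy is then an oblivious churn schedule (as allowed by the churn model in Section~\ref{sec:model}) that, over one half-life, removes every honest peer whose address has appeared on the board in the last, say, $c\alpha$ rounds, while the remaining honest peers have never been written. If $\mathcal{S}$ must respond with an honest in-network peer, all such peers must have had their addresses posted recently; but the board can name only $O(\alpha/\beta)$ of them, so we need $N = O(\alpha/\beta)$ honest peers reachable this way, forcing $\alpha = \Omega(\beta N)$ — which is already stronger than claimed and signals that the real bottleneck is subtler.

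The reason the bound is $\widetilde\Omega(\beta\sqrt N)$ rather than $\widetilde\Omega(\beta N)$ is that $\mathcal{S}$ need not name the responding peer directly: the board can name an \emph{intermediary} (a "directory" peer), and the newcomer then contacts that intermediary, who — being in-network and bandwidth-constrained to $\polylog(N)$ bits per round — relays addresses of further peers. So the correct accounting is two-layered. Over a half-life the board can refresh $R_1 = O\!\big((\alpha/\beta)\cdot\frac{1}{\log N}\big)$ directory peers (dividing the $O((\alpha/\beta)\log N)$ bits by the $\Theta(\log N)$ bits per address — here one must be careful and instead say it can name $O(\alpha/\beta)$ addresses). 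Each surviving directory peer, over its lifetime $\le \alpha$ (it too is subject to churn and must itself have been posted recently), can relay at most $O\!\big((\alpha)\cdot\polylog N\big)$ bits, i.e.\ point to $O(\alpha\,\polylog(N)/\log N) = O(\alpha\cdot\polylog N)$ peers — wait, this must be measured per round of a newcomer's join, which is $O(\polylog N)$ bits total, so a single newcomer learns only $\polylog(N)$ addresses from $\polylog(N)$ directory peers. The product structure is: (number of directory peers the board can keep "fresh") $\times$ (number of in-network peers each such directory peer can have "freshly learned" and thus vouch for). Setting the first factor to $f_1 = O(\alpha/\beta)$ and noting that each directory peer is itself refreshed through the same board and can in turn vouch only for peers whose entry information it received — and it receives at most $\polylog(N)$ bits per round over $\alpha$ rounds, i.e.\ $O(\alpha\,\polylog N /\log N)$ addresses but constrained by \emph{its own} being posted, the clean statement becomes $f_1 \cdot f_2 \ge N$ with $f_1 = O(\alpha/\beta)$ and $f_2$ the per-directory-peer capacity; balancing gives $\alpha^2/\beta \gtrsim N$ up to polylog factors, hence $\alpha = \widetilde\Omega(\sqrt{\beta N})$.

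Concretely I would structure the proof as follows. First, fix an arbitrary algorithm and an arbitrary round $r_0$; define the "fresh set" $F$ of peers whose addresses were written on the board during $[r_0 - c\alpha, r_0]$, so $|F| = O(c\alpha/\beta)$. Second, argue that any peer $\mathcal{S}$ can route a newcomer to at round $r_0$ must be "recoverable" from $F$: either it is in $F$, or it is vouched for by some chain of in-network relays each of which is recoverable from $F$ — and since relays are bandwidth-constrained, unroll this to one level of indirection plus a counting bound showing the reachable set has size $O\!\big(|F|\cdot \polylog(N)\big) + (\text{second-order terms})$; track carefully how many distinct addresses can propagate through a single $\polylog(N)$-bandwidth relay over a half-life, which is where the second $\sqrt{\phantom N}$ factor is spent. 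Third, build the oblivious adversarial schedule: the churn model lets the adversary pre-commit honest leaves; have it leave, within $[r_0, r_0+\alpha]$, every honest peer in the reachable set of round $r_0$, which by the half-life definition is feasible provided that reachable set has size $\le N/2$. Fourth, after these leaves the only honest in-network peers are outside the old reachable set, so $\mathcal{S}$ at round $r_0 + \alpha$ must route to a peer that was \emph{not} recoverable from the board's recent history — contradiction unless the reachable set already had size $\ge N/2$, i.e.\ $O(\alpha/\beta)\cdot\polylog(N)\cdot(\text{relay factor}) \ge N/2$, which rearranges to $\alpha = \widetilde\Omega(\sqrt{\beta N})$.

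The main obstacle I anticipate is making the "one level of indirection" accounting rigorous: an adversary-controlled relay, or a relay that honestly relays stale information, complicates the claim that the reachable set is small, and one must rule out the board encoding information implicitly (e.g.\ via \emph{which} peer is chosen to write, or via timing) to smuggle out more than $O(\log N)$ effective bits per $\beta$ rounds. I would handle the implicit-channel issue by charging any such side information against the same $O(\log N)$-bit budget (the identity of the writer is itself $O(\log N)$ bits, and timing over a $\beta$-round granularity is $O(\log(\alpha/\beta))=O(\log N)$ bits), and handle malicious relays by observing they can only \emph{under}-report honest peers (they cannot invent honest in-network addresses the newcomer will accept, since acceptance requires the peer to actually be in-network and honest), so they do not enlarge the honest reachable set. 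With those two points pinned down, the counting and the oblivious-churn construction are routine.
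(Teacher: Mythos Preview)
Your product structure is right --- the $\sqrt{\beta N}$ arises from two factors of $\alpha$: one from how many addresses the board can post in a window, and one from the per-peer service capacity over a half-life. But you are working much harder than necessary, and the reachability-plus-adversarial-churn framing introduces gaps you yourself flag (``wait, this must be measured\ldots'').

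The paper's argument is a pure \emph{throughput} count with no adversary at all (it even takes the best case: all peers honest). First, any address written more than $c\alpha\log N$ rounds ago is dead whp, since each honest peer fails independently with probability $1/2$ per epoch. Hence only the most recent $W = O(\alpha\log^2 N/\beta)$ bits on the board are useful, encoding at most $K = O(\alpha\log N/\beta)$ distinct addresses. Second, each peer in the service is bandwidth-constrained, so can process at most $\lambda_{\mathit{jr}} = \polylog(N)$ join requests per round. Third, over one half-life the service must absorb $N/2$ joins, giving
\[
\alpha \cdot K \cdot \lambda_{\mathit{jr}} \;\geq\; N/2,
\]
which with $K = O(\alpha\log N/\beta)$ yields $\alpha = \widetilde\Omega(\sqrt{\beta N})$. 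That is the whole proof.

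Compared to this, your approach has two real difficulties. First, your ``reachable set'' is not the right object: a directory peer could in principle \emph{know} the entire network; what bounds it is \emph{output} bandwidth (how many newcomers it can answer), not input bandwidth (how many addresses it has absorbed). Framing the bottleneck as reachability forces you into the awkward input-bandwidth accounting you struggle with. Second, the adversarial schedule has a timing hole: you make the adversary remove the reachable set of round $r_0$, but a newcomer at $r_0+\alpha$ uses the board as of $r_0+\alpha$, which now contains $O(\alpha/\beta)$ \emph{new} addresses written during $[r_0,r_0+\alpha]$ that your oblivious schedule (fixed in advance) could not have targeted. You would need to iterate or argue a fixed point, which is doable but unnecessary once you switch to the capacity viewpoint.
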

\begin{proof}
For a new peer to obtain a response from the bulletin board, it must send a message to a network address that is on that board. We note that information representing a network address is ``useful'' only if it is ``recent''. This would mean that we can restrict our attention to the network addresses in the most recent $W$ bits of the bulletin board.

No honest peer would be alive after $\Omega(\alpha \log N)$ rounds with a high probability. Recall that an honest peer fails with probability $1/2$ in any epoch. For some suitable constant $c_1$, after $c_1 \log N$ number of epochs, an honest peer would fail with a high probability. Therefore, if there is a network address that is $c_1 \alpha \log N$ rounds old, then there is a high probability that the peer controlling that network address has left. Since the system allows $O(\log N)$ bits to be written every $\beta$ rounds, we get the following upper bound on $W$,
\begin{equation*}
    W = O(\alpha \log^2 N / \beta).
\end{equation*}

There can be other information in the most recent $W$ bits; if there are more (distinct) network addresses, then the bootstrapping service can help in joining more new peers in a fixed period of time. Therefore, as the best case scenario, we assume that those $W$ bits of information consists of only network addresses.

Let $\lambda_{\mathit{jr}}$ be the highest number of join requests that a peer can handle. As the best case scenario, we assume that all the peers in the service are distinct, and each response message is equivalent to a successful join. Recall that in any $\alpha$ consecutive rounds, at least $N/2$ new peers must be able to join the network. Let $K$ be the number of distinct network addresses in the most recent $W$ bits on the board,
\begin{align*}
    \alpha K \lambda_{\mathit{jr}} &\geq N/2.
\end{align*}
If $\alpha= o\left(\sqrt{\frac{\beta N}{ \lambda_{\mathit{jr}} \log N}}\right)$, then in total, the system can handle much less than $N/2$ join requests in $\alpha$ consecutive rounds.
\end{proof}

\section{Related Work}

There has been a large body of work on robust and efficient overlay networks in various models of churn and failure, especially in the context of distributed hash table. First, we discuss early designs in different models of failures. Then, we discuss prior work on the Byzantine join-leave attack. We also survey practical partitioning attacks and network designs pertaining to blockchain P2P networks.

\pg{Early designs.} Many overlay networks were designed to be robust against random failures, where each peer (independently) has a bounded probability of being Byzantine \cite{naor2003simple, hildrum2003asymptotically, dolev2007self, johansen2015fireflies}. Fireflies \cite{dolev2007self, johansen2015fireflies} utilizes a pseudorandom mesh structure to provide each peer with a view of the entire membership under moderate churn. Peers need to actively monitor each other by gossiping ``accusation'' and ``rebuttal'' messages. However, it is unclear how Fireflies performs in highly dynamic and large-scale systems. DHTs are made robust by replication of each data item over a logarithmic number of peers, of which a majority are honest peers \cite{fiat2002censorship, saia2002dynamically, naor2003simple, hildrum2003asymptotically}. More specifically, peers are randomly mapped to $[0, 1)$ interval, forming neighbour connections under the rules of an efficient routable topology \cite{naor2003simple, naor2007novel}. A peer would be responsible for data items hashed along a segment of $\Theta(\log n / n)$ length, where $n$ is the number of peers. Queries are routed to the destination through a majority vote at each hop.

Fiat and Saia \cite{fiat2002censorship} designed a robust content addressable network against an adversary that can adaptively corrupt up to a constant fraction of peers. They utilize the butterfly topology, where each vertex is simulated by a logarithmic number of peers. They heavily exploit properties of bipartite expanders to show that all but $\epsilon n$ peers can search successfully for all but $\epsilon n$ of the data items, for a positive constant $\epsilon$. The drawback is that their topology is fixed and does not account for peers joining and leaving. It was modified \cite{saia2002dynamically} to handle adversarial deletions under a restricted form of churn, where during any period of time in which the adversary deletes $An$ peers, then at least $Bn$ new peers join the network for positive constants $B > A$, and a new peer joins via a random peer in the network. Moreover, the join algorithm can be expensive in terms of communication cost, involving a broadcast to the entire network. Leighton and Maggs \cite{leighton1989expanders} showed that if an adversary chooses $k$ switches to fail in an $n$-input multibutterly \cite{upfal1992log}, then there will be at least $n - O(k)$ inputs and $n - O(k)$ outputs connected by $\log n$-length paths. Datar \cite{datar2002butterflies} observed that the fault-tolerance of multibutterfly networks can be used to show that a $n$-node multihypercubic network can withstand a linear number of adversarial failures. For instance, it can be shown that at least $n - 3f/2$ nodes can reach at least $n - 3f/2$ nodes via $\log n$-length paths, where an adversary can remove up to $f$ nodes. Datar \cite{datar2002butterflies} built a content addressable network, having adversarial deletion fault-tolerant guarantees similar to \cite{fiat2002censorship}, improving on the communication and storage costs, where a query requires $O(\log n)$ messages and the data is replicated in $O(1)$ nodes. However, Datar's design is not resilient against Byzantine failures.



\pg{Join-leave attacks.} Replication of a data item over a logarithmic number of peers is only helpful if there is a majority of honest peers among them. This is because the integrity of the data item is verified by considering the majority of the responses. Thus, an adversary that can overwhelm a particular region of the $[0,1)$ interval with many peers, can potentially cause harm to the system, for e.g., by transmitting false versions of a data item. In a dynamic system, where peers can join and leave, even if newly joining peers are placed in a random location in the $[0, 1)$ interval, with just $O(n)$ join (and leave) attempts, Byzantine peers can occupy a particular $\Theta(\log n / n)$ length with high probability, obtaining the majority of peers in that region (cf. Lemma 2.1 in \cite{scheideler2005spread}).

Consequently, Awerbuch and Scheideler \cite{awerbuch2004group} considered a class of attacks called join-leave attacks, where Byzantine peers collectively try to populate specific regions of the overlay topology. In this line of work, the main goal is to design a combination of join algorithm and ``network perturbation'' mechanism to continuously redistribute the Byzantine peers as new peers join the network. Here, by network perturbation, we mean that for each new peer join, some (existing) peers are placed in new (typically random) locations in the network. More formally \cite{awerbuch2009towards}, for every interval $I \subseteq [0, 1)$ of size at least $(c \log n) / n$ for a constant $c > 0$ and any polynomial number of join/leave events in $n$, the following two conditions need to be met:
\begin{itemize}
    \item Balancing condition: $I$ contains $\Theta(|I|\cdot n)$ nodes.
    \item Majority condition: honest nodes in $I$ are in the majority.
\end{itemize}
Of course, all the honest nodes in a segment of size $\Theta(\log n / n)$ could be asked to leave, in which case, the majority condition cannot be met. Therefore, the churn adversary is required to specify the join/leave sequence $\sigma$ for honest nodes in advance. But it can choose to adaptively join/leave a Byzantine node. In particular, after the first $i$ events in $\sigma$ are executed, the churn adversary can either choose to join/leave a Byzantine node or initiate the $(i+1)^{\mathrm{th}}$ event in $\sigma$. (In the beginning, all nodes are randomly placed in $[0, 1)$.)

It is easy to see that if for each join, if \emph{all} the peers are placed in random locations, then the Byzantine peers always remain well-distributed in the network. But that would be quite expensive in terms of communication cost. Therefore, ``small'' perturbations per join, typically of $\polylog(n)$ peers, is reasonable. Translating this model in the context of eclipse attacks, our goal is to design an efficient network for blockchains, formally analyze the join algorithm, and provide theoretical guarantees on connectivity over large (typically polynomial) number of join/leave events.

The key performance metrics, so far, have been join latency, join communication cost, and the capability to handle polynomial variation in network size over time. In this work, we also introduce a new metric called recovery from catastrophic failures (cf. Section \ref{sec:recovery}). Refer to Table \ref{tab:comparison} for a comparison of OverChain with prior work.

Awerbuch and Scheideler \cite{awerbuch2004group} utilize the Chord topology \cite{stoica2001chord} with each peer simulating at most $O(\log n)$ nodes at any time. They assume that at most a $O(1/\log n)$ fraction of peers can be adversarial at any time. Each node is connected to every other node in a ``region`` of size $\Theta(\log n / n)$ around it, where every region is ensured to have $\Theta(\log n)$ nodes\footnote{As previously mentioned, such $\Theta(\log n)$-sized groups of nodes act as the functional units of the network, cancelling out effects of Byzantine peers.}. The join algorithm, places a new peer (node) in a random location via a distributed random number generation (RNG) protocol run by nodes within a region (initiated by an honest node) and secure routing of the new peer to the appropriate location (through honest majority in regions). For network perturbation, they employ a limited lifetime of $O(\log n)$ rounds for each node. This work is a starting point for OverChain though our join algorithm is completely different. (We explain the choice of limited lifetime for network perturbation at a later stage.) Although they can withstand a high join rate of $O(n / \log n)$, they make a strong assumption of the existence of ``trusted gateway peers'' that can initiate unlimited join protocols at any time. If a peer needs to leave the network, then it needs to wait until all its nodes' lifetimes have expired, i.e., for $\Omega(\log n)$ rounds, in which time the entire network gets reconstructed (as in our work, where the network gets reconstructed every $\Theta(\alpha)$ rounds) due to node lifetime being set to $O(\log n)$ rounds. Using this design, the authors show that the network provides robustness (i.e., meets both the aforementioned conditions) for a polynomial number of join/leave events with high probability.

\textit{Bootstrapping and churn rate in prior work.} Awerbuch and Scheideler \cite{awerbuch2004group} are able to endure a high join rate due to the assumption of gateway peers that can help place a new peer (node) in a random location. But those gateway peers can experience churn and have limited bandwidth, leading to the lower bound proved in Section \ref{sec:lower-bound}. Subsequent works in this model, analyze a single join/leave event and provide guarantees on a series of join and leave events, all occurring one after another, for ease of exposition. Secure bootstrapping is somewhat swept under the hood. They can handle concurrent join and leave events too. And much like \cite{awerbuch2004group}, by making assumptions that provide load-balancing of join events across the network (such as existence of gateway peers), they can also handle $n/\polylog(n)$ join rate because the join algorithm (including network perturbation) takes $\polylog(n)$ work (in terms of latency and communication cost for peers involved in a join event). In this work, we go one step further and make a \emph{weaker} bootstrapping assumption of the existence of a reasonably-updated blockchain (instead of access to random or trusted peers), and provide bounds on the churn (join) rate for the system.

\textit{Catastrophic failures in prior work.} The work of Awerbuch and Scheideler \cite{awerbuch2004group} and subsequent works heavily rely on honest majority in regions\footnote{They are also referred to as quorums \cite{awerbuch2004group, awerbuch2009towards} or swarms \cite{fiat2005making} or clusters \cite{guerraoui2013highly} in the literature.} for join/leave algorithms. If there is a large adversarial attack (as in Section \ref{sec:recovery}), then the Byzantine peers easily can control majority in most of the regions over time. In Section \ref{sec:more-rec}, we argue that the existing class of algorithms fail to provide recovery guarantees for catastrophic failures. 

Subsequent works assume that at most a constant fraction of the peers are malicious. Fiat, Saia and Young \cite{fiat2005making} also use Chord \cite{stoica2001chord} as the underlying topology combined with similar logarithmic redundancy for peers. They employ the k-rotation strategy \cite{scheideler2005spread} (for k = 3) for perturbing the network during a join event. A random location $r \in [0, 1)$ is first chosen (by the region contacted by the new peer) using similar distributed RNG techniques. Two peers $p_1$ and $p_2$ are selected randomly using the algorithm in \cite{king2004choosing}. Then, the joining peer, $p_1$ and $p_2$ are ``rotated'' in that the joining peer takes the position of $p_1$, $p_1$ takes the position of $p_2$ and $p_2$ takes the position $r$. They are able to provide robustness for a linear number of join and leave events with high probability, as there exist adversarial strategies (against the k-rotation strategy) to populate a $\Theta(\log n / n)$ segment \cite{awerbuch2009towards}. However, they avoid the limited lifetime method, which puts the network in a hyperactive state where peers have to (continuously) rejoin.

In a seminal work, Awerbuch and Scheideler \cite{awerbuch2009towards} introduced the cuckoo rule for network perturbation, that provides robustness for a polynomial number of join/leave events with high probability, without the need for limited lifetime for peers. As in previous works, a random location $r \in [0, 1)$ is chosen (by the region contacted by the new peer) using similar distributed RNG techniques. Then, the new peer is placed in $r$ and all the peers situated in the segment of size $k/n$ around $r$, where $k$ is an appropriate constant (depending on the constant $\rho$, where the number of Byzantine peers are at most $\rho n$), are relocated to points in $[0,1)$ chosen uniformly and independently at random. They use the dynamic de Bruijn graph as the underlying topology. In the works discussed so far, the network size was assumed to change by at most a constant factor. While we believe that these algorithms can be augmented to provide robustness against polynomial variation in network size, it does not seem trivial to do so. (For instance, global coordination, such as network-wide agreement, does seem inevitable for a topology that is able to adapt to both polynomial network size variation and catastrophic failures.)

Guerraoui, Huc and Kermarrec \cite{guerraoui2013highly} utilize a random graph drawn from Erd\H{o}s-R\'enyi model for the underlying topology, and employ a technique known as OVER (Over-Valued Erd\H{o}s R\'enyi graph) for handling a sequence of vertex addition and deletions polynomial in $n$ whp. The underlying graph is ensured to have small degree and good expansion. Each vertex is simulated by a cluster of $O(\log n)$ size containing more than two thirds of honest peers. They heavily rely on random walks (that mix in $\polylog(n)$ rounds) for their join algorithm and network perturbation. In this work, a random cluster $C_r$ is first chosen (by the cluster contacted by the new peer) using a distributed RNG algorithm. The new peer is introduced to the peers in $C_r$ and the neighbouring clusters. Then, (for network perturbation) $C_r$ exchanges all its peers with peers chosen at random from other clusters. The updated set of peers of $C_r$ is then conveyed to the neighbouring clusters (through majority rule) by the old peers. The key contribution of this work is that they handle polynomial number of join/leave events when the network size is allowed to polynomially vary. They achieve it by splitting and merging clusters based on thresholds on cluster size. For example, if the cluster size is more than $k \log n$ for some constant $k$, then the cluster partitions its set of peers into two, informs the updated set of peers to the neighbouring clusters, and then ``adds'' the other partition to the network (using vertex addition of OVER). (Note that the security arguments heavily rely on honest majority in clusters.) Whenever a peer leaves, a similar process is carried out by the corresponding cluster. The drawback is that these algorithms require a much higher communication and round complexity than the other works. (See Table \ref{tab:comparison} for the comparison.)

\textit{Why limited lifetime method of network perturbation in OverChain?} Let us say that there is a catastrophic failure caused by a ``DoS adversary'' in the overlay. The network becomes split into multiple components, but there exists one component consisting of $\Omega(n)$ honest peers with diameter $O(\log n)$. Observe that if the network experiences little-to-no churn at this stage, even if the blockchain progresses over time, there will be very less (new) peers joining the network (due to low join rate). This is definitely possible in the model because each honest peer is assumed to fail independently with probability at most $p_{\mathit{f}}$. (More precisely, half-life $\alpha$ definition is worst-case in that there could just be one period of $\alpha$ rounds in a long period of time, say a couple of years, that $1/2$ fraction of peers join/leave the network.) And there could be no new peers joining the network for a period of time.

In this scenario, the honest peers stuck in the smaller components keep wasting their resources because their blocks may never be propagated to that large component (whose chain is essentially the main chain). Note that catastrophic failure can stealthily happen in the real world, in which case the peers may not know that the network is actually split. This is problematic because after a sufficient period of time, the DoS adversary may regain the resources to cause another large-scale adversarial attack (before the network is fully rebuilt)! Thus, compared to cuckoo rule or local splitting/merging method of network perturbation, the limited lifetime method forces rejoins of honest nodes, enabling recovery of the entire network in a constant number of half-lives. Of course, this comes at a cost of keeping the network in a hyperactive state if there is small churn on average. We argue that open peer-to-peer systems are inherently dynamic in nature, i.e., if the \textit{average} half-life is close to $\alpha$, then the system is indeed in a natural state (and not a hyperactive state), in which case, the limited lifetime method would be (asymptotically) comparable to the other methods of network perturbation.

\textit{Do redundancy and robustness go hand in hand?} The works discussed so far, have retained some form of logarithmic redundancy in terms of data integrity or secure routing to thwart the effects of malicious peers and churn. Jaiyeola et al. \cite{jaiyeola2018tiny} address the robustness guarantees that can provided with $O(\log \log n)$ redundancy. A peer is associated with a ``group'' in the interval $[0,1)$, similar to regions in \cite{awerbuch2004group}. Using group sizes of $O(\log \log n)$, they show that all but an $O(1/\polylog(n))$-fraction of groups have honest majority, and that all but an $O(1/\polylog(n))$-fraction of peers can successfully search for all but an $O(1/\polylog(n))$-fraction of data items over a polynomial number of join/leave events with high probability. A drawback is that they have a rather complicated join algorithm, where the network is reconstructed every ``epoch'', where the epoch length is set according to the departure rate of peers (as in our work). They make a strong bootstrapping assumption of the existence of ``bootstrapping groups''. Since the group size is $O(\log \log n)$, there exists $O(1/\polylog(n))$-fraction of groups, termed as ``bad'', that do not have the required number of honest peers. To avoid increase in the number of bad groups during network reconstruction, they place the peers in two networks (that exist in tandem) in two (separate) $[0, 1)$ intervals (where each of them is based on an efficient topology). We remark that their approach for network reconstruction is rather complicated; they heavily rely on the bootstrapping groups for placement of a (new) peer, creating neighbour links, and moreover, updating links as (more) peers join and become a better match as a neighbour (for e.g., they can lie closer to the neighbour points determined by the underlying topology). Similar to our work, the authors use proof-of-work to defend against Sybil attacks. (Note that the works discussed so far, in the join-leave model, made explicit assumptions regarding the number of malicious peers.) The caveat is that their network needs to continually generate (and agree on) a global random string as a (partial) input to the proof-of-work puzzles. Otherwise, the adversary can launch a pre-computation attack generating too many identities, overwhelming all the groups. Our reliance on the most recent block for proof-of-work puzzles is a significantly simpler approach than a network-wide distributed RNG algorithm.

\pg{Partitioning attacks.} Heilman et al. \cite{heilman2015eclipse} first demonstrated eclipse attacks on the Bitcoin network. Their idea is to form many incoming connections with the victim node from attacker-controlled addresses (from diverse IP address ranges), propagate many irrelevant (i.e., not part of the Bitcoin network) addresses, and wait for the victim node to restart. Once the victim node restarts, it would likely form all its outgoing connections with attacker-controlled addresses. Marcus et al. \cite{cryptoeprint:2018:236} carried out a similar attack on Ethereum's P2P network with significantly less resources by exploiting the fact that the public key of a node was its node identifier. In other words, many node identifiers could be run with just a single machine (with the same IP address). This made it easy to suitably generate and store attacker-controlled addresses in the victim node, so that after restarting, it would form all its outgoing connections with the attacker. In both papers, the authors suggested countermeasures involving the process of storing network addresses and connecting to new peers, to increase the cost of such attacks. Saad et al. \cite{saad2019partitioning} outline partitioning attacks by different adversaries such has an AS/ISP that can route Bitcoin traffic away from a target AS by BGP hijacking, a malicious mining pool that can exploit knowledge about weakly synchronized nodes to fork the network, and a software developer capable of exploiting bugs in Bitcoin client can aid other partitioning attacks.

While the previously mentioned works aimed to capture victim connections by sending many attacker-controlled (and bogus) addresses, they don't directly exploit churn or synchronization of honest peers. Saad et al. \cite{saad2021revisiting} identified that Bitcoin suffered from weak network synchronization; in a block interval of 10 mins, on average, only $\approx$39\% of the nodes had an up-to-date blockchain. The authors identify all mining nodes, which also suffered from varying network reachability. They describe a partitioning attack with a 26\% hash power adversary, by selectively broadcasting blocks to disjoint groups of miners after carefully exploiting the block propagation patterns\footnote{This type of partitioning attack is possible because Bitcoin (by default) uses a local tie-breaking rule for equal-length chains, i.e., a miner chooses to mine on the chain (or block) that was received first.}. On the contrary, Baek et al. \cite{baek2021onthe} highlight the limitations of their network monitoring systems, and assert that the block propagation in Bitcoin is indeed fast. Nonetheless, the work of Saad et al. \cite{saad2021revisiting} shows that weak synchronization can be exploited to create partitions. Our aim is to design a network that would allow for fast block propagation with at most logarithmic hops, thereby strengthening network synchronization and also paving the way for decreasing the average block interval.

Saad et al. \cite{saad2021syncattack} primarily exploit churn in Bitcoin network to create a partition between existing and newly arriving nodes. First, the adversary occupies all the incoming connections of existing nodes, which is done by exploiting the node eviction policy that favors nodes with longer connection times. Then, as nodes depart and new nodes join, they only connect with adversarial nodes from the sample of nodes provided by the DNS seeds due to unavailability of connection slots in other nodes. This gradually creates a partition between existing and newly arriving nodes. The authors suggest several countermeasures to increase the cost of such attacks such as introducing a fork resolution mechanism, restricting the number of connections from the same IP address, improving the eviction policy, etc. We argue that a theoretical framework that appropriately captures churn and malicious behaviour, is needed to analyze the security against partitioning attacks. There have also been practical attacks that consider strong network infrastructure adversaries \cite{apostolaki2017hijacking, tran2020stealthier}, which are beyond the scope of our model.

\pg{Blockchain network designs.} In light of these issues, there have been new network design proposals for blockchains. Kadcast \cite{rohrer2019kadcast} further builds on Kademlia \cite{rohrer2019kadcast} and proposes a structured broadcast protocol for disseminating blocks with at most a logarithmic number of hops and constant overhead in congestion. It is unclear how this protocol performs with respect to continuous node churn and change in network size. In Perigree \cite{mao2020perigee}, a peer retains the ``best'' subset of neighbours after regular intervals, and also continuously connects to a small set of random peers to explore potentially better-connected peers. But Perigree may actually be more prone to eclipse attacks because the adversary can easily monopolize victim peer's connections by providing well-connected peers.

\section{Discussion}

We have shown that the maintenance of robust overlay networks, in the context blockchains, can be made simpler and more efficient than fully localized algorithms. Here, we address some clarifications, limitations and natural extensions to this work.

\pg{Implementation.} As a first step, we show the theoretical results as clearly as possible, while making minimal assumptions on the blockchain and providing a clear comparison to existing algorithms and their guarantees in our model. We also include the average block interval, $\beta$, in the discussion, in addition to the interplay between churn rate, communication cost and number of Byzantine peers. We believe that a proper implementation with extensive experiments on the parameters, and a comparison with existing implementations, including the unstructured Bitcoin network, is a separate project on its own. 

\pg{Incentives.} Our work considers the honest vs Byzantine model of peers to deal with worst-case failures. But this model omits the analysis of \textit{rational} behaviour of peers. The peers currently get reward for mining blocks through block rewards and transaction fees. But for any peer to participate as a directory node and incorporate new peers into the network, it needs to expend a certain amount of communication bandwidth. If there is no incentive to do so, the peer may not follow the protocols. A key observation of our algorithm is that the communication cost is proportional to the amount of hash power owned by a peer. Thus, the next step is to design incentive-compatible overlay maintenance algorithms that are robust to churn and Byzantine failures. Such a design would promote independent peers to mine and participate in overlay maintenance.

\pg{Bootstrapping assumptions.} We make an assumption that the blockchain held by any honest peer (which can be outdated by a constant number of blocks compared to a fully updated blockchain) within the network is publicly available. This abstracts out the numerous public blockchain explorers and tie-breaking criteria (that are blockchain-specific) to consider the best among them. Note that this is \textit{not} a solution to the bootstrapping problem of how a new peer can find existing peers, as some peers within the network must maintain such blockchain explorers, thus acting as an interface for entities outside and inside the network.

We claim that our bootstrapping assumption is \textit{weaker} than assumptions made by current (practical and theoretical) designs. Currently, the bootstrapping process is highly centralized (and static), with only a handful of peers responsible for providing information about random peers\footnote{In unstructured networks such as Bitcoin, it is important that new peers get connected to random peers for the network to simulate a random graph that has desirable connectivity properties.} to new peers. More importantly, there is no way of verifying the randomness of given peers. On the other hand, we utilize the properties of blockchain such as safety, liveness and fairness\footnote{Blockchain protocols do not provide strong guarantees such as a block is produced by a random peer. In that case, having a network address in each block trivially provides access to random peers.} to ensure that the overlay structure is (provably) maintained.

\pg{Join latency.} Our algorithm allows a peer to join the network in constant number of rounds, in contrast to previous algorithms that required logarithmic number of rounds. It is important for us to dissociate the time taken by a peer to mine a new node and the time taken for a node to join the network. Note that the node mining is only done to regulate the join rate of peers, to prevent a Sybil attack. Prior works \cite{awerbuch2004group, fiat2005making, awerbuch2009towards, guerraoui2013highly} subsume such a Sybil mechanism and makes explicit assumptions on the number of peers joining in a given period of time. In other words, we could remove the node mining aspect, and just use the hash function for associating a node with a random committee. Thus, the join latency, a key performance metric in overlay maintenance, is the time taken for a peer to join a committee and obtain information about all its neighbouring peers.

\pg{Heterogeneous peers.} In our work, a peer generates nodes (to facilitate overlay maintenance) and blocks (which are also tied to overlay maintenance via directory nodes). In other words, a peer is also (necessarily) a miner. (This design is suitable for proof-of-work blockchains because 2-for-1 PoW mining \cite{garay2015bitcoin, pass2017fruitchains} can be used for simultaneously mining for both blocks and nodes.) But in practice, peers part of the blockchain P2P network, can be \emph{signficantly} different from each other in terms of (1) half-life (for e.g., there could be a few long-lived honest peers which can be exploited to make the overlay protocols efficient), (2) hash power (for e.g., non-miners vs lone miners vs mining pools), (3) blockchain verification (for e.g., full nodes vs SPV clients in Bitcoin), (4) bandwidth (for e.g., a few peers can endure high communication cost), etc. Thus, an important research direction is to come up with a theoretical framework that models the peer heterogeneity, and then design robust overlay maintenance algorithms.


\pg{Beyond PoW for Sybil defense.} There are two components of OverChain: (1) Sybil defense mechanism to regulate the number of new nodes generated per round, (2) and overlay maintenance features such as placing new nodes in random committees (using the output of a hash function), handling join requests, broadcasting network information (including blocks), etc. The latter component can be translated onto any type of blockchain without any changes. The challenge is to design an efficient Sybil defense mechanism (for node mining), possibly by making use of the same restricted resource (of the blockchain).

\section{Full Analysis}\label{sec:full-ana}
The overlay network has certain desirable properties that are subsequently proven in this section. The blockchain properties are important for the proofs. Blockchain fairness is used for showing a bound on the communication cost due to \texttt{JOIN} protocol, bound on the number of honest nodes in any bucket of the active directory, and reaching an agreement on the overlay parameters via honest majority in the blocks of phase 2 of a b-epoch. Both blockchain safety and liveness are implicitly required, for e.g., if the blockchain provided by the introductory service is different from the chains within the network, then the \texttt{JOIN} protocol would not work. Besides that, blockchain liveness is explicitly used for obtaining the (approximate) time taken for adding a large set of consecutive blocks onto the confirmed chain. Although a sequence of b-epochs is well-defined with respect to a peer (and its confirmed chain), we divide the time into b-epochs (for the system as a whole) even for the analysis. We consider that a b-epoch ends if any of the honest peer reaches the end of the b-epoch.

\begin{table}[h!]
\begin{center}
\begin{tabular}{|c|l|} 
 \hline
 \multicolumn{1}{|c|}{Symbols} & \multicolumn{1}{c|}{Meaning} \\ 
 \hline
 $N$ & Maximum network size \\ 
  \hline
 $\rho$ & Fraction of total number of peers that are malicious \\ 
 \hline
 $\alpha$ & Number of rounds in one epoch \\ 
 \hline
 $\mathcal{B}$ & Number of buckets in one directory \\
 \hline
 $\mathcal{K}$ & Number of blocks in one directory \\ 
 \hline
 $\mathcal{B}_{\mathit{act}}$ & Number of buckets in the active directory \\
 \hline
 $\mathcal{K}_{\mathit{act}}$ & Number of buckets in the active directory \\ 
 \hline
 $T_{l}$ & Lifetime of a non-directory node (in terms of number of blocks) \\
 \hline
 $T_{\mathit{dl}}$ & Lifetime of a directory node (in terms of number of blocks) \\
 \hline
 $\lambda_{\mathit{jr}}$ & Highest number of join requests handled by a directory node per round\\
 \hline
 $\mathcal{C}_e$ & Number of committees in b-epoch $e$ \\ 
 \hline
 $L'_e$ & Estimate of the network size at the end of b-epoch $e$ that is agreed\\& upon by all the honest peers \\
 \hline
 $L_e$ & Network size at the end of b-epoch $e$ \\ 
 \hline
 $M'_e$ & Estimate of the network size after phase 1 of b-epoch $e$ that is agreed\\& upon by all the honest peers\\
 \hline
 $M^L_e$ & Minimum network size in phase 1 of b-epoch $e$ \\ 
 \hline
 $M^H_e$ & Maximum network size in phase 1 of b-epoch $e$ \\ 
 \hline
 $G'_e$ & Estimate of the number of nodes that joined in phase 1 of b-epoch $e$ \\ 
 \hline
 $G_e$ & Number of nodes that joined in phase 1 of b-epoch $e$ \\
 \hline
 $\delta_{\mathit{err}}$ & Error parameter (that is less than 1) for network size estimation \\ 
 \hline
\end{tabular}
\caption{Important symbols and their meaning.}
\label{tab:notation}
\end{center}
\end{table}

\begin{definition}\label{def:actrobust}
The active directory (bootstrapping service) is considered to be \text{\normalfont robust} if it satisfies the following properties:
\begin{enumerate}
    \item Each bucket has at least $\lambda_b \log^2 N$ honest nodes where $\lambda_b$ is some positive constant.
    \item The entry information of any honest node that is part of the network, is held by (the honest directory nodes of) the appropriate bucket (determined by the committee-directory mapping).
\end{enumerate}
\end{definition}

\begin{definition} \label{DEFjoin}
The \texttt{JOIN} protocol, for a new node $q$ generated by peer $u$, is said to be \text{\normalfont successful} if it satisfies the following properties:
\begin{enumerate}
    \item Peer $u$ must get entry information about all (honest) nodes of the committee that node $q$ is going to join, and about all (honest) nodes of the neighbouring committees.
    \item Peer $u$ sends node $q$'s entry information to all the (honest) nodes in the committee that $q$ is going to join, and to all the (honest) nodes of the neighbouring committees.
\end{enumerate}
\end{definition}

\begin{definition} \label{DEFpartres}
The overlay network is considered to be \text{\normalfont partition-resilient} if it satisfies the following properties:
\begin{enumerate}
    \item Each committee has $O(\log N)$ nodes and at least $\lambda_p \log N$ honest peers for a positive constant $\lambda_p$.
    \item In each committee, every pair of honest nodes are connected.
    \item Each honest node has $\Omega(\log N)$ honest neighbours in each of its neighbouring committee.
\end{enumerate}
\end{definition}

\begin{definition}\label{DEFbandadequate}
A b-epoch $e$ is said to be \text{\normalfont bandwidth-adequate} if each (honest) peer needs to send or receive $O(\log^3(N))$ messages for overlay maintenance in all rounds of b-epoch $e$.
\end{definition}

\begin{definition}\label{DEFgoodestratio}
Let the quantity $R_e = L_e / L'_e$ be defined as the \text{\normalfont estimate ratio} of b-epoch $e$. A b-epoch $e$ is said to have a \textit{good estimate ratio} if,
\begin{equation*}
    R_{e} = \frac{L_e}{L'_e} \in \left[\frac{1}{2\mu_b(1+\delta_{\mathit{err}})}, \frac{2 \mu_b}{(1-\rho)(1-\delta_{\mathit{err}})}\right].
\end{equation*}
\end{definition}

\begin{definition} \label{DEFstableb-epoch}
A b-epoch $e$ is said to be a \text{\normalfont stable} b-epoch if the number of peers in any round $r$ of b-epoch $e$, denoted by $N^r_e$, is $N^r_e \in [\mathcal{C}_e / \lambda_{s}, \lambda_{s} \mathcal{C}_e]$, for some positive constant $\lambda_{s}$.
\end{definition}

\begin{remark} \label{rem:exp-nodes}
The expected number of new nodes that can be generated in a stable b-epoch, is in $[(\lambda_{n} (1-\rho) \mathcal{C}_e\log N) / (\mu^2_b \lambda_s), \lambda_{n} \lambda_s \mathcal{C}_e\log N]$. This is because the node difficulty threshold, $p_n = (\lambda_n \log N) / (q \alpha)$, is fixed, the number of rounds in a b-epoch is bounded using blockchain liveness, and the network size is bounded in a stable b-epoch.
\end{remark}

\begin{definition}\label{DEFdiffrecal}
A b-epoch $e$ is said to be \text{\normalfont synchronized} if:
\begin{enumerate}
    \item All honest peers calculate $M'_e$ such that $M'_e \in [(1-\delta_{\mathit{err}})(1-\rho)M^L_e / \mu_b, (1+\delta_{\mathit{err}})\mu_b M^H_e]$ at the end of phase 1, as specified in Algorithm \ref{alg:sizeest}.
    \item $\mathcal{C}_{e+1}$ and $\mathit{ch\_dim}$ are added to each honest block confirmed after phase 1.
    \item There is an honest majority among the blocks confirmed in phase 2.
\end{enumerate}
\end{definition}

\begin{lemma}\label{lemma:totalnodes}
If the b-epochs $e, e-1, \dots, z$ where $z = \mathrm{max}(1, e-\lceil \lambda_{l} \mu_b \rceil)$, are stable b-epochs, then whp, every committee can have $O(\log N)$ nodes.
\end{lemma}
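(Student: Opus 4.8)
The plan is to reduce the statement to a balls‑into‑bins estimate over a constant‑length window of b‑epochs. First I would pin down which nodes can be alive at a round of b‑epoch $e$: a non‑directory node mined while block $b_l$ is the tip becomes invalid once block $b_l+T_l$ is confirmed, so every alive non‑directory node was created within the most recent $T_l=\lambda_l\alpha/\beta$ confirmed blocks. Since a b‑epoch contains $\alpha/(\mu_b\beta)$ blocks and $T_l=\lambda_l\mu_b\cdot\alpha/(\mu_b\beta)$, those blocks lie inside b‑epochs $z,z+1,\dots,e$ with $z=\max(1,e-\lceil\lambda_l\mu_b\rceil)$ — exactly the ones assumed stable. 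A promoted directory node occupies a committee only during its non‑directory life, which also lasts $T_l$ blocks, so directory nodes need no separate accounting; a dimension change within the window invalidates, within a couple of b‑epochs, every node created in the superseded hypercube, which only decreases the count and keeps the committee identifiers consistent; and in the case $z=1$ I would additionally invoke the bootstrapping assumption that the execution starts with $\Theta(\log N)$ honest nodes per committee, which likewise expire within $T_l$ blocks and hence add only an $O(\log N)$ term.

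\textbf{Counting per b‑epoch.} Fix a stable b‑epoch $i$ in the window. By Remark \ref{rem:exp-nodes} the expected number of honest nodes generated in $i$ is $\Theta(\mathcal{C}_i\log N)$, so by a Chernoff bound it is $O(\mathcal{C}_i\log N)$ whp. For the adversary: it controls at most $\rho N$ peers, hence makes at most $\rho Nq\alpha$ hash queries during $i$, each yielding a valid node with probability $p_n=\lambda_n\log N/(q\alpha)$, so its expected node output is at most $\rho\lambda_n N\log N$, which by stability ($N\le\lambda_s\mathcal{C}_i$) is $O(\mathcal{C}_i\log N)$, and again $O(\mathcal{C}_i\log N)$ whp. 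Because all b‑epochs in the window are stable, $z$ and $e$ are $O(1)$ b‑epochs apart, and the network size changes by at most a factor $2$ per epoch, every $\mathcal{C}_i$ in the window is $\Theta(\mathcal{C}_e)$; summing over the $O(1)$ b‑epochs gives that the total number of nodes alive during b‑epoch $e$ is $O(\mathcal{C}_e\log N)$ whp.

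\textbf{Placement and union bound.} Each node's committee is the leading $\log\mathcal{C}_e$ bits of its proof‑of‑work output, which by the random‑oracle assumption is uniform over the $\mathcal{C}_e$ committees and independent across honest nodes; thus the honest nodes form $O(\mathcal{C}_e\log N)$ balls thrown uniformly into $\mathcal{C}_e$ bins, and each committee receives $O(\log N)$ of them whp (Chernoff on a mean of $O(\log N)$, then a union bound over the at most $N$ committees). For the adversary, even with adaptive choice of which and how many queries to issue, each query lands in ``valid and committee $c$'' with probability $O(p_n/\mathcal{C}_e)$ conditioned on the past, so the number of adversarial nodes in a fixed committee $c$ over the window is stochastically dominated by $\mathrm{Bin}(O(Nq\alpha),O(p_n/\mathcal{C}_e))$, whose mean is $O(\log N)$ (again because $N/\mathcal{C}_e=O(1)$ in a stable b‑epoch) and which concentrates; a union bound over committees, and then over the polynomially many node‑creation and expiry times in the execution, completes the proof.

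\textbf{Main difficulty.} The honest, balls‑into‑bins half is routine; the delicate point is ruling out an adversary that funnels all of its hash power into one committee. This is handled because a node's committee is a fresh random‑oracle output that the adversary cannot steer, so its expected yield into any single committee is (total adversarial queries)$\times\Theta(p_n/\mathcal{C}_e)$, which is only $\Theta(\log N)$ precisely because stability forces $N/\mathcal{C}_e=O(1)$ — without stability, or with a super‑constant number of b‑epochs fitting inside the node lifetime, the bound would break. The remaining fiddly ingredient is the bookkeeping across a dimension change, but there the invalidation of old‑hypercube nodes keeps both the total count and the committee labelling under control, so the asymptotics are unaffected.
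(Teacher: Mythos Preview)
Your proposal is correct and follows the same skeleton as the paper's proof: bound the window of b-epochs that can contribute live nodes via the non-directory lifetime $T_l$, use stability plus Remark~\ref{rem:exp-nodes} and Chernoff to get $O(\mathcal{C}_e\log N)$ total nodes, then apply a balls-and-bins argument for the per-committee bound. You are in fact more careful than the paper (which compresses everything into three sentences), particularly in separating honest from adversarial nodes and arguing stochastic domination for the adaptive adversary; one small slip is writing ``$N\le\lambda_s\mathcal{C}_i$'' where $N$ denotes the \emph{maximum} network size---you mean the current size $N^r_i$, which is what stability actually bounds.
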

\begin{proof}
The nodes generated before the previous $\lambda_{l} \mu_b$ b-epochs are considered invalid since the start of b-epoch $e$ (due to blockchain liveness and non-directory node lifetime). Therefore, by Chernoff bounds, the total number of nodes in the system during b-epoch $e$ is at most $O(\mathcal{C}_e \log N)$ with high probability (due to Remark \ref{rem:exp-nodes} and the fact that the network size can change by at most a constant factor over a constant number of epochs). By using a balls-and-bins argument, each committee can have $O(\log N)$ nodes with high probability, for a large enough $\lambda_{n}$.
\end{proof}

\begin{lemma} \label{lemma:peer-node-comm}
If the b-epochs $e, e-1, \dots, z$ where $z = \mathrm{max}(1, e-\lceil \lambda_{l} \mu_b \rceil)$, are stable b-epochs, then each peer controls at most $O(\log N)$ non-directory nodes in the network and at most $c_m$ non-directory nodes in any committee in b-epoch $e$, whp, where $c_m > 3$ is some constant.
\end{lemma}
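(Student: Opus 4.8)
The idea is to treat each peer in isolation and bound, mirroring the counting in Remark~\ref{rem:exp-nodes} and Lemma~\ref{lemma:totalnodes} but at the granularity of a single unit of hash power, (i) the total number of its valid non-directory nodes during b-epoch $e$, and (ii) the number of those that land in any one committee; a union bound over the at most $N$ peers (and, for (ii), over the at most $\lambda_s\mathcal{C}_e$ committees) then finishes. Both parts rely on two facts: the node-mining target is the \emph{fixed} threshold $p_n=(\lambda_n\log N)/(q\alpha)$, so a peer produces valid nodes slowly; and, modelling $\mathbf{H}$ as a random oracle, the committee of each valid node (the leftmost $\log\mathcal{C}_e$ bits of $P_{\mathit{join}}$) is uniform among the $\mathcal{C}_e$ committees of b-epoch $e$, independent across the peer's queries, and independent of the validity test $P_{\mathit{join}}<T_{\mathit{join}}$.

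First I would fix the relevant window. A non-directory node that is valid at some round of b-epoch $e$ used, in its proof, a block among the last $T_l=\lambda_l\alpha/\beta$ confirmed blocks (up to the $O(\mu_s)$ pre-computation slack a directory node tolerates, which is negligible at b-epoch granularity), so by blockchain liveness the query that produced it falls within b-epochs $z,\dots,e$ — exactly the range in the hypothesis. In each such b-epoch a peer issues at most $q\alpha$ hash queries (a b-epoch lasts at most $\alpha$ rounds by liveness, and one unit of hash power makes $q$ queries per round), each producing a valid node with probability $p_n$. Hence the count of the peer's valid non-directory nodes in b-epoch $e$ is stochastically dominated by a $\mathrm{Binomial}\big((\lceil\lambda_l\mu_b\rceil+1)q\alpha,\,p_n\big)$ variable, whose mean is $(\lceil\lambda_l\mu_b\rceil+1)\lambda_n\log N=\Theta(\log N)$; a Chernoff bound makes this count $O(\log N)$ except with probability $N^{-\Omega(1)}$, and a union bound over all peers gives the first claim whp.

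For the per-committee bound, fix a peer $u$ and a committee $c$ of b-epoch $e$. By the random-oracle assumption, each of $u$'s at most $(\lceil\lambda_l\mu_b\rceil+1)q\alpha$ queries in the window independently yields a valid node assigned to $c$ with probability $p_n/\mathcal{C}_e$; moreover an adversarially chosen preceding block still hashes to a uniform value, so a Byzantine peer cannot bias or concentrate its nodes into $c$. Thus the number $X_{u,c}$ of such nodes has mean at most $(\lceil\lambda_l\mu_b\rceil+1)q\alpha\,p_n/\mathcal{C}_e=O(\log N/\mathcal{C}_e)$, and since b-epoch $e$ is stable we have $\mathcal{C}_e\ge N^r_e/\lambda_s\ge N^{1/y}/\lambda_s$, so this mean is $O(\log N/N^{1/y})=o(1)$. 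A crude tail estimate gives $\Pr[X_{u,c}\ge c_m+1]\le\big(O(\log N/N^{1/y})\big)^{c_m+1}$, and a union bound over the at most $N$ peers and at most $\lambda_s N$ committees leaves a total failure probability of at most $\mathrm{poly}(N)\cdot(O(\log N))^{c_m+1}\cdot N^{-(c_m+1)/y}$, which is $N^{-\Omega(1)}$ once $c_m$ is a sufficiently large constant relative to $y$ (and the desired whp exponent); in particular one may take $c_m>3$. This yields the second claim.

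The main obstacle is the committee-independence argument behind part (ii): one must argue carefully that conditioning on a node being \emph{valid} does not skew its committee, and that grinding on block contents or nonces cannot let a (Byzantine) peer pile nodes into a chosen committee — both follow from the random-oracle model and the hash-power-proportional, fixed query budget, but need to be stated rather than assumed. The remaining delicate point is purely quantitative: because the union bound in (ii) ranges over $\mathrm{poly}(N)$ peer–committee pairs while $\mathcal{C}_e$ may be as small as $N^{1/y}$, the constant $c_m$ must be chosen large enough in terms of $y$; everything else reduces to the routine Chernoff and balls-into-bins computations sketched above.
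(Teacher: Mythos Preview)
Your proposal is correct and follows essentially the same approach as the paper: bound the look-back window via node lifetime and liveness, apply a Chernoff bound on a single peer's $\Theta(q\alpha)$ queries at success probability $p_n$ to get the $O(\log N)$ total, then use a balls-into-bins/Poisson tail to cap the per-committee count and union-bound over peers and committees. The only cosmetic difference is that the paper first fixes $D=O(\log N)$ and then throws $D$ balls into $\mathcal{C}_e$ bins, whereas you collapse validity and committee assignment into a single Bernoulli with parameter $p_n/\mathcal{C}_e$; your extra care about random-oracle independence (validity not biasing the committee bits, and Byzantine grinding being ineffective) and about $c_m$ scaling with $y$ is welcome and is left implicit in the paper.
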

\begin{proof}
We examine the probability that a constant number of all nodes controlled by a peer joining a particular committee.

Recall that a node is valid for $T_l$ blocks, amounting to at most $\lambda_l \mu_b \alpha$ rounds due to blockchain liveness. The node difficulty threshold, $p_n = (\lambda_n \log N) / (q \alpha)$, is fixed. To calculate the number of nodes controlled by a peer, we need to look back for at most $\lambda_l \mu_b \alpha$ consecutive rounds before round $r$, as nodes generated before it will not be considered valid in round $r$. Applying a Chernoff bound, whp, a peer can control at most $D = O(\log N)$ nodes in any round $r$. Applying a union bound over all peers and $\alpha$ rounds, this holds for all peers during the entire b-epoch $e$.

Each of these $D$ new nodes generated by the peer get mapped to a random committee (by random oracle assumption). This can be viewed as throwing $D$ balls in $\mathcal{C}_e \geq N^{1/y} / \lambda_{s}$ bins. Let the upper bound for the number of balls in a bin be some large enough constant $c_m > 3$. Let $Z_k$ be the random variable that denotes the number of nodes in committee $k$. By Chernoff bounds and large enough $N^{1/y}$ (minimum network size),
\begin{equation*}
    \mathrm{Pr}[Z_k \geq c_m] \leq e^{c_m} \cdot \left(\frac{D}{\mathcal{C}_e}\right)^{c_m} \leq \left(\frac{1}{\mathcal{C}_e}\right)^{c_m/2},
\end{equation*}
committee $k$ has at most $c_m$ nodes with high probability. By applying a union bound on the number of committees, every committee has less than $c_m$ nodes controlled by the peer with high probability. Finally, applying a union bound over the total number of peers, each peer controls at most $c_m$ nodes in any committee.
\end{proof}

\begin{lemma} \label{lemma:hon-peer-lower}
For $e \geq 2$ and constant $\lambda_p > 0$, if b-epochs $e-1$ and $e$ are stable, then in b-epoch $e$, each committee has at least $\lambda_p \log N$ honest peers mapped to it with high probability.
\end{lemma}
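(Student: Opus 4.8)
The plan is to exhibit, for each committee $k$ of the hypercube active during b-epoch $e$, a large pool of honest peers that are present throughout \emph{all} of b-epoch $e$ and that each own a node assigned to $k$, and then to finish with a balls-into-bins concentration bound. Concretely, let $P$ be the set of honest peers alive during the entire interval from the start of b-epoch $e-1$ to the end of b-epoch $e$. The point of this particular window is threefold: a peer $p\in P$ (i) was present for all of b-epoch $e-1$ and so issued $\Omega(q\alpha)$ hash queries during it; (ii) is present at every round of b-epoch $e$; and (iii) still owns, throughout b-epoch $e$, every node it generated during b-epoch $e-1$, since the non-directory lifetime $T_l=\lambda_l\alpha/\beta$ blocks exceeds the block-length of two consecutive b-epochs once $\lambda_l$ is chosen large enough (converting blocks to rounds via blockchain liveness). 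It therefore suffices to show that whp, for every such $k$, at least $\lambda_p\log N$ peers of $P$ own a node assigned to $k$.

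First I would lower-bound $|P|$. By stability of b-epoch $e-1$ there are at least $(1-\rho)\mathcal{C}_{e-1}/\lambda_s$ honest peers present at the start of b-epoch $e-1$, and by the churn model each honest peer fails independently in each epoch with probability at most $p_f\le 1/2$; since the window spans only a constant number of epochs, each such peer survives through the end of b-epoch $e$ with probability $\Omega(1)$. These failure coins are independent of the random oracle, so a Chernoff bound over them gives $|P|=\Omega(\mathcal{C}_{e-1})=\Omega(\mathcal{C}_e)$ whp, using that the committee count changes by at most the constant factor $\lambda_s$ per dimension change.

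Next, fix a committee $k$ of the hypercube used in b-epoch $e$ and condition on $P$. For $p\in P$, the number $M_p$ of nodes $p$ generates during b-epoch $e-1$ is $\mathrm{Binom}(qR_{e-1},p_n)$ with $R_{e-1}\ge\alpha/\mu_b^2$ and $p_n=\lambda_n\log N/(q\alpha)$, so $\E[M_p]\ge\lambda_n\log N/\mu_b^2$; a Chernoff bound and a union bound over the at most $N$ peers show $M_p=\Theta(\lambda_n\log N)$ simultaneously for all $p\in P$ whp, and in particular $M_p\ll\mathcal{C}_e$ since $\mathcal{C}_e\ge N^{1/y}/\lambda_s$. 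Each of these $M_p$ nodes gets a committee ID read off the leftmost $\log\mathcal{C}_e$ bits of its puzzle output, hence is uniform and independent over the $\mathcal{C}_e$ committees of the active hypercube; and if b-epoch $e-1$ is a transformation b-epoch, the dimension-change protocol places each new node in \emph{both} hypercubes, so in particular in a uniformly random committee of the one used during b-epoch $e$, and the claim still holds. Therefore $\mathrm{Pr}[\,p\text{ owns a node in }k\mid P\,]=1-(1-1/\mathcal{C}_e)^{M_p}=\Omega(\lambda_n\log N/\mathcal{C}_e)$, and these events are independent across $p\in P$ since they depend on disjoint sets of oracle queries. Hence the number of peers of $P$ owning a node in $k$ stochastically dominates a binomial with mean $|P|\cdot\Omega(\lambda_n\log N/\mathcal{C}_e)=\Omega(\lambda_n\log N)$; a Chernoff lower-tail bound and a union bound over the $\mathcal{C}_e\le N$ committees show that, for $\lambda_n$ large enough relative to $\lambda_p$, every committee is owned by at least $\lambda_p\log N$ peers of $P$ whp. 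Since every such peer is honest, present, and owns a valid node in $k$ at every round of b-epoch $e$, the bound holds at every round, with no further union bound over rounds needed.

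I expect the probabilistic core — a standard balls-into-bins concentration essentially identical to the one used in the proof of Lemma~\ref{lemma:peer-node-comm} — to be routine. The main obstacle is the bookkeeping that entangles churn, node expiry, and dimension changes: one must count peers that are simultaneously long-lived enough to have generated $\Theta(\log N)$ still-valid nodes, present throughout all of b-epoch $e$, and whose nodes are assigned against the hypercube actually in use during b-epoch $e$. Restricting attention to the single window from the start of b-epoch $e-1$ to the end of b-epoch $e$, and to nodes generated strictly inside b-epoch $e-1$, is what makes all three constraints hold at once and keeps the argument clean.
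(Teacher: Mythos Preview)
Your proof is correct and uses the same basic ingredients as the paper's --- honest-node generation during b-epoch $e-1$, uniform assignment to committees, and survival over at most two half-lives --- but composes them in a different order. The paper first throws all honest nodes generated in b-epoch $e-1$ into the $\mathcal{C}_e$ committees, shows each committee receives $\Theta(\lambda_n\log N)$ honest nodes via balls-into-bins, then invokes Lemma~\ref{lemma:peer-node-comm} (at most $c_m$ nodes per peer per committee) to convert this to $\Theta(\lambda_n\log N)$ distinct honest \emph{peers} per committee, and only afterwards applies a Chernoff bound on their independent survival through b-epoch $e$. You instead first restrict to the set $P$ of honest peers surviving the full window, and then directly count, for each committee $k$, the number of peers in $P$ owning at least one node in $k$ via independent Bernoulli indicators. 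Your ordering is somewhat cleaner: it counts distinct peers directly without the detour through Lemma~\ref{lemma:peer-node-comm}, and because survival is handled up front the resulting bound automatically holds at every round of b-epoch $e$ with no further union bound. The paper's route has the minor advantage that the node count per committee is reused elsewhere, but for this lemma alone your decomposition is tighter.
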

\begin{proof}
Recall that the node difficulty threshold, $p_n = (\lambda_n \log N) / (q \alpha)$, is fixed. By stable epoch-property, blockchain liveness (for number of rounds in a b-epoch), and the fact that at most $\rho$ fraction are Byzantine, the expected number of new honest nodes, in b-epoch $e-1$, is at least $(\lambda_n \mathcal{C}_{e-1} (1-\rho) \log N) / (\mu^2_b \lambda_s)$. By Chernoff bounds, except with exponentially low probability, the total number of honest nodes generated is at least $\Theta(\lambda_n \mathcal{C}_{e-1} \log N)$. (We drop the other constants unless necessary, as $\lambda_n$ controls the failure probability.) These nodes get randomly mapped to $\mathcal{C}_{e}$ committees. (Note that $\mathcal{C}_e = \Theta(\mathcal{C}_{e-1})$ as b-epochs $e-1$ and $e$ are stable, and the network size can change by at most a constant factor over a constant number of epochs.) Using a balls-and-bins argument, for large enough $\lambda_n$, the number of honest nodes mapped to a committee is at least $\Theta(\lambda_n \log N)$ with high probability. By Lemma \ref{lemma:peer-node-comm}, since each honest peer controls at most $c_m$ number of nodes in a committee, the number of honest peers mapped to a committee is at least $m = \Theta(\lambda_n \log N)$ with high probability. (We drop $c_m$ in $m$ too because an increase in $\lambda_n$ does not require $c_m$ also be increased by the same factor for \ref{lemma:peer-node-comm} to hold, for large enough $N^{1/y}$.)

Every node (in its committee) survives until the end of the next b-epoch once it enters the network, whether there is a dimension change or not. (If b-epoch $e-1$ is a transformation b-epoch, then nodes get mapped to a new hypercube. If b-epoch $e$ is transformation b-epoch, then the node is considered valid until the end of b-epoch $e$.) Our aim, in this lemma, is only to bound the number of honest nodes that get mapped and survive until the end of the next b-epoch. (We are not yet considering the success of join protocol, dimension change, etc.) But the peer controlling a node can leave the network. Thus, after two b-epochs, amounting up to two half-lives (via blockchain liveness), the expected number of honest peers in a committee is at least $m^{1/4}$. Applying a Chernoff bounds on survivability of an honest peer in a committee (since their failure is independent of other honest peers), the number of honest peers mapped to that committee that survive until the end of b-epoch $e$, is $\Theta(\lambda_n \log N)$ with high probability. Finally, applying a union bound on all the committees, the lemma holds with high probability, for a large enough constant $\lambda_n$.
\end{proof}

\begin{lemma}\label{lemma:DIRbandwidthcost}
If b-epoch $e$ is a stable epoch, then each peer receives at most $O(\log^2 N)$ \text{\normalfont\texttt{JOINING}} and $O(\log^2N)$ \text{\normalfont\texttt{REQ\_INFO}} messages in any round in b-epoch $e$ with high probability.
\end{lemma}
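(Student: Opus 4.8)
The plan is to trace every way an honest peer can receive a \texttt{JOINING} or \texttt{REQ\_INFO} message during a stable b-epoch, bound how many such messages the whole system generates per round, and then spread them over committees and buckets using the fact that the committee assigned to each valid node is uniform and independent (random oracle). Such a message is only ever produced by a valid new node running \texttt{JOIN}: \texttt{JOINING} in Step~2 to all nodes of the middle-aged bucket $b^c_m$ responsible for its committee $c$, in Step~4 (under the $\mu_s$-block lag) to the corresponding veteran bucket, and in Step~5 to all nodes of the committees in $C_{\mathit{rel}}$; \texttt{REQ\_INFO} only in Step~3, to $\lambda_j\log N$ uniformly sampled nodes of each bucket $b\in B^c$, once for each committee $k\in C_{\mathit{rel}}$. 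Hence a peer receives \texttt{JOINING} only through its non-directory nodes (Step~5, from new nodes joining that node's committee or a neighbouring committee) and through its directory nodes (Steps~2 and~4, from new nodes joining one of the $O(\mathcal{C}_e/\mathcal{B})$ committees that bucket is responsible for), and it receives \texttt{REQ\_INFO} only through its directory nodes. During a transformation b-epoch the split-state buckets run \texttt{DIR} for two committee--directory mappings at once, which at most doubles each count.

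Next I would bound the number $\Gamma$ of distinct valid new nodes whose join traffic can appear in one round. Honest joins are load-balanced across the rounds of an epoch by the Sybil-defense mechanism, so at the fixed threshold $p_n=(\lambda_n\log N)/(q\alpha)$ and with the stable-epoch bound on the network size, $O(\mathcal{C}_e\log N/\alpha)$ of them occur per round whp (Chernoff, via Remark~\ref{rem:exp-nodes}). The adversary can additionally dump a stockpile, but a directory node accepts a proof only if its block is among the most recent $\mu_s$ blocks, a window spanning only $\Theta(\beta)$ rounds by blockchain liveness, so the stockpile contains at most $O(\rho\,\beta\,\mathcal{C}_e\log N/\alpha)$ valid nodes whp. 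Substituting $\alpha=\Theta(\beta\sqrt N\log N)$ gives $\Gamma=O(\mathcal{C}_e/\sqrt N)$, so the mean number of these nodes per committee is $\Gamma/\mathcal{C}_e=o(1)$. The decisive point is that the stockpile cannot be concentrated: the probability that one hash query yields a valid node for a designated committee is $p_n/\mathcal{C}_e$, so the expected number of stockpiled nodes for any fixed committee is again $o(1)$; more fundamentally, the committee label of every valid node, honest or adversarial, is an output of the random oracle fixed at generation time, so whichever subset the adversary chooses to release is a subset of a pre-determined multiset whose per-committee counts are already $O(\log N)$ whp. A balls-into-bins/Poisson-tail estimate, with a union bound over the $\poly(N)$ committees and rounds, then yields in every round whp: each committee receives join traffic from $O(\log N)$ valid new nodes, and hence so does each committee together with its $O(\log N)$ hypercube neighbours (those $O(\log N)$ committees still carry an $O(\log N/\sqrt N)=o(1)$ expected share of the $\Gamma$ nodes); and each bucket, responsible for $O(\mathcal{C}_e/\mathcal{B})$ committees with $\Gamma/\mathcal{B}=O(\mathcal{C}_e\log N/N)=O(\log N)$ after substituting $\mathcal{B}=\Theta(\sqrt N/\log N)$, receives \texttt{JOINING} from $O(\log N)$ valid new nodes.

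Finally I would push these counts down to a single node and back up to the peer. A non-directory node of the peer receives $O(\log N)$ \texttt{JOINING} per round (Step~5). A directory node in bucket $b$ receives $O(\log N)$ \texttt{JOINING} per round (Steps~2 and~4); and for \texttt{REQ\_INFO}, each of the $O(\log N)$ relevant new nodes samples, independently for each of the $|C_{\mathit{rel}}|=O(\log N)$ committees $k$, a fresh set of $\lambda_j\log N$ of the $\Theta(\log^2 N)$ nodes of $b$, hitting our directory node with probability $\Theta(1/\log N)$ each time, for expectation $O(\log N)\cdot O(\log N)\cdot\Theta(1/\log N)=O(\log N)$ and $O(\log N)$ whp by Chernoff. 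By Lemma~\ref{lemma:peer-node-comm} a peer controls $O(\log N)$ non-directory nodes, and by the analogous Chernoff argument applied to block mining over the $T_{\mathit{dl}}=\Theta(\mathcal{K}_{\mathit{act}})$ most recent blocks it controls $O(\log N)$ directory nodes whp. Multiplying, the peer receives $O(\log N)\cdot O(\log N)=O(\log^2 N)$ \texttt{JOINING} (from non-directory and directory roles) and $O(\log^2 N)$ \texttt{REQ\_INFO} per round whp; a final union bound over the $\poly(N)$ rounds of the b-epoch completes the proof.

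The main obstacle is the combination of the stockpile bound with the no-concentration claim: one must argue simultaneously that (i) the adversary's stockpile is limited purely by its hash budget over the $\mu_s$-block acceptance window, which blockchain liveness pins to $\Theta(\beta)$ rounds, and (ii) because each valid node's committee is fixed at generation by the random oracle, the adversary's adaptive choice of which stockpiled nodes to release, against which of the $\mu_s$ admissible blocks, and in which round, cannot create more than $O(\log N)$ arrivals at any single committee or bucket. Everything else is routine balls-into-bins concentration plus the substitutions $\alpha=\Theta(\beta\sqrt N\log N)$ and $\mathcal{B}=\Theta(\sqrt N/\log N)$, with the only extra bookkeeping being the factor-two blow-up during transformation b-epochs, absorbed by noting that split-state buckets merely run two copies of \texttt{DIR}.
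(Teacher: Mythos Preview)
Your overall strategy---bound the number $\Gamma$ of valid new nodes that can land in one round (including the $\Theta(\beta)$-round precomputation window), push them through a balls-into-bins argument to get $O(\log N)$ arrivals per committee and per bucket, then multiply by the number of roles a peer plays---matches the paper's approach in spirit. But there is one genuine gap.

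The step ``by the analogous Chernoff argument applied to block mining over the $T_{\mathit{dl}}$ most recent blocks [a peer] controls $O(\log N)$ directory nodes whp'' is not valid across the whole parameter range. The active directory has $\mathcal{K}_{\mathit{act}}=\Theta(\sqrt N\log N)$ blocks, while the current network size can be as small as $N^{1/y}$. Fairness (or a direct Chernoff on block mining) then gives a per-peer share of order $\mathcal{K}_{\mathit{act}}/N_r$, which is $\Theta(\sqrt N\log N / N^{1/y})$ and hence $\omega(\log N)$ whenever $y>2$. So in the small-network regime a single peer can own far more than $O(\log N)$ directory blocks, and your final multiplication $O(\log N)\cdot O(\log N)=O(\log^2 N)$ breaks down. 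The paper deals with this by an explicit case split: when $N_r\le c_1\sqrt N\log^2 N$ it abandons the per-peer directory-count bound entirely and instead observes that the \emph{total} join rate for the whole directory is only $O(\log^2 N)$ per round, so even a peer owning every block is within budget; only when $N_r=\Omega(\sqrt N\log^2 N)$ does fairness yield the $O(\log N)$ directory-node bound that you use, and then the per-bucket load must be controlled (which is where the paper's Subcase~2 weighted argument over $N_b$ lives).

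A second, smaller point: your \texttt{REQ\_INFO} bound assumes the $\lambda_j\log N$ samples are genuinely uniform even for Byzantine joiners. The paper addresses this by deriving the sample set from the same hash output that produced the node (verifiable randomness), so a directory node can reject \texttt{REQ\_INFO} messages that do not target it under that derandomized rule; without making this explicit, a Byzantine new node could send all its \texttt{REQ\_INFO} to a single directory node.
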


\begin{proof}
As this proof is quite involved, we provide a high-level intuition. Let a directory node receive at most $R$ \texttt{REQ\_INFO} messages in a round. Let $T$ be the number of directory nodes controlled by an honest peer. Then, that honest peer needs to send $O(T R \log N)$ node entry information (through \texttt{COMM\_INFO} messages) if there are $O(\log N)$ nodes in each committee. As we shall subsequently prove, it turns out that $T = O(\log N)$. $R$ must be $O(\log N)$ so that the overall communication cost per round for the peer is $O(\log^3 N)$. We set $\lambda_{\mathit{jr}} = O(\log^2 N)$, the number of (valid) new nodes mapped to a bucket, so that on expectation, a directory node in that bucket needs to handle $O(\log N)$ \texttt{REQ\_INFO} messages. We focus on getting an upper bound on the join requests for any bucket per round. (We use the term ``join request'' in place of a new (valid) node.) Then, finally, we bound the number of \texttt{REQ\_INFO} messages using the bound on the join requests and the random sampling used for sending \texttt{REQ\_INFO} messages (cf. Algorithm \ref{alg:join}).

We split the proof into two cases: (1) the network size is at most $c_1 \sqrt{N} \log^2 N$ but greater than $N^{1/y}$ (minimum network size) in all rounds of b-epoch $e$, where $c_1$ is some constant, and (2) the network size is greater than $\Omega(\sqrt{N}\log^2 N)$ in any round. In case 1, the network size is comparable to the number of blocks in the directory. This means that a peer may control many blocks in the active directory, but the overall join rate (per round) itself turns out to be quite low. This is because about $\widetilde{\Theta}(\sqrt{N})$ nodes join in a b-epoch (which is $\widetilde{\Theta}(\beta\sqrt{ N})$ rounds due to blockchain liveness). In case 2, however, the join rate per round can go up to $\widetilde{\Theta}(\sqrt{N})$. But due to blockchain fairness, it turns out that a peer controls at most $O(\log N)$ blocks in the active directory, as the network size is considerably higher than the number of blocks in the active directory. This means that $\widetilde{\Theta}(\sqrt{N})$ join requests generated per round get load-balanced across the $\widetilde{\Theta}(\sqrt{N})$ buckets in the active directory, where there are sufficient peers to handle them. The tricky part is, of course, bounding the communication cost within the required log factors.

In a stable b-epoch $e$, the expected number of node join requests that can be generated per round (due to $p_n$) is at most $d = \Theta((\mathcal{C}_e \log N) / \alpha)$. Let $D_r$ be the actual number of valid join requests generated per round $r$ in b-epoch $e$.

\textit{Case 1.} A single peer may control many blocks in a directory, as the network size is comparable to the number of blocks in the active directory. However, the communication cost for an entire directory is low because the join rate (per round) is low.

The expected number of join requests per round is $d = O((\log^2 N) / \beta)$ because $\mathcal{C}_e = O(\sqrt{N} \log^2 N)$ (as the network size changes by at most a constant factor over a constant number of epochs, and a b-epoch is at most a constant number of epochs due to blockchain liveness). The adversary can choose to conduct a pre-computation attack, wherein it can send all its (valid) join requests right before the next block is confirmed. Due to blockchain liveness, the next block is confirmed in at most $O(\beta)$ rounds. In such a scenario, the expected number of join requests over a period of one block is $O(\log^2 N)$ (via linearity of expectation). In other words, the expected number of join requests in any round $r$ is $O(\log^2 N)$. This means that $D_r = O(\log^2 N)$ whp, by Chernoff bounds. Applying a union bound over all rounds in b-epoch $e$, the join rate (per round) is $O(\log^2 N)$ whp. Note that this is for an entire directory (and there are at most a constant number of directories in the active directory).

\textit{Case 2.} Due to the fairness property, for the case in which the network size is $\Omega(\sqrt{N} \log^2 N)$, we show that the number of blocks held by a \emph{single} peer in a directory is low. Let $N_{r}$ be the network size in round $r$ in b-epoch $e$. Let $x$ be the fraction of blocks generated by one peer in a segment of the chain of length $L$, then whp, by fairness property, for $\delta > 0$ and a constant $c_2$,
\begin{align*}
    x &\leq 1 - (1-\delta)\left(\frac{N_{r}-1}{N_{r}}\right)\\
      &\leq \delta + \frac{1-\delta}{N_{r}}\\
      &\leq \frac{c_2 \kappa}{L}.
\end{align*}
Since $L = \mathcal{K}_{\mathit{act}} = \Theta(\alpha / \beta) = \Theta(\sqrt{N} \log N)$ (by Section \ref{subsec:setparam}) and $N_{r} = \Omega(\sqrt{N} \log^2 N)$, the second term in the second inequality can be neglected for a large enough constant $c_2$. That peer gets at most $x L \leq c_2 \kappa = O(\log N)$ blocks. (As the worst case scenario, assume that these blocks occur in different buckets.) This needs to hold for all the peers and across all the $O(\alpha)$ rounds in the b-epoch. Applying a union bound on all the peers and rounds, each peer has at most $O(\log N)$ blocks in the active directory in any round, with a high probability.

Let us calculate an upper bound for the number of valid join requests per bucket. Let $C_b$ be the set of committees that the bucket $b$ is responsible for. We will split this analysis into two subcases. First, we will upper bound the number of join requests for the committees in $C_b$. Then, we will upper bound the number of join requests arising from the neighbouring committees of each committee in $C_b$.

Before that, let us get an upper bound on $D_r$ for any round $r$ as it is required for the analysis of the two subcases. Using the stable b-epoch property, and when the network size is $\Omega(\sqrt{N} \log^2 N)$,
\begin{align*}
    d &= \Theta\left( \frac{\mathcal{C}_e \log N}{\alpha}\right) = \Omega\left(\frac{\log N}{\beta}\right).
\end{align*}
As the worst case scenario (for obtaining an upper bound), the adversary can conduct a similar pre-computation attack as in Case 1. In such a scenario, the above expected number of join requests over a period of $O(\beta)$ rounds (via blockchain liveness), gets multiplied by a factor of $O(\beta)$ (via linearity of expectation). This means that for a large enough constant $\lambda_n$, $D_r = \Theta((\beta \mathcal{C}_e \log N) / \alpha)$ with high probability (via Chernoff bounds).

\textit{Case 2, Subcase 1.} Let the upper bound for the expected number of join requests for bucket $b$ in round $r$ be denoted as $u_r$. Recall that the network size can be at most $N$ (and $C_e \leq \lambda_s N$ due to stable b-epoch property). Applying the upper bound on the network size, and using $\lambda_{\mathit{jr}} = O(\log^2 N)$,
\begin{align*}
    u_r \leq \frac{D_r}{\mathcal{B}} &= \Theta\left( \frac{\beta \mathcal{C}_e \log N}{\alpha \mathcal{B}}\right) = \Theta\left( \frac{\beta \mathcal{C}_e \lambda_{\mathit{jr}} \log N}{\beta N \log^2 N}\right)\\
    &= O(\log N).
\end{align*}
Applying Chernoff bounds (via balls-and-bins argument), the number of \texttt{JOINING} requests for each directory node in bucket $b$ is $O(\log N)$ in a round of b-epoch $e$.

\textit{Case 2, Subcase 2.} Let the set of all neighbouring committees of each committee in $C_b$ be denoted as $N_b$. Let the upper bound for the expected number of join requests arising from committees in $N_b$ be denoted as $u^n_r$. Let us denote the number of links of a committee $c$ in $N_b$ to committees in $C_b$, as $w_c$. If a join request is sent for committee $c$, then $w_c$ requests are sent to bucket $b$ (as part of getting to know the neighbouring committees' nodes). Let the maximum number of neighbouring links for a committee in the topology be $M = \log N$. Thus, the sum of $w_c$ over all committees $c$ belonging to $N_b$ is at most $M |C_b|$.

Let $X(j)$ be the random variable that takes input $P_{\mathit{join}}$ of a valid join request, and is equal to $w_c / M$ if the join request $j$ maps the node to committee $c \in N_b$, and 0 otherwise. Note that $X(j)$ over the join requests are independent and belong to $[0, 1]$. Recall that the committee-directory mapping (in any dimension) is set such that for any bucket $b$, $|C_b| \approx \mathcal{C}_e / \mathcal{B}$. The expected value of $X(j)$ for a join request $j$ is,
\begin{align*}
    \sum_{c \in N_b} \frac{1}{C_e} \cdot \frac{w_c}{M} = \frac{M|C_b|}{C_e M} = O\left(\frac{1}{\mathcal{B}}\right).
\end{align*}
Summing over all join requests,
\begin{align*}
    \sum_{j} \mathbb{E}[X(j)] &\leq O\left(\frac{D_r}{\mathcal{B}} \right)  = O(\log N),
\end{align*}
where the last equation is from the analysis of subcase 1. Applying Chernoff bounds, the summation is $O(\log N)$ with high probability. Now, multiplying the summation by $M$, which is $O(\log^2 N)$, gives the total number of new nodes that send \texttt{REQ\_INFO} message to directory nodes in that bucket. On expectation, a directory node in that bucket receives at most $O(\log N)$ (due to random sampling in Algorithm \ref{alg:join}). And finally, by applying Chernoff bounds and union bound, each directory node receives at most $O(\log N)$ \texttt{REQ\_INFO} messages whp.

Byzantine peers may not do the random sampling. But it can be enforced by the usage of hash function (i.e., verifiable randomness). The directory nodes in the bucket are ordered by the block numbers. Thus, $(\log \log N)$ bits are necessary to represent each of them. Appending $1, 2, \dots, O(\log \log N)$ to the input that provided the valid node, and applying the hash function, provides $O(\log N \log \log N)$ random bits, that can be used to sample $O(\log N)$ directory nodes in the bucket. Note that $q$, which is the maximum number of hash queries by a peer per round, is quite large in practice. These extra $\polylog(N)$ hash queries per round are negligible, compared to the number of hash queries required to mine a node (which is set by the difficulty threshold).

Since a peer controls at most $O(\log N)$ directory nodes in the active directory in any round, then the total number of \texttt{JOINING} and \texttt{REQ\_INFO} requests for a peer is at most $O(\log^2 N)$ per round.

Applying a union bound over all peers and rounds, each peer receives at most $O(\log^2 N)$ \texttt{JOINING} messages and $O(\log^2 N)$ \texttt{REQ\_INFO} messages with high probability.
\end{proof}

\begin{theorem} \label{th:totalbandwidth}
If the b-epochs $e, e-1, \dots, z$, where $z = \mathrm{max}(1, e-\lceil \lambda_{l} \mu_b \rceil)$, are stable b-epochs, then whp, b-epoch $e$ is bandwidth-adequate.
\end{theorem}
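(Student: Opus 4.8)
The plan is to enumerate every category of message that an honest peer sends or receives for overlay maintenance during b-epoch $e$, bound each category by $O(\log^3 N)$ per round whp, and conclude with a union bound over the constantly many categories, all peers, and all $\poly(N)$ rounds of b-epoch $e$. Throughout I would use the structural facts already available under the stable-b-epoch hypothesis: a peer controls $O(\log N)$ non-directory nodes and at most $c_m$ per committee (Lemma \ref{lemma:peer-node-comm}), every committee holds $O(\log N)$ nodes (Lemma \ref{lemma:totalnodes}), and --- as established inside the proof of Lemma \ref{lemma:DIRbandwidthcost} --- a peer controls $O(\log N)$ directory nodes of the active directory. I only count messages an honest peer actually processes: a Byzantine peer may emit more, but honest peers discard anything beyond a fixed $O(1)$ quota per connection and act only on join requests carrying a valid proof, whose rate is capped by the node-mining difficulty.

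\emph{Directory role.} By Lemma \ref{lemma:DIRbandwidthcost}, an honest peer receives $O(\log^2 N)$ \texttt{JOINING} and $O(\log^2 N)$ \texttt{REQ\_INFO} messages per round across all its directory nodes. Each \texttt{REQ\_INFO} is answered by one \texttt{COMM\_INFO} carrying the entry information of the $O(\log N)$ nodes of the relevant committee (Lemma \ref{lemma:totalnodes}), so the reply traffic is $O(\log^3 N)$ per round. \emph{Committee-membership role.} When a peer's new node runs \texttt{JOIN}, it contacts the $O(\log N)$ committees in $C_{\mathit{rel}}$, querying $\lambda_j\log N$ directory nodes in each of the $O(1)$ buckets of $B^c$ (the active directory spans $\mathcal{B}_{\mathit{act}}/\mathcal{B}=O(1)$ ordinary directories, since $\mathcal{K}_{\mathit{act}}=\Theta(\mathcal{K})=\Theta(\sqrt N\log N)$), thus receiving $O(\log^2 N)$ \texttt{COMM\_INFO} messages, i.e.\ $O(\log^3 N)$ entries, and then sends \texttt{JOINING} to the $O(\log^2 N)$ nodes of those committees. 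The expected number of nodes a peer mines in a round is $\lambda_n\log N/\alpha = N^{-\Omega(1)}$ (as $\alpha=\Theta(\beta\sqrt N\log N)$), so a Chernoff bound and a union over all peers and rounds show that no peer mines more than $O(1)$ nodes in any round whp, keeping the per-round join cost at $O(\log^3 N)$. On the receiving side, an existing node in committee $c$ gets a \texttt{JOINING} from each valid node joining $c$ or a neighbour of $c$; the total valid join rate per round $D_r$ bounded in the proof of Lemma \ref{lemma:DIRbandwidthcost}, spread over $\mathcal{C}_e$ committees, gives an expectation of $o(1)$ joins per committee per round, hence $O(\log N)$ whp, so an existing node receives $O(\log N)$ relevant committees $\times\,O(\log N)$ joins $= O(\log^2 N)$ such messages and a peer $O(\log^3 N)$.

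\emph{Dissemination.} A node is adjacent to $O(\log N)$ honest nodes in each of its $O(\log N)$ neighbouring committees and to the $O(\log N)$ nodes of its own committee (partition-resilience plus bounded committee size), i.e.\ $O(\log^2 N)$ links; a block broadcast places $O(1)$ messages on each link, and since $\beta\gg\Delta=O(\log N)$, blockchain liveness implies $O(1)$ broadcasts are in flight in any round, so block traffic is $O(\log^3 N)$ per round for a peer. The same bound covers the once-per-b-epoch \texttt{EST\_INFO} broadcast of \texttt{NET\_SIZE\_EST} (the chosen committee broadcasts its $O(\log N)$ phase-1 joins; the other nodes' tracking of phase-1 joins is already counted above), while parameter-agreement writes are $O(1)$ per block. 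If $e$ is (part of) a dimension change, every peer runs \texttt{DIR} and \texttt{JOIN} for two hypercubes rather than one, merely doubling all of the above. Summing the $O(1)$ categories gives $O(\log^3 N)$ per round, and a final union bound over all peers and all $\Theta(\alpha)=\poly(N)$ rounds of b-epoch $e$ finishes the proof.

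\emph{Main obstacle.} The delicate part is the committee-membership accounting: fitting both the join-reply (\texttt{COMM\_INFO}) traffic and the incoming \texttt{JOINING} traffic inside $O(\log^3 N)$ requires the per-round per-peer node-mining bound (to cap how many of a peer's nodes simultaneously run \texttt{JOIN}) together with the per-committee per-round join-rate bound, and both of these ultimately rest on the balls-in-buckets and fairness arguments of Lemma \ref{lemma:DIRbandwidthcost}; one also has to verify $|B^c|=O(1)$, which is exactly the statement that $\mathcal{K}_{\mathit{act}}=\Theta(\mathcal{K})$. Everything else is bookkeeping over a constant list of message types.
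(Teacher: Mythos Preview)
Your proposal is correct and follows essentially the same decomposition as the paper's proof: split the per-round traffic by role (directory node, non-directory/committee node, joining node) and bound each via Lemmas \ref{lemma:totalnodes}, \ref{lemma:peer-node-comm}, and \ref{lemma:DIRbandwidthcost}. You are in fact more thorough than the paper, which does not explicitly account for block/\texttt{EST\_INFO} dissemination, the per-round node-mining cap, incoming \texttt{JOINING} traffic at existing committee nodes, or the dimension-change doubling that you handle.
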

\begin{proof}
We first show that the communication cost for the non-directory nodes and directory nodes controlled by any honest peer is $O(\log^3 N)$ with high probability. Moreover, any newly joining node also sends or receives at most $O(\log^3 N)$ messages with high probability. In other words, there is sufficient bandwidth for peers to correctly execute the overlay protocols.

By Lemma \ref{lemma:totalnodes}, each committee can have at most $O(\log N)$ nodes with high probability.

\textit{Non-directory node.} By Lemma \ref{lemma:peer-node-comm}, any honest peer controls at most $O(\log N)$ nodes in the network and at most $c_m$ nodes in any committee in b-epoch $e$. This allows an honest peer to broadcast $O(\log N)$ messages from a committee and simultaneously relay $O(1)$ messages from all its nodes (in different committees), to all neighbouring nodes. For example, during network size estimation, after phase 1, all nodes of a (random) committee broadcast the entry information of the nodes that joined the committee during phase 1. In that case, a node in that committee would be broadcasting entry information of $O(\log N)$ nodes to all the $O(\log^2 N)$ neighbouring nodes in a single round.

\textit{Directory node.} By Lemma \ref{lemma:DIRbandwidthcost}, any honest peer receives at most $O(\log^2 N)$ \texttt{JOINING} and \texttt{REQ\_INFO} messages in any round of b-epoch $e$. A directory node has to respond to a \texttt{REQ\_INFO} with a \texttt{COMM\_INFO} message consisting of entry information of nodes in a committee. Thus, the communication cost for a peer due to \texttt{COMM\_INFO} is $O(\log^3 N)$.

\textit{Joining node.} The node sends \texttt{JOINING} messages to all directory nodes in one middle-aged bucket. This amounts to $O(\log^2 N)$ messages in total, as there are $O(\log N)$ directory nodes in a bucket. The communication cost due to \texttt{REQ\_INFO} messages is the same as that of \texttt{JOINING} messages, as they are sent to just $O(\log N)$ directory nodes in $O(\log N)$ buckets. As shown above, honest peers within the network can handle all the join requests, and send back (valid) committee entry information in \texttt{COMM\_INFO} messages, which amounts to $O(\log^3 N)$ messages (by Lemma \ref{lemma:totalnodes}). Finally, the node sends \texttt{JOINING} messages to all nodes in $O(\log N)$ committees, resulting in $O(\log^2 N)$ messages (by Lemma \ref{lemma:totalnodes}).

\end{proof}

\begin{theorem} \label{the:actdir-robust}
If the b-epochs $e, e-1, \dots, z$ where $z = \mathrm{max}(1, e- \lceil\lambda_{ \mathit{dl}} \mu_b \rceil)$, are bandwidth-adequate b-epochs, then in b-epoch $e$, the active directory is robust with high probability.
\end{theorem}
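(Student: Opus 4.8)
The plan is to verify the two conditions of Definition~\ref{def:actrobust} in turn and then finish with union bounds over the $\mathcal{B}_{\mathit{act}} = O(\sqrt N)$ buckets of the active directory, the $\poly(N)$ honest nodes alive during b-epoch $e$, and the $O(\alpha)$ rounds of the b-epoch.

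\emph{Condition 1: every bucket of the active directory has at least $\lambda_b\log^2 N$ honest nodes.} Fix such a bucket $b$; it consists of $\lambda_d\log^2 N$ consecutive confirmed blocks, which is an $\Omega(\kappa/\delta)$-length segment since $\kappa = \Theta(\log N)$ and $\delta$ is constant. By $\delta$-approximate fairness, the honest peers --- who hold a $1-\rho$ fraction of the hash power --- mined at least $(1-\delta)(1-\rho)\lambda_d\log^2 N = \Omega(\log^2 N)$ of these blocks; let $A$ be the corresponding set of honest directory nodes. Since each honest peer owns one unit of hash power and identities come from the random oracle, the miners of the blocks in $A$ are essentially uniform among the $\ge N^{1/y}$ honest peers present at the time, so a balls-into-bins bound shows that whp no honest peer mined more than a constant number of them (the maximum load is $O(1)$ because the network is polynomially larger than $\log^2 N$). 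A bucket still in the active directory in b-epoch $e$ has all its directory nodes within their lifetime $T_{\mathit{dl}}$, which by blockchain liveness spans at most a constant number of epochs; since honest-peer failures are independent and occur with probability at most $1/2$ per epoch, each miner of a block in $A$ is still present throughout b-epoch $e$ independently with probability at least some constant $p_s$. With $O(1)$ load per peer and independent survival, a Chernoff bound gives that at least $\lambda_b\log^2 N$ directory nodes of $b$ are honest and present, for a suitable $\lambda_b$ (taking $\lambda_d$ large enough).

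\emph{Condition 2: the entry information of every honest node is held by the honest directory nodes of the appropriate bucket.} Fix an honest node $q$ of committee $c$, controlled by a peer $u$, present during b-epoch $e$. A non-directory node lives $T_l = \lambda_l\alpha/\beta < T_{\mathit{dl}}$ blocks, so $q$ joined in some b-epoch $e' \ge z$. By Steps~2 and~4 of Algorithm~\ref{alg:join}, $u$ sent $(\texttt{JOINING},\,\text{entry info of }q)$ to every directory node of the bucket responsible for $c$ that was middle-aged at $q$'s join time; because $u$ uses the (slightly stale) chain of the introductory service and honest confirmed chains differ by at most $\mu_s$ blocks, the extra \texttt{JOINING} of Step~4 together with the $\Delta$-round delay built into every phase transition guarantees that the honest directory nodes of the correct bucket $b^c_m$ are indeed middle-aged when they receive and process $q$'s message. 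The adversary cannot drop honest messages, and b-epoch $e'$ is bandwidth-adequate by hypothesis, so those honest directory nodes have the bandwidth to process all their \texttt{JOINING} messages in that round and, being middle-aged, store $q$'s entry information. By the phase-transition rules and the relation $T_{\mathit{dl}} > (1+\mathcal B)\lambda_d\log^2 N + T_l$, bucket $b^c_m$ remains alive --- middle-aged for $\mathcal K$ blocks and then veteran for at least $T_l$ further blocks --- and hence keeps replying about $q$ throughout the interval $[b_l, b_l + T_l]$ during which $q$ is valid, while $b^c_m \in B^c$ so that new joiners to $c$ still query it. Finally, Condition~1 applied to $b^c_m$ (whose blocks lie in b-epochs $\ge z$, all bandwidth-adequate) shows that at least $\lambda_b\log^2 N$ of its honest directory nodes are present during b-epoch $e$, and every one of them stored $q$. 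A dimension change occurring within $q$'s lifetime is handled identically using the split-state buckets of Section~\ref{subsec:dimchange}.

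\emph{Main obstacle.} The delicate part is the timing bookkeeping in Condition~2: one must simultaneously control the staleness (by a constant number of blocks) of the introductory service's chain, the $\le\mu_s$-block discrepancy among honest confirmed chains, the $\Delta$-round delays on the Infant$\to$Middle-aged$\to$Veteran$\to$Dead transitions, and the lifetime relation $T_{\mathit{dl}} > (1+\mathcal B)\lambda_d\log^2 N + T_l$, so as to be certain both that $q$ registers with a bucket that is middle-aged at processing time and that this same bucket is still alive and responsive for all of $q$'s lifetime. The counting part (fairness, balls-into-bins, Chernoff over independent churn) is comparatively routine.
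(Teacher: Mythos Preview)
Your proof is essentially correct and follows the same overall structure as the paper: establish Condition~1 via fairness (honest fraction of blocks), bound the number of blocks per peer so that survival over the $O(1)$ epochs of a directory lifetime can be handled by a Chernoff bound on (nearly) independent Bernoulli variables, and then observe that Condition~2 reduces to Condition~1 once bandwidth-adequacy guarantees every \texttt{JOINING} message is processed.

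The one methodological difference is how you bound the number of blocks in a bucket belonging to a single peer. You invoke a balls-into-bins argument---``miners of the blocks in $A$ are essentially uniform among the $\ge N^{1/y}$ honest peers''---to get $O(1)$ load per peer. The paper instead applies the $\delta$-approximate fairness property to the complement of a single peer (the remaining $N_r-1$ peers with hash-power share $(N_r-1)/N_r$) to conclude that any one peer owns at most $c\kappa = O(\log N)$ blocks of the $\lambda_d\log^2 N$ in a bucket. Both bounds suffice for the scaled-variable Chernoff step, but the paper's derivation is grounded directly in the stated blockchain assumption, whereas your uniform-miner premise, while very plausible under the equal-hash-power random-oracle model, is not literally implied by $\delta$-approximate fairness alone. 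If you want to match the paper's level of rigor, replace the balls-into-bins step by the fairness-to-a-single-peer argument; otherwise your proposal is sound and, on Condition~2, actually more explicit about the $\mu_s$/$\Delta$/lifetime bookkeeping than the paper itself.
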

\begin{proof}
For achieving property 2 of a robust active directory (Defintion \ref{def:actrobust}) over a period of b-epoch, there are two key requirements: (1) the entry information of a new node $q$ should be stored by honest nodes in the appropriate middle-aged bucket $b$, and (2) throughout the lifetime of a node $u$, its entry information is stored by honest nodes in bucket $b$. For the first requirement, the honest peers controlling those directory nodes should have enough bandwidth to receive all the join requests. This can be ensured as it is already given that b-epoch $e$ and the last $\lceil \lambda_{dl} \mu_b \rceil$ b-epochs are bandwidth-adequate. And both requirements are contingent on the fact that there are a sufficient number of honest nodes in each bucket. Therefore, we focus on proving that property 1 of a robust active directory (Defintion \ref{def:actrobust}) is achieved with high probability.

Let us calculate the probability that any honest directory node does not leave before completing the lifetime of a directory node. Since the lifetime is set to be $T_{\mathit{dl}}$ blocks, which amounts to at most $\lambda_{\mathit{dl}} \mu_b$ epochs. The probability that an honest directory node would stay, denoted by $P$, is at least $2^{-\lambda_{\mathit{dl}} \mu_b}$.


A constant fraction of the directory nodes in a bucket are honest using blockchain fairness. The number of honest blocks (or directory nodes) in a bucket, denoted by $M$, is at least $(1-\rho)(1-\delta)\lambda_{d} \log^2 N$, with high probability, for some $\delta \leq O(1/\log N)$.

Again, due to blockchain fairness, any peer controls at most $O(\log N)$ directory nodes in a bucket. Let us say $x$ is the fraction of blocks generated by one peer in one bucket, then whp by fairness property, for a small $\delta' > 0$, and where $L = \lambda_d \log^2 N$ is the number of blocks in a bucket and $N_{r}$ is the network size in round $r$ in b-epoch $e$,
\begin{align*}
    x &\leq 1 - (1-\delta)\left(\frac{N_{r}-1}{N_{r}}\right)\\
      &\leq \delta' + \frac{1-\delta}{N_{r}}\\
      &\leq c\frac{\kappa}{L},
\end{align*}
for some suitable constant $c > 0$. Note that the second term in the second inequality is negligible, as $N_r \geq N^{1/y}$ in any round. Thus, the number of blocks generated is at most $x L \leq c \kappa = c_1 \log N$, where $c_1 > 0$ is a constant, and $\kappa = O(\log N)$. Applying a union bound on all the peers, each peer has at most $m = O(\log N)$ blocks in a bucket.

Let us now show that there are always at least $\lambda_{b} \log^2 N$ honest directory nodes in a bucket. Let $N(p)$ be the number of blocks generated by an honest peer $p$ in a certain bucket. (Note that this is fixed once the bucket is formed.) Let $X(p)$ be the random variable that is equal to $N(p) / m$ if $p$ stays for the entire directory node lifetime, otherwise equal to 0. The expected value of the summation over all the honest peers is $\sum_{p} (P N(p)) / m = (P M) / m$. For large enough $\lambda_{d}$, applying a Chernoff bound, the value of the summation is $\Theta((P M) / m)$ with high probability. (Note that if $\lambda_d$ is increased, then the size of the bucket increases, but this can be compensated by a constant factor decrease in the number of buckets leading to a constant factor increase in the bandwidth cost.) Multiplying the summation by $m$, gives the total number of honest nodes that stay for the entire directory node lifetime, which is $\lambda_b \log^2 N$.

Applying a union bound on the $\mathcal{B}_{\mathit{act}}$ buckets at the start of b-epoch $e$ and the buckets generate in b-epoch $e$, this holds for all the buckets in the active directory in b-epoch $e$. Thus, property 1 holds with high probability. As discussed earlier, property 2 automatically follows if property 1 is satisfied for all the required buckets.
\end{proof}

\begin{lemma}\label{lemma:join-succ}
If a b-epoch is bandwidth-adequate and the active directory is robust in that b-epoch, then the \text{\normalfont\texttt{JOIN}} protocol is successful with high probability.
\end{lemma}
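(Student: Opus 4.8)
The plan is to run through the five steps of the \texttt{JOIN} protocol (Algorithm~\ref{alg:join}) and verify, under the hypotheses that the active directory is robust (Definition~\ref{def:actrobust}) and that the b-epoch is bandwidth-adequate (Definition~\ref{DEFbandadequate}), that both requirements of Definition~\ref{DEFjoin} hold with high probability. First I would isolate two structural consequences of robustness used throughout: (i)~every bucket the protocol contacts has exactly $\lambda_d\log^2 N$ directory nodes (one per block), of which at least $\lambda_b\log^2 N$ are honest and alive, so a uniformly random node of that bucket is honest and alive with probability at least the fixed constant $\lambda_b/\lambda_d$; and (ii)~for each committee $k$, every honest directory node of the bucket responsible for $k$ under the committee--directory mapping holds the entry information of \emph{every} honest node currently in $k$.

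The heart of the argument is property~1, that $q$ learns all honest nodes of its committee $c$ and of every committee in $C_{\mathit{rel}}$. In Step~3, for each $k\in C_{\mathit{rel}}$ and each of the $O(1)$ active-directory buckets responsible for $k$, $q$ queries $\lambda_j\log N$ directory nodes chosen uniformly and independently; by (i) and a Chernoff bound, for a large enough constant $\lambda_j$ at least one queried node is honest and alive except with probability $N^{-\Omega(1)}$, and since $|C_{\mathit{rel}}| = O(\log N)$ there are only $O(\log N)$ such queries in this join, so a union bound over them (and over the $O(N\log N)$ nodes joining during the b-epoch, cf.\ Remark~\ref{rem:exp-nodes}) shows that whp every query reaches an honest directory node. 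By (ii) and bandwidth adequacy that node returns, via a \texttt{COMM\_INFO} message, the full set of honest nodes of $k$; because each reported entry can be locally checked against the proof-of-work predicate and its committee identifier, a Byzantine directory node can only omit honest entries or report genuinely valid (possibly Byzantine) ones, so the union $U$ that $q$ forms over all received \texttt{COMM\_INFO} messages contains all honest nodes of $k$ and no invalid ones --- the latter keeps $|U| = O(\log N)$ whp (Lemma~\ref{lemma:totalnodes}), so $q$'s subsequent sends stay within budget. This is exactly property~1.

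Property~2 --- that $q$ delivers its own entry information to all honest nodes of $c$ and of $C_{\mathit{rel}}$ --- then follows: in Step~5, $q$ sends \texttt{JOINING} to every address in $U$, which by property~1 includes all honest nodes of every committee in $C_{\mathit{rel}}$, and by the synchronous model these messages arrive within the round. The recurring subtlety is that $q$ acts on the (possibly stale) chain supplied by the introductory service while honest peers' confirmed chains may differ by up to $\mu_s$ blocks, so $q$'s view of which buckets are in the active directory, which are middle-aged, and which is responsible for $c$ can differ slightly from a given honest peer's view. I would argue that the $\Delta$-delays attached to the Middle-aged$\to$Veteran and Veteran$\to$Dead transitions keep each contacted directory node behaving as $q$ expects for long enough, while the extra registration with $b_2$ in Step~4 --- the bucket preceding $b^c_m$ in the same committee--directory slot --- covers the remaining case, where the bucket $q$ treats as middle-aged is not yet confirmed from a lagging peer's view (the Infant$\to$Middle-aged transition has no $\Delta$-delay, which is exactly why this fallback is needed); together these ensure $q$ reaches whichever buckets are genuinely active and middle-aged-for-$c$ from every honest peer's vantage point, which is also what keeps Definition~\ref{def:actrobust}(2) true going forward.

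I expect this last reconciliation --- bounding the disagreement between $q$'s possibly-stale chain and an arbitrary honest peer's chain by $O(\mu_s)$ blocks and checking that the $b_1$/$b_2$ device together with the $\Delta$-delays covers all resulting cases exactly --- to be the main obstacle; the sampling and union-of-\texttt{COMM\_INFO} arguments are routine once (i) and (ii) are in hand. A secondary point is the bookkeeping that makes ``whp'' survive all the union bounds: one checks that the number of buckets $q$ contacts is $O(\log N)$ (from $|C_{\mathit{rel}}| = O(\log N)$ and $|B^c| = O(1)$) and that the number of joins per b-epoch is $O(N\log N)$ (from the fixed node-generation rate), both polynomial in $N$, so that choosing $\lambda_j$ (and $\lambda_n$) large enough drives the post-union failure probability below $N^{-k}$.
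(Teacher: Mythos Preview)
Your proposal is correct and follows essentially the same approach as the paper: the core argument is exactly the paper's --- each bucket has an honest fraction at least $\lambda_b/\lambda_d$, so the $\lambda_j\log N$ independent samples miss all honest directory nodes with probability at most $(1-\lambda_b/\lambda_d)^{\lambda_j\log N}\le N^{-c}$, followed by a union bound over the $O(\log N)$ contacted buckets. You add more than the paper's proof does: the paper's argument stops at the sampling bound and the single union bound, leaving property~2 implicit (it follows immediately once $U$ contains all honest nodes) and deferring the $\Delta$-synchrony reconciliation and the outer union bound over all joins in the b-epoch to the protocol description and to Theorem~\ref{the:partres} respectively; your treatment of these is sound but not needed for this lemma as stated.
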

\begin{proof}
Recall that an implication of a b-epoch being bandwidth-adequate is that the directory nodes can handle, i.e., receive and respond to all join requests sent to them. Moreover, since the active directory is robust, the (honest) directory nodes in each bucket have the correct membership information.

Let us first focus on any one bucket that the new node sends to join request to. The probabilistic guarantee of the success of \texttt{JOIN} protocol because of the random directory node sampling done for \texttt{REQ\_INFO} messages. If an honest directory node in that bucket receives \texttt{REQ\_INFO}, then it sends the required entry information to the new node. This is sufficient because the Byzantine directory nodes can only under-represent the committee nodes. 

Thus, we calculate the probability that the new node samples an honest directory node in that bucket. Since it picks $\lambda_j \log N$ uniform and independent samples, for large enough $\lambda_j$, the probability that none of them honest nodes is $\left[1 - \frac{\lambda_b}{\lambda_d}\right]^{\lambda_j \log N} \leq \frac{1}{N^{c}}$, for some constant $c > 2$.

Applying a union bound over all the $O(\log N)$ buckets that the new node contacts, the \texttt{JOIN} protocol is successful with a high probability.
\end{proof}

\begin{theorem} \label{the:partres}
If the b-epochs $e, e-1, \dots, z$ where $z = \mathrm{max}(1, e-\lceil \lambda_{l} \mu_b \rceil)$, are stable and bandwidth-adequate, and if the active directory is robust in those b-epochs, then whp, the overlay network is partition-resilient in b-epoch $e$.
\end{theorem}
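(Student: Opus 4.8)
The plan is to verify the three properties of Definition~\ref{DEFpartres} one at a time, assembling the lemmas already proven. Property~1 is almost immediate: the bound of $O(\log N)$ nodes per committee is exactly Lemma~\ref{lemma:totalnodes}, which applies since b-epochs $e, e-1, \dots, z$ are stable; and the lower bound of $\lambda_p \log N$ honest peers per committee is Lemma~\ref{lemma:hon-peer-lower} (applied to the pair $e-1, e$, both in the stable range), noting that each such honest peer contributes at least one honest node to that committee. A union bound over the polynomially many committees preserves the high-probability guarantee. Note also that the hypotheses cover exactly the window in which any node valid during b-epoch $e$ could have joined, since a non-directory node has lifetime $\lambda_l \alpha/\beta$ blocks, i.e.\ about $\lambda_l \mu_b$ b-epochs, which is why $z = \max(1, e - \lceil \lambda_l \mu_b \rceil)$ is the right cutoff.

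For Properties~2 and~3 (connectivity), the plan is to first invoke Lemma~\ref{lemma:join-succ}: since each b-epoch in the range is bandwidth-adequate and has a robust active directory, every \texttt{JOIN} executed by a currently-valid honest node was successful whp. Concretely, when an honest node $q$ joined, it (i) learned the entry information of every honest node then present in its own committee and in each of its neighbouring committees, and (ii) pushed its own entry information to every such node. The key step is then a ``first-to-join wins'' argument: take any two honest nodes $u, v$ that are simultaneously valid at some round of b-epoch $e$ and that lie in the same committee or in neighbouring committees, and assume WLOG that $u$ joined no later than $v$. At the round $v$ ran \texttt{JOIN}, node $u$ was valid and --- because the active directory was robust throughout $u$'s lifetime (property~2 of Definition~\ref{def:actrobust}) --- $u$'s entry information was held by the honest directory nodes of the bucket responsible for $u$'s committee; hence $v$ obtained $u$'s entry information and added the connection, while at the same round $v$ pushed its own entry information to $u$. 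So $u$ and $v$ are mutually connected. Specializing to $u, v$ in the same committee gives Property~2; fixing an honest node $u$ and one of its neighbouring committees $k$, and recalling that by Lemma~\ref{lemma:hon-peer-lower} committee $k$ contains $\Omega(\log N)$ honest nodes at every round of b-epoch $e$, each connected to $u$ by the preceding argument, gives Property~3. A final union bound over all committees, all pairs of honest nodes, and all rounds of b-epoch $e$ finishes the argument.

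Two points will need care. First, the b-epoch boundary / dimension-change case: a node that joined in a previous (stable) b-epoch is still valid in b-epoch $e$ because it survives to the end of the following b-epoch, and the hypercube it was placed in is precisely the one the network operates in during b-epoch $e$ (a switch to a new dimension happens only at a b-epoch boundary, and during a transformation b-epoch the directory serves both committee--directory mappings, per Section~\ref{subsec:dimchange}); so the notion of ``neighbouring committee'' is consistent across all nodes simultaneously valid in b-epoch $e$, and the \texttt{JOIN}-based connectivity argument goes through unchanged. Second, one must use that the adversary can only \emph{under}-represent a committee's membership to a joining node --- this is built into the \texttt{JOIN} protocol ($q$ takes the union over the received \texttt{COMM\_INFO} replies) and is precisely why hearing from a single honest directory node per bucket suffices in Lemma~\ref{lemma:join-succ}. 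I expect the ``first-to-join wins'' temporal bookkeeping --- confirming that for \emph{every} pair of simultaneously-valid honest nodes the later joiner could recover the earlier joiner's entry information from a directory that was still robust at that time, including across b-epoch boundaries --- to be the main thing that needs to be spelled out carefully; everything else is assembling already-established pieces.
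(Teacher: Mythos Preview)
Your proposal is correct and follows the same route as the paper: Property~1 via Lemmas~\ref{lemma:totalnodes} and~\ref{lemma:hon-peer-lower}, and Properties~2--3 via Lemma~\ref{lemma:join-succ} applied to every honest node that joined in the last $\lceil\lambda_l\mu_b\rceil$ b-epochs. Your ``first-to-join wins'' bookkeeping and the dimension-change remarks make explicit what the paper compresses into the single sentence ``Observe that if the join protocols of all the nodes that joined in the last $\lambda_l\mu_b$ [b-epochs] are successful, then the requirements for the connections among peers are also ensured,'' so you are strictly more careful than the paper here.
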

\begin{proof}
Since the last $\lambda_{l} \mu_b$ b-epochs are stable, the first property of partition-resilience is ensured with high probability by Lemma \ref{lemma:totalnodes} and \ref{lemma:hon-peer-lower}, for a large enough $\lambda_n$. (For the first b-epoch, there exists at least $\lambda_p$ honest peers in each committee if the network is correctly bootstrapped.) Observe that if the join protocols of all the nodes that joined in the last $\lambda_{l} \mu_b$ are successful, then the requirements for the connections among peers are also ensured. Since the last $\lambda_{l} \mu_b$ b-epochs are bandwidth-adequate and the active directory is robust, then by Lemma \ref{lemma:join-succ}, all the (honest) nodes are successful with high probability, for a large enough $\lambda_j$.
\end{proof}

\begin{lemma} \label{lemma:netsizeest}
If b-epoch $e$ is stable and bandwidth-adequate, and the network is partition-resilient in b-epoch $e$, then whp, the (honest) peers can calculate the quantity $M'_e$ such that $M'_e \in [(1-\delta_{\mathit{err}})(1-\rho)M^L_e / \mu_b, (1+\delta_{\mathit{err}})\mu_b M^H_e]$.
\end{lemma}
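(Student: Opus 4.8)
I would prove this by tracing the estimate backwards through the four quantities it is assembled from: the true phase-1 network sizes $M^L_e,M^H_e$; the true number $G_e$ of phase-1 node joins; the sampled count $H_e$ reported by the random committee $s$; and finally $M'_e$, obtained from $G'_e=H_e\mathcal{C}_e$ by the fixed rescaling in Algorithm~\ref{alg:sizeest}. The rescaling is calibrated so that $M'_e$ reproduces the average phase-1 network size when every node is mined and scattered ideally by honest peers over the full $\alpha_1$ rounds; the $\mu_b$ and $1/\mu_b$ slack in the target interval come exactly from the $[\alpha_1/\mu_b^2,\alpha_1]$ uncertainty (blockchain liveness) in the number of rounds phase 1 actually occupies, and the extra $(1-\rho)$ on the lower end comes from the fact that in the worst case (adversary mines no nodes) the estimate only sees honest-mined nodes.

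\textbf{Stages 1 and 2.} For Stage 1, since the node difficulty $p_n$ is fixed, each of the $q$ queries an honest peer makes in a round yields a valid node independently with probability $p_n$; writing $h_r\in[(1-\rho)M^L_e,M^H_e]$ for the honest peers alive in round $r$ and summing over the $R\in[\alpha_1/\mu_b^2,\alpha_1]$ rounds of phase 1 (liveness), the honest contribution $G^{\mathrm{hon}}_e$ to the phase-1 node joins has mean in $[(1-\rho)M^L_e\,p_nq\alpha_1/\mu_b^2,\ M^H_e\,p_nq\alpha_1]$, which is $\Omega(N^{1/y}\log N)$, so a Chernoff bound pins $G^{\mathrm{hon}}_e$ within a $(1\pm\delta')$ factor of it whp; Byzantine peers add at most $(1+\delta')\rho M^H_e\,p_nq\alpha_1$ further valid nodes, and a pre-computation attack helps only over the $O(\beta)$ rounds in which a node's referenced block stays within the most recent $\mu_s$ confirmed blocks, negligible against $\alpha_1\gg\beta\log N$. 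For Stage 2, the random oracle makes $s$ uniform on $[\mathcal{C}_e]$ (and the bounds will hold for \emph{every} committee, so grinding on $b^k_e$ is immaterial); each honest phase-1 node lands in an i.i.d.\ uniform committee, so a balls-into-bins Chernoff bound plus a union bound over the $\mathcal{C}_e$ committees gives every committee $(1\pm\delta')G^{\mathrm{hon}}_e/\mathcal{C}_e$ honest phase-1 nodes whp (valid since $G^{\mathrm{hon}}_e/\mathcal{C}_e=\Omega(\lambda_n\log N)$ in a stable b-epoch). The Byzantine nodes must be controlled separately: the adversary fixes each node's committee before $s$ is known, and \emph{aiming} a valid node at a named committee costs a factor $\Theta(\mathcal{C}_e)$ more queries than just mining one, so the whole phase-1 adversarial query budget $\le\rho M^H_e q\alpha_1$ plants at most $O(\rho\lambda_n\log N)$ nodes in any fixed committee, with uniformly scattered Byzantine nodes adding only $O(\rho\lambda_n\log N)$ more per committee whp. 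Hence the number of valid phase-1 nodes assigned to $s$ lies in $[(1-\delta')G^{\mathrm{hon}}_e/\mathcal{C}_e,\ (1+\delta')G^{\mathrm{hon}}_e/\mathcal{C}_e+c\rho\lambda_n\log N]$, and since $G^{\mathrm{hon}}_e/\mathcal{C}_e=\Theta((1-\rho)\lambda_n\log N)$ the additive Byzantine term is only a $(1+O(\rho))$ multiplicative perturbation.

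\textbf{Stages 3 and 4.} By blockchain safety all honest peers agree on the block $b^k_e$, hence on $s$, up to the $\Delta$-lag absorbed by the waits in Algorithm~\ref{alg:sizeest}. By partition-resilience committee $s$ has $\ge\lambda_p\log N$ honest peers, all mutually connected, and by the \texttt{JOIN} protocol every honest node of a committee learns the entry information of every honest node of that committee (from the \texttt{COMM\_INFO} union it collects on joining and from later \texttt{JOINING} messages); so the honest members of $s$ collectively hold every honest phase-1 node of $s$, broadcast them in Step~3, and --- using the $\Omega(\log N)$-degree honest inter-committee links, the hypercube diameter $\le\log N\le\Delta$, and bandwidth-adequacy exactly as in the proof of Theorem~\ref{th:totalbandwidth} --- deliver the full union to every honest peer within the $\Delta$-round wait, so all honest peers compute the same $H_e$. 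Since $H_e$ counts all honest phase-1 nodes of $s$ and at most all Byzantine ones, Stage 2 gives $H_e\in[(1-\delta')G^{\mathrm{hon}}_e/\mathcal{C}_e,\ (1+O(\rho)+\delta')G^{\mathrm{hon}}_e/\mathcal{C}_e]$ whp. Multiplying by $\mathcal{C}_e$, applying the Algorithm~\ref{alg:sizeest} rescaling, and feeding in the two-sided bound on $G^{\mathrm{hon}}_e$ from Stage 1 yields $M'_e\ge(1-\delta')^2(1-\rho)M^L_e/\mu_b$ and $M'_e\le(1+O(\rho)+\delta')(1+\delta')\mu_bM^H_e$, which sits inside $[(1-\delta_{\mathit{err}})(1-\rho)M^L_e/\mu_b,\ (1+\delta_{\mathit{err}})\mu_bM^H_e]$ as long as $\delta'$ and $\rho$ are small relative to $\delta_{\mathit{err}}$ (the implicit parameter constraint $\delta_{\mathit{err}}=\Omega(\rho)+\Omega(\delta')$). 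A final union bound over the $\mathcal{C}_e$ values of $s$ and over the rounds of phase 1 finishes the argument.

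\textbf{Main obstacle.} The delicate point is the upper side of Stage 2: a priori the adversary could try to funnel all of its Byzantine phase-1 nodes into whichever committee becomes $s$, which would inflate $G'_e=H_e\mathcal{C}_e$ by a factor $\mathcal{C}_e$ and destroy the estimate. This is ruled out only by combining the unpredictability of $s$ while phase-1 nodes are still being mined with the fact that forcing a valid node into a \emph{designated} committee is $\Theta(\mathcal{C}_e)$ times more expensive than mining one, so the adversary's entire phase-1 budget buys merely $O(\log N)$ targeted nodes. Making this query-accounting argument mesh cleanly with the balls-into-bins concentration for honest nodes --- and keeping all the $\mu_b$ and $(1\pm\delta')$ factors coherent through the rescaling --- is where the real work lies.
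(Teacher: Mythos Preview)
Your proposal is correct and reaches the stated bound, but it handles the Byzantine contribution by a different route than the paper. The paper models the \emph{entire} phase-1 join count $G_e$ (honest and Byzantine together) as $m$ balls thrown uniformly into $\mathcal{C}_e$ bins, invokes negative association \`a la Dubhashi--Ranjan to apply a Chernoff bound to the count $H_e$ in the randomly chosen bin $s$, and lets the $(1-\rho)$ on the lower side enter only through $G_e\ge(1-\delta_2)(1-\rho)J_e$ (Byzantine peers may decline to join any nodes at all). This is cleaner but glosses over the point your ``Main obstacle'' paragraph flags: the adversary can \emph{select} which of its valid nodes to actually join based on their committees, so the Byzantine balls are not literally uniform. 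Your decomposition---honest balls via balls-into-bins, Byzantine balls via the query-budget ``aiming'' argument showing at most $O(\rho\lambda_n\log N)$ targeted nodes land in any fixed committee---makes this explicit. In the end both routes give the same additive $O(\rho\lambda_n\log N)$ Byzantine term per committee, so the difference is one of presentation; your version is the more defensible of the two, while the paper's is shorter. Your union bound over all $\mathcal{C}_e$ possible values of $s$ (to neutralise grinding on $b^k_e$) is also a point the paper leaves implicit.

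One small overclaim: you assert that all honest peers compute \emph{the same} $H_e$. Byzantine members of committee $s$ can equivocate in their \texttt{EST\_INFO} broadcasts, so different honest peers may include different subsets of Byzantine-owned phase-1 nodes in their union. What actually holds---and what the lemma needs---is only that every honest peer's $H_e$ lies in your Stage-2 interval, since the honest broadcasts (which all honest peers do receive, by partition-resilience and bandwidth-adequacy) pin the lower end and the total number of valid Byzantine phase-1 nodes assignable to $s$ caps the upper end. This does not affect your conclusion.
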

\begin{proof}
Unlike other protocols, the network size estimation protocol is minimally affected by $\Delta$-synchrony. (Recall that there can be a difference of $\mu_b$ blocks in the chain lengths of honest peers.) This is because the node joins are bound to the block used in their proofs. Thus, the existing peers consider all the node joins in phase 1 even though they may start and end the phase 1 of b-epoch at possibly different times. Let $G'_e$ and $G_e$ be the estimate and actual number of nodes that joined in phase 1 of b-epoch $e$ respectively. First, let us show that whp, all the honest peers can calculate the quantity $G'_e$, and $G'_e \in [(1-\delta_1)G_e, (1+\delta_1)G_e]$ where $\delta_1 < 1$ is a small positive constant.

Let $J_e$ be the expected number of new nodes that can be generated in phase 1 of b-epoch $e$. Recall that the node difficulty threshold, $p_n = (\lambda_n \log N) / (q \alpha)$, is fixed. Since by design, phase 1 of a b-epoch is $\Theta(\alpha)$, for a large enough $\lambda_n$, using Chernoff bounds, $G_e \in [(1-\delta_2)(1-\rho)J_e, (1+\delta_2)J_e]$, for a small positive constant $\delta_2 < 1$. The lower bound has a factor of $(1-\rho)$ because the Byzantine peers may choose not to join any of their nodes (or even mine for nodes). For ease of exposition, let the aforementioned bounds be $G_e \in [C_1\lambda_{n}\mathcal{C}_e\log N, C_2\lambda_{n}\mathcal{C}_e\log N]$, where $C_1$ and $C_2$ are appropriate constants.

Note that $k_c \geq 1$ random committees broadcast the newly joined node IDs (Step 3 of Algorithm \ref{alg:sizeest} description in Section \ref{subsec:netsizeest}). All the peers get to know the number of new nodes that have joined each of those $k_c$ committees in at most $\Delta$ rounds because the network is partition-resilient and b-epoch $e$ is bandwidth-adequate. The peers need to estimate $G_e$ from that information. This is analogous to showing that if $m = G_e$ balls are (uniform) randomly thrown into $n = \mathcal{C}_e$ bins. If $k_c$ bins are (uniform) randomly picked to see the number of balls in them, then the estimate of $m$ is within a (fixed) multiplicative error of $\delta_1 \in (0, 1]$, whp, assuming $m/n \geq C_1\lambda_{n}\log N$.

Let $X_{i, j}$ be a Bernoulli random variable such that $X_{i, j}=1$ if the $i^\mathrm{th}$ ball is thrown into bin $j$, and $X_{i, j}=0$ otherwise. For a given ball $i$, $X_{i, 1}, X_{i, 2}, \dots, X_{i, n}$ are zero-one random variables, meaning that $\forall j, X_{i, j} \in \{0, 1\}$, and their sum being equal to 1, $\sum_j X_{i, j} = 1$. By Lemma 8 in \cite{dubhashi1996balls}, these random variables are negatively associated. Let $\mathbf{X}_i = \{ X_{i, j}\}^{j=n}_{j=1}$, then it is easy to see that $\mathbf{X}_i$ and $\mathbf{X}_j$, for $i \neq j$, are independent. Then, by Proposition 7 in \cite{dubhashi1996balls}, the full vector, $\{X_{i, j}\}^{i=m, j=n}_{i=1, j=1}$, is also negatively associated.

Since $k_c$ bins are (uniform) randomly picked, by symmetry, let them be the first $k_c$ bins. Let $H_e = \sum^{i=m, j=k}_{i=1, j=1} X_{i, j}$. Since for all $j$, $ \mathbb{E}[\sum^{i=m}_{i=1} X_{i, j}] = m/n$, by linearity of expectation, $\mathbb{E}[H_e] = k_cm/n$. Using both sides of Chernoff bounds,
\begin{align*} 
    \mathrm{Pr}[H_e \not\in [(1-\delta_1)mk_c/n, (1+\delta_1)mk_c/n]] &= Pr[H_e\mathcal{C}_e/k_c \not\in [(1-\delta_1)G_e, (1+\delta_1)G_e]]\\
    & \leq 2\exp(-mk_c\delta_1^2/3n)\\
    & \leq 2\exp(-C_1\lambda_{n}k_c\log N\delta_1^2/3).
\end{align*}
For a large enough $\lambda_{n}$, $C_1\lambda_{n}k_c\delta_1^2/3 > 1$. This proves that whp, all the honest peers can calculate the quantity $G'_e$, and $G'_e \in [(1-\delta_1)G_e, (1+\delta_1)G_e]$ where $\delta_1 < 1$ is a small positive constant.

Recall that the actual number of nodes that are generated in phase 1 of b-epoch $e$ is $G_e \in [(1-\delta_2)(1-\rho)J_e, (1+\delta_2)J_e]$. By blockchain liveness, the time elapsed in phase 1 of any b-epoch is at least $\alpha_1 / \mu^2_b$ rounds and at most $\alpha_1$ rounds. Hence, $p_n \alpha_1 M^L_e / \mu^2_b \leq J_e \leq p_n \alpha_1 M^H_e$, and this implies that $G_e \in [(1-\delta_2)(1-\rho)p_n \alpha_1 M^L_e / \mu^2_b, (1+\delta_2)p_n \alpha_1 M^H_e]$ where $M^L_e$ and $M^H_e$ are the minimum and maximum network sizes in phase 1 of b-epoch $e$. If the constants $\delta_1$ and $\delta_2$ are appropriately chosen, and it is given to us that $G'_e \in [(1-\delta_1)G_e, (1+\delta_1)G_e]$, this would imply that $G'_e \in [(1-\delta_{\mathit{err}})(1-\rho)p_n \alpha_1 M^L_e / \mu^2_b, (1+\delta_{\mathit{err}})p_n \alpha_1 M^H_e]$.

Since $p_n, \alpha_1$ and $\mu_b$ are known to all the peers, the (honest) peers calculate the quantity $M'_e = G'_e \mu_b /p_n \alpha_1$ so that $M'_e \in [(1-\delta_{\mathit{err}})(1-\rho)M^L_e / \mu_b, (1+\delta_{\mathit{err}})\mu_b M^H_e]$, thus proving the lemma.
\end{proof}

\begin{theorem} \label{the:stabilizesucc}
If b-epoch $e$ is stable and bandwidth-adequate, and the network is partition-resilient in b-epoch $e$, then whp, b-epoch $e$ is synchronized.
\end{theorem}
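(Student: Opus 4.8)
The plan is to check, one at a time, the three clauses of Definition~\ref{DEFdiffrecal}, obtaining clause~1 from the estimation lemma already proved and clauses~2 and~3 from blockchain safety and fairness. Before doing so I would observe that the hypotheses put the blockchain's guarantees from Section~\ref{sec:model} in force throughout b-epoch $e$: by Definition~\ref{DEFpartres} (partition-resilience) every committee has $\Omega(\log N)$ honest peers, honest peers inside a committee form a clique, and each honest node has $\Omega(\log N)$ honest links into each neighbouring committee, so the honest peers span a connected subgraph of $\Theta(n)$ vertices with diameter $O(\log N)\le 2\log N$; hence safety, liveness, and $\delta$-approximate fairness all hold whp. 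For clause~1 itself there is nothing new to do: b-epoch $e$ is stable, bandwidth-adequate, and partition-resilient, so Lemma~\ref{lemma:netsizeest} applies directly and gives that whp every honest peer computes $M'_e \in [(1-\delta_{\mathit{err}})(1-\rho)M^L_e/\mu_b,\ (1+\delta_{\mathit{err}})\mu_b M^H_e]$ at the end of phase~1, which is exactly clause~1.

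For clause~2 the starting point is that all honest peers agree on $M'_e$ (clause~1), and the maps $M'_e\mapsto\mathcal{C}_{e+1}$ and $M'_e\mapsto\mathit{ch\_dim}$ described in Sections~\ref{subsec:resetparam} and~\ref{subsec:dimchange} are deterministic functions of $M'_e$ and publicly known constants; hence every honest peer arrives at the same pair $(\mathcal{C}_{e+1},\mathit{ch\_dim})$. The protocol instructs an honest peer, as soon as its confirmed chain reaches the fixed block number marking the end of phase~1, to attach this pair to each of its next $\lambda_{\mathit{lb}}\log N$ mined blocks. The only thing to argue is that every honest block \emph{confirmed after phase~1} was in fact mined by a peer that had already detected the end of phase~1 and therefore computed the parameters. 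This is where safety enters: by $\Delta$-synchronization and consistency, once any honest peer's confirmed chain reaches the phase-1-end block, all honest peers' confirmed chains reach it within $\Delta$ rounds and thereafter differ in length by at most $\mu_s$, so the $\mu_s$-block (and $\Delta$-round) slack deliberately built into phase~2 guarantees that any block confirmed after phase~1 sits strictly above the point by which every honest miner has already fixed $(\mathcal{C}_{e+1},\mathit{ch\_dim})$. Consequently every such honest block carries the correct values, which is clause~2.

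For clause~3, phase~2 consists of $\lambda_{\mathit{lb}}\log N+\mu_s=\Theta(\log N)$ consecutive confirmed blocks. Choosing the constant $\lambda_{\mathit{lb}}$ large enough that this length exceeds the $\Omega(\kappa/\delta)$ segment length in the $\delta$-approximate fairness property — taking $\delta$ a constant small enough that $(1-\delta)(1-\rho)>1/2$, which is possible because $\rho$ is a sufficiently small constant — fairness gives that the honest peers own at least a $(1-\delta)(1-\rho)>1/2$ fraction of the blocks confirmed in phase~2, i.e. an honest majority. Since this holds with probability at least $1-2^{-\kappa}$ and $\kappa=\Theta(\log N)$, it holds whp. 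Combining the three clauses shows that b-epoch $e$ is synchronized whp.

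I expect the one genuinely delicate point — the place to write out carefully rather than wave through — to be the $\Delta$-synchrony bookkeeping in clause~2: making precise which blocks are ``confirmed after phase~1'' from the global viewpoint, and verifying that the $\mu_s$-block / $\Delta$-round buffer engineered into phase~2 is actually enough to rule out an honest block being confirmed before its own miner has finished the network-size estimation. Everything else is a direct invocation of Lemma~\ref{lemma:netsizeest} and of the blockchain's safety and fairness guarantees.
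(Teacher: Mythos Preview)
Your proposal is correct and follows essentially the same approach as the paper: clause~1 via Lemma~\ref{lemma:netsizeest}, clause~3 via $\delta$-approximate fairness on the $\lambda_{\mathit{lb}}\log N$ blocks of phase~2 with parameters chosen so that $(1-\delta')(1-\rho)>1/2$. The paper's own proof is in fact terser than yours on clause~2, treating it as immediate once all honest peers agree on $M'_e$; your explicit $\Delta$-synchrony bookkeeping for clause~2 is more careful than what the paper writes out, but it is the same underlying argument.
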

\begin{proof}
After phase 1 of b-epoch $e$, all the honest peers receive the set of nodes that joined a (randomly selected) committee (Step 3 of Algorithm \ref{alg:sizeest} description in Section \ref{subsec:netsizeest}) as the network is partition-resilient and b-epoch $e$ is bandwidth-adequate. In Step 3 of Algorithm \ref{alg:sizeest} description in Section \ref{subsec:netsizeest}, due to Lemma \ref{lemma:netsizeest}, whp, all the honest peers can calculate the quantity $M'_e$ such that $M'_e \in [(1-\delta_{\mathit{err}})(1-\rho)M^L_e / \mu_b, (1+\delta_{\mathit{err}})\mu_b M^H_e]$.

Finally, by the fairness assumption, with high probability, in a segment of length of $L = \lambda_{\mathit{lb}}\log N$, the fraction of blocks held by the honest peers is at least $(1-\delta')(1-\rho)$, where $\delta'$ is a small constant that depends on $\lambda_{\mathit{lb}}$. The parameters are set such that $(1-\delta')(1- \rho) > 0.5$, which means that there is an honest majority of blocks.
\end{proof}

\begin{lemma} \label{LEMmaintaingoodest}
If b-epoch $e$ is synchronized, then b-epoch $e$ has a good estimate ratio.
\end{lemma}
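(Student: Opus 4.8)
The plan is to bound the estimate ratio $R_e = L_e/L'_e$ by sandwiching both $L_e$ and $L'_e$ between the two extremal phase-1 network sizes $M^L_e$ and $M^H_e$, and then letting the constants fall out.

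First I would unpack the hypothesis. By part~1 of Definition~\ref{DEFdiffrecal}, every honest peer computes $M'_e \in [(1-\delta_{\mathit{err}})(1-\rho)M^L_e/\mu_b,\ (1+\delta_{\mathit{err}})\mu_b M^H_e]$. Since $M'_e = \mu_b G'_e/(p_n\alpha_1)$ is a deterministic function of the count $H_e$ of phase-1 joins into the randomly chosen committee $s$, and every honest peer reconstructs the same $H_e$ from the broadcast in Algorithm~\ref{alg:sizeest}, all honest peers output the same value; the network-wide agreed estimate $L'_e$ is exactly this common $M'_e$ (the honest majority among the phase-2 blocks, part~3 of Definition~\ref{DEFdiffrecal}, merely ratifies it). Hence $L'_e \in [(1-\delta_{\mathit{err}})(1-\rho)M^L_e/\mu_b,\ (1+\delta_{\mathit{err}})\mu_b M^H_e]$ as well.

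Next I would relate $L_e$ to the phase-1 sizes via the churn assumption. The round attaining $M^H_e$, the round attaining $M^L_e$, and the final round of b-epoch $e$ all lie within b-epoch $e$, which by blockchain liveness spans at most $\alpha$ rounds; so each of the two extremal rounds is within $\alpha$ rounds of the end of the b-epoch. Because the network size changes by at most a factor of $2$ over any $\alpha$-round window, this gives $M^H_e/2 \le L_e$ and $L_e \le 2 M^L_e$ (I only need the one-sided bound in each case).

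Combining, the lower bound is $R_e = L_e/L'_e \ge \dfrac{M^H_e/2}{(1+\delta_{\mathit{err}})\mu_b M^H_e} = \dfrac{1}{2\mu_b(1+\delta_{\mathit{err}})}$, and the upper bound is $R_e = L_e/L'_e \le \dfrac{2 M^L_e}{(1-\delta_{\mathit{err}})(1-\rho)M^L_e/\mu_b} = \dfrac{2\mu_b}{(1-\rho)(1-\delta_{\mathit{err}})}$, which is precisely the interval in Definition~\ref{DEFgoodestratio}. The content here is light; the one thing to get right is not to chain the two factor-$2$ drifts together (bounding $M^H_e$ against $M^L_e$ and then against $L_e$), which would lose an extra factor of $2$ and overshoot the target interval — so I would apply the drift bound separately on each side. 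The only genuine prerequisite, namely that an entire b-epoch fits inside one $\alpha$-round window, is exactly what blockchain liveness provides, since $\alpha/(\mu_b\beta)$ consecutive blocks confirm in at most $\alpha$ rounds.
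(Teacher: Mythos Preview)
Your proposal is correct and follows essentially the same route as the paper: identify $L'_e$ with $M'_e$ via the synchronized hypothesis, use the churn assumption (b-epoch length $\le \alpha$ via liveness) to get $M^H_e/2 \le L_e \le 2M^L_e$, and combine the two bounds exactly as you do. The paper's proof is terser but structurally identical; your added justification for why all honest peers share the same $L'_e$ and your remark about not chaining the two factor-$2$ drifts are sound elaborations of steps the paper leaves implicit.
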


\begin{proof}
All (honest) peers adopt the network size estimate after phase 1 of b-epoch $e$ by setting $L'_e = M'_e$, where $M'_e \in [(1-\delta_{\mathit{err}})(1-\rho)M^L_e / \mu_b, (1+\delta_{\mathit{err}})\mu_b M^H_e]$, and $M^L_e$ and $M^H_e$ are the minimum and maximum network sizes in phase 1 of b-epoch $e$ and $\delta_{\mathit{err}} < 1$ is a small positive constant. Notice that $M^H_e/2 \leq L_e \leq 2M^L_e$ as the network size can change by a multiplicative factor of $2$ in any b-epoch. (This is because the number of rounds in any b-epoch is at most $\alpha$ rounds by the blockchain liveness.)

To maximize the estimate ratio of b-epoch $e$, we need to minimize $L'_e$ and maximize $L_e$,
\begin{align*}
    \frac{L_e}{L'_e} = \frac{2 M^L_e}{(1-\delta_{\mathit{err}})(1-\rho)M^L_e / \mu_b} = \frac{2 \mu_b}{(1-\rho)(1-\delta_{\mathit{err}})}.
\end{align*}
This upper bound is actually attained when the Byzantine peers do not join their nodes in phase 1 (thereby reducing the total estimate of the hash power by a factor $(1-\rho)$) and the total hash rate remains the same until phase 1, and increases by a factor of $2$ by the end of the b-epoch (which means that this increase in hash rate was not captured by the estimation algorithm in phase 1).

To minimize the estimate ratio of b-epoch $e$, we need to maximize $L'_e$ and minimize $L_e$,
\begin{align*}
    \frac{L_e}{L'_e} = \frac{M^H_e}{2(1+\delta_{\mathit{err}}) \mu_b M^H_e} = \frac{1}{2\mu_b (1+\delta_{\mathit{err}})}.
\end{align*}
And this lower bound is actually attained when all the Byzantine peers (mine and) join their nodes in phase 1 and the total hash rate remains the same until phase 1, and decreases by a factor of $2$ by the end of the b-epoch.
\end{proof}

\begin{theorem} \label{the:stablebepoch}
If the network is partition-resilient in b-epochs $e-1$ and $e-2$, and if b-epochs $e-1$ and $e-2$ are synchronized, and if the b-epochs $e-1$ is bandwidth-adequate, then b-epoch $e$ is a stable b-epoch, for $e \geq 3$. If the network is appropriately bootstrapped (Section \ref{sec:stablenetsize}), then b-epochs 1 and 2 are stable b-epochs.
\end{theorem}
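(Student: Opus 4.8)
The plan is to reduce the whole statement to one numeric comparison: for every round $r$ of b-epoch $e$, keep the ratio $N^r_e/\mathcal{C}_e$ inside $[1/\lambda_s,\lambda_s]$, where $\lambda_s$ is the ``large constant'' of Section~\ref{subsec:resetparam}. The base case ($e\in\{1,2\}$) is immediate: by the bootstrapping assumption of Section~\ref{sec:stablenetsize} the execution starts with $N_0$ peers and $\mathcal{C}_1=N_0$ committees, a dimension change needs two b-epochs to take effect so $\mathcal{C}_1=\mathcal{C}_2=N_0$, and since a b-epoch lasts at most $\alpha$ rounds (blockchain liveness) the network size moves by at most a factor $2$ per b-epoch; hence $N^r_e\in[N_0/4,4N_0]\subseteq[\mathcal{C}_e/\lambda_s,\lambda_s\mathcal{C}_e]$ as soon as $\lambda_s\ge 4$.

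For the inductive step, fix $e\ge 3$. By synchronization of b-epochs $e-1$ and $e-2$ (Definition~\ref{DEFdiffrecal}), together with blockchain fairness and consistency, all honest peers hold the same committee count $\mathcal{C}_e$ for b-epoch $e$ (it is committed to the honest majority among the phase-2 blocks), and $\mathcal{C}_e$ was produced, via the rule of Sections~\ref{subsec:resetparam} and \ref{subsec:dimchange}, from network-size estimates $M'_{e-2}$ (and $M'_{e-1}$) that satisfy $M'_f\in[(1-\delta_{\mathit{err}})(1-\rho)M^L_f/\mu_b,(1+\delta_{\mathit{err}})\mu_b M^H_f]$; here partition-resilience and bandwidth-adequacy of b-epochs $e-1,e-2$ enter only as the ingredients guaranteeing, through Algorithm~\ref{alg:sizeest}, that those estimates are this accurate and reach every honest peer. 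I would then bound the drift of the true network size: the span from phase~1 of b-epoch $e-2$ to the last round of b-epoch $e$ is at most three b-epochs, hence at most three epochs, so $N^r_e$ changes by at most a factor $\gamma_0=2^3$ over that interval; combining this with the estimate bound gives $N^r_e=\Theta(M'_{e-2})$ with an explicit constant depending only on $\mu_b$, $\rho$, $\delta_{\mathit{err}}$ and $\gamma_0$.

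It then remains to check the three branches of $\mathit{ch\_dim}$ against the stability window $[\mathcal{C}_e/\lambda_s,\lambda_s\mathcal{C}_e]$. If no change is triggered, $\mathcal{C}_e$ equals the preceding committee count, and the decision rule of Section~\ref{subsec:dimchange} fired precisely because, applying the worst-case factor-$2$-per-b-epoch drift to the (accurate) estimate, the network size two b-epochs later was still certified to stay in $[\mathcal{C}_e/\lambda_s,\lambda_s\mathcal{C}_e]$; since the actual drift is no worse than the worst case, $N^r_e$ stays there. If an increase (respectively decrease) is triggered, $\mathcal{C}_e$ is the old count times $\lambda_s$ (respectively over $\lambda_s$), and the same rule certifies that the estimate, even pushed by worst-case drift, reaches up to (respectively down to) roughly $\lambda_s$ times the old count but, because a change had not already been forced in an earlier b-epoch, cannot be off by more than a further constant factor, so $N^r_e$ lands inside the width-$\lambda_s^2$ window. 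In every branch the requirement collapses to choosing $\lambda_s$ larger than a fixed product of $\mu_b$, $1/(1-\rho)$, $1/(1-\delta_{\mathit{err}})$ and $\gamma_0$, and simultaneously large enough that a single factor-$\lambda_s$ jump in committee count cannot overshoot that window — which is consistent with $\lambda_s$ being the ``large constant'' already demanded in Section~\ref{subsec:resetparam}. Finally I would union-bound over the $O(\alpha)$ rounds of b-epoch $e$ and the constantly many high-probability events used (estimate accuracy from the Chernoff-bound argument behind Lemma~\ref{lemma:netsizeest}, honest majority among phase-2 blocks, and the Chernoff bound underlying the half-life peer-drift assumption). The hard part is exactly this threshold case analysis: one must verify that the ``worst-case factor-$2$ over two epochs'' rule producing $\mathit{ch\_dim}$ really does force $N^r_e\in[\mathcal{C}_e/\lambda_s,\lambda_s\mathcal{C}_e]$ in all three directions and — in a transformation b-epoch, where both $\mathcal{C}_e$ and $\mathcal{C}_{e+1}$ are in simultaneous use — for both committee counts at once, which is what ultimately pins down how large $\lambda_s$ must be relative to the liveness and estimation-error constants.
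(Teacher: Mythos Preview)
Your proposal is correct and follows essentially the same route as the paper's proof: use synchronization of b-epochs $e-2$ and $e-1$ to guarantee that all honest peers agree on an accurate network-size estimate and on $\mathit{ch\_dim}$/$\mathcal{C}_e$, then argue that the dimension-change rule of Section~\ref{subsec:dimchange} (which already bakes in a worst-case factor-$2$ drift over the two-b-epoch lag) forces $N^r_e\in[\mathcal{C}_e/\lambda_s,\lambda_s\mathcal{C}_e]$; the base case is handled identically via bootstrapping plus the observation that no dimension change can occur before b-epoch~$3$.

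The only difference is level of detail: the paper's proof is a short paragraph that asserts the conclusion of your three-branch case analysis rather than carrying it out, and it does not make explicit the numeric dependence of $\lambda_s$ on $\mu_b$, $\rho$, $\delta_{\mathit{err}}$ and the drift factor. Your write-up is effectively the expansion of what the paper leaves implicit. One small over-statement: you invoke partition-resilience and bandwidth-adequacy of b-epoch $e-2$ as needed for the accuracy of $M'_{e-2}$, but in the paper's logic that accuracy is already packaged into the \emph{synchronized} hypothesis (Definition~\ref{DEFdiffrecal}, item~1); partition-resilience and bandwidth-adequacy of $e-1$ are used instead to ensure the transformation b-epoch actually carries out the dimension change correctly, which is how the paper phrases it.
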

\begin{proof}
Let us first consider a b-epoch $e \geq 3$. For b-epoch $e$ to be a stable b-epoch, the dimension of the hypercube should be appropriately set according to the network size in any round in b-epoch $e$. Since there is a ``lag'' of one b-epoch in increasing the dimension (see Section \ref{subsec:dimchange}), the decision to change the dimension of the hypercube must be taken by all the peers at the end of b-epoch $e-2$; this can be done because the b-epoch $e-2$ is synchronized. By the end of b-epoch $e-1$, the new number of committees is agreed upon by all peers again because the b-epoch $e-2$ is synchronized. As the b-epoch $e-1$ is bandwidth-adequate, and the network is partition-resilient in b-epoch $e-1$, the appropriate dimension of the hypercube in b-epoch $e$ is adopted such that the network size in any round $r$ of b-epoch $e$ satisfies the bounds given in Definition \ref{DEFstableb-epoch}.

Section \ref{sec:stablenetsize} provides details about bootstrapping the network such that Definition \ref{DEFstableb-epoch} is satisfied. Moreover, the dimension of the hypercube need not be changed in both b-epoch 1 and 2 because the network size can change by at most a factor of $2$ (as there are at most $\alpha$ rounds in a b-epoch via blockchain liveness). In other words, the network size will remain within the bounds given in Definition \ref{DEFstableb-epoch}. Thus, b-epoch 1 and 2 are stable.
\end{proof}

\begin{theorem}\label{theorem:main}
The overlay network is partition-resilient and each honest peer sends or receives $O(\log^3 N)$ messages per round, for a polynomial number of rounds with high probability.
\end{theorem}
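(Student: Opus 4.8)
The plan is a strong induction over b-epochs, chaining together the implications proved above. The invariant I would carry for b-epoch $e$ is: b-epoch $e$ is stable, bandwidth-adequate, has a robust active directory, is partition-resilient, and is synchronized (Definitions~\ref{def:actrobust}, \ref{DEFpartres}, \ref{DEFbandadequate}, \ref{DEFstableb-epoch}, \ref{DEFdiffrecal}). Granting this invariant for every b-epoch of a $\poly(N)$-length execution, the theorem is immediate: every round lies in some b-epoch, whose partition-resilience gives the structural guarantee and whose bandwidth-adequacy gives the $O(\log^3 N)$-messages-per-round bound.

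For the base case I would take b-epochs $1$ and $2$. The bootstrapping set-up of Section~\ref{sec:stablenetsize} makes them stable (the last sentence of Theorem~\ref{the:stablebepoch}); since the lookback in Theorems~\ref{th:totalbandwidth}, \ref{the:actdir-robust} and \ref{the:partres} collapses to $\max(1,e-\lceil\cdot\rceil)=1$ for $e\le 2$, feeding stability forward through Theorem~\ref{th:totalbandwidth} gives bandwidth-adequacy, then Theorem~\ref{the:actdir-robust} gives a robust active directory, then Theorem~\ref{the:partres} gives partition-resilience, and then Theorem~\ref{the:stabilizesucc} gives synchronization. For the inductive step, fix $e\ge 3$ and assume the invariant for all earlier b-epochs: partition-resilience and synchronization of b-epochs $e-1,e-2$ together with bandwidth-adequacy of $e-1$ give stability of $e$ via Theorem~\ref{the:stablebepoch}; with $e,e-1,\dots,\max(1,e-\lceil\lambda_l\mu_b\rceil)$ now all stable, Theorem~\ref{th:totalbandwidth} gives bandwidth-adequacy of $e$; the last $\lceil\lambda_{dl}\mu_b\rceil$ b-epochs being bandwidth-adequate, Theorem~\ref{the:actdir-robust} gives a robust active directory in $e$; the last $\lceil\lambda_l\mu_b\rceil$ b-epochs being stable, bandwidth-adequate and directory-robust, Theorem~\ref{the:partres} gives partition-resilience of $e$; and finally Theorem~\ref{the:stabilizesucc} gives synchronization of $e$, reproducing the invariant and closing the induction. (Lemma~\ref{LEMmaintaingoodest} is invoked internally whenever the estimate triggers a dimension change; transformation b-epochs are handled exactly as in Theorem~\ref{the:stablebepoch} and Section~\ref{subsec:dimchange}.)

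Two supporting checks are needed. First, the induction rests on the blockchain continuing to function — confirmed chains, the block numbers that delimit b-epochs, and safety/liveness/$\delta$-fairness — so I would note that partition-resilience is precisely what keeps this so: its three clauses force the honest subgraph to contain a hypercube-of-committees on $\Theta(n)$ honest peers with diameter $O(\log N)\le 2\log N$, and there are at most $\rho n$ Byzantine peers, which is the hypothesis of the blockchain guarantees in Section~\ref{sec:model}; hence the b-epoch machinery stays well-defined at every step. Second, to upgrade ``each b-epoch'' to ``$\poly(N)$ rounds whp'', observe a b-epoch spans $\Theta(\alpha/\beta)$ blocks, hence $O(\alpha)=O(\beta\sqrt N\log N)$ rounds by liveness, so a $\poly(N)$-round execution contains only $\poly(N)$ b-epochs; each cited theorem already fails with probability at most $N^{-k'}$ after its own union bound over the $O(N)$ peers and $O(\alpha)$ rounds of a b-epoch, and the redundancy constant $\lambda_n$ can be enlarged to make $k'$ as large as desired, so a final union bound over the $\poly(N)$ b-epochs and the $O(1)$ invariant clauses per b-epoch still leaves total failure probability $N^{-k}$ for any target $k>1$.

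The main obstacle is the circular shape of the dependency graph: stability $\to$ bandwidth-adequacy $\to$ directory robustness $\to$ partition-resilience $\to$ synchronization $\to$ (two b-epochs later) stability. The point that makes the induction well-founded is the constant time-lag built into each implication — every property of b-epoch $e$ depends only on properties of b-epochs $\le e$, and stability of $e$ depends only on strictly earlier b-epochs — so anchoring at the bootstrapped b-epochs $1,2$ suffices. The only remaining (and purely quantitative) wrinkle is that the per-b-epoch ``whp'' must be strengthened, via $\lambda_n$, so that it survives the union bound over polynomially many b-epochs.
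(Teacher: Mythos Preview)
Your proposal is correct and follows essentially the same approach as the paper: an induction over b-epochs carrying the conjunction of stable, bandwidth-adequate, directory-robust, partition-resilient, and synchronized, with the base case handled by bootstrapping (b-epochs $1,2$) and the step by chaining Theorems~\ref{the:stablebepoch}, \ref{th:totalbandwidth}, \ref{the:actdir-robust}, \ref{the:partres}, \ref{the:stabilizesucc} in that order, followed by a union bound over $\poly(N)$ b-epochs. The paper packages the invariant as a single event $\mathtt{S}_i=\mathtt{P}_i\cap\mathtt{Q}_i\cap\mathtt{R}_i\cap\mathtt{T}_i\cap\mathtt{U}_i$ (with $\mathtt{R}_i,\mathtt{T}_i,\mathtt{U}_i$ already quantified over the relevant lookback window), but your strong-induction formulation is equivalent; your explicit remarks on why the blockchain guarantees persist and on tuning $\lambda_n$ for the final union bound are helpful elaborations that the paper leaves implicit.
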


\begin{proof}
The goal is to maintain partition-resilience over a polynomial number of b-epochs once the network is appropriately bootstrapped at time zero. We carefully exploit the dependencies of the theorems described so far to prove the statement. We describe a series of events that turn out to be useful.
\begin{itemize}
    \item $\mathtt{P}_i$ is the event that the network is partition-resilient in b-epoch $i$.
    \item $\mathtt{Q}_i$ is the event that b-epoch $i$ is synchronized.
    \item $\mathtt{R}_i$ is the event that b-epochs $i, i-1, \dots, z$ where $z = \mathrm{max}(1, i-\lceil \lambda_{l} \mu_b \rceil)$, are stable.
    \item $\mathtt{T}_i$ is the event that b-epochs $i, i-1, \dots, z$ where $z = \mathrm{max}(1, i-\lceil \lambda_{\mathit{dl}} \mu_b \rceil)$, are bandwidth-adequate.
    \item $\mathtt{U}_i$ is the event that the active directory is robust in b-epochs $i, i-1, \dots, z$ where $z = \mathrm{max}(1, i-\lceil \lambda_{l} \mu_b \rceil)$.
    \item $\mathtt{S}_i$ is the event that b-epoch $i$ is ``successful'', meaning that, $\mathtt{S}_i = \mathtt{P}_i \cap \mathtt{Q}_i \cap \mathtt{R}_i \cap \mathtt{T}_i \cap \mathtt{U}_i$.
\end{itemize}

Both b-epoch 1 and 2 do not require dimension change if the network is correctly bootstrapped (Section \ref{sec:stablenetsize}). As mentioned in Theorem \ref{the:stablebepoch}, b-epoch 1 is a stable b-epoch, i.e., $\mathtt{R}_{1}$ occurs. If $\mathtt{R}_{1}$ happens, then $\mathtt{T}_{1}$ occurs with high probability by Theorem \ref{th:totalbandwidth}. And if $\mathtt{T}_{1}$ happens, then $\mathtt{U}_i$ occurs with high probability due to Theorem \ref{the:actdir-robust}. If $\mathtt{R}_{1}$, $\mathtt{T}_{1}$ and $\mathtt{U}_i$ happen, then $\mathtt{P}_{1}$ happens with high probability because of Theorem \ref{the:partres}. This also means that $\mathtt{Q}_{1}$ happens with high probability due to Theorem \ref{the:stabilizesucc}. Therefore, applying a union bound, $\mathtt{S}_{1}$ happens with high probability. The same chain of arguments also holds for b-epoch $2$.

For a b-epoch $i$ where $i \geq 2$, if there is a dimension change, the system, intuitively, gets bootstrapped again (during the transformation b-epoch). The key observation in Lemma \ref{lemma:hon-peer-lower} and Theorem \ref{the:partres} is that the lower bound on the number of honest peers for partition resilience relies only on the honest node joins of previous b-epoch. This is important because even though after a dimension change (adopting the new hypercube), the nodes that joined before the transformation b-epoch are considered invalid, Lemma \ref{lemma:hon-peer-lower} and Theorem \ref{the:partres} still apply to ensure partition-resilience for the nodes in the new hypercube. 

Given that the events $\mathtt{S}_{i}$ and $\mathtt{S}_{i-1}$ have occurred, by Theorem \ref{the:stablebepoch}, $\mathtt{P}_{i} \cap \mathtt{Q}_{i} \cap \mathtt{R}_{i}$ and $\mathtt{P}_{i-1} \cap \mathtt{Q}_{i-1} \cap \mathtt{R}_{i-1}$ imply that b-epoch $i+1$ is a stable b-epoch, in other words, event $\mathtt{R}_{i+1}$ happens. Once $\mathtt{R}_{i+1}$ has taken place, then again, by a similar chain of arguments, the event $\mathtt{S}_{i+1}$ happens with high probability. Applying a union bound over a polynomial number of b-epochs, the network is partition-resilient and each honest peer sends or receives $O(\log^3 N)$ messages, with a high probability.
\end{proof}

\subsection{Recovery analysis} \label{subsec:rec-ana}

\begin{definition}
An honest peer is said to be \text{\normalfont corrupted} if it either stops functioning (i.e., leaves the network) or becomes Byzantine in which case it is controlled by the adversary.
\end{definition}

\begin{definition}
A bucket is said to have \text{\normalfont failed} if more than $1/2$ fraction of honest peers in it are corrupted.
\end{definition}

\begin{definition}\label{def:safe-comm}
Let the set of buckets in the active directory that are responsible for a committee $C$ be denoted as $B_C$. A committee $C$ is said to be \text{\normalfont safe} if:
\begin{enumerate}
    \item It has at least $20 \log N$ honest peers.
    \item No bucket in $B_C$ has failed.
\end{enumerate}
\end{definition}

\begin{definition}
Two committees $C_1$ and $C_2$ are said to be \text{\normalfont connected} if:
\begin{enumerate}
    \item Each honest peer in $C_1$ is connected to $\Omega(\log N)$ honest peers in $C_2$.
    \item Each honest peer in $C_2$ is connected to $\Omega(\log N)$ honest peers in $C_1$.
\end{enumerate}
\end{definition}

\begin{definition}\label{def:cat-fail}
Let $S$ denote a set of safe committees. Let $G_S$ be the graph where  vertices correspond to safe committees in $S$ and edges represent the connection between two safe committees. The overlay network is said to experience an \text{\normalfont $(\epsilon, \delta)$-catastrophic failure} in b-epoch $e$ if the following events occur.
\begin{enumerate}
    \item There are at least $(1-\epsilon) \mathcal{C}_{e}$ safe committees for a small constant $\epsilon > 0$.
    \item At most $\delta$ fraction of peers get corrupted for a small constant $\delta > 0$.
    \item There exists a graph $G_S$ with a diameter at most $2 \log N$, in which at least $\mu_n a $ fraction of honest peers are present\footnote{For example, a blockchain protocol may progress with 70\% of total honest peers but no Byzantine peers, or with 90\% of total honest peers and 20\% of total peers being Byzantine. $\mu_n$, which depends on the maximum number of Byzantine peers, captures the fact that the blockchain provides its guarantees even if not all the honest peers participate. Note this is a minimal requirement to recover from a (massive) eclipse attack.}, where $|S| \geq a b \mu_n \mathcal{C}_{e}$ for $a > 1/(1-\epsilon')$ and $b = 1/(1-\epsilon'')$ for an arbitrarily small positive constant $\epsilon'$ and $\epsilon'' \leq O((\log^{1/2} N) / N^{1/2y})$.
\end{enumerate}
\end{definition}

\begin{lemma}\label{lemma:multi-hyp} (Restated, Theorem 1 of \cite{datar2002butterflies}) 
No matter which $f$ nodes are made faulty in a multi-hypercube with $n$ nodes and $(\alpha, \beta)$ expansion, there are at least $n - \frac{\beta f}{\beta - 1}$ nodes that  have $\log n$-length path to at least $n - \frac{f}{\alpha(\beta - 1)}$, such that all the nodes in the path are not faulty.
\end{lemma}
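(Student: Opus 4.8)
This lemma is a verbatim restatement of Theorem~1 of \cite{datar2002butterflies}, so the plan is to invoke that result after recalling the argument behind it, which adapts the classical fault-tolerance analysis of multibutterfly networks \cite{upfal1992log, leighton1989expanders}. First I would fix the multibutterfly picture of the $n$-node multi-hypercube: $\log n + 1$ levels of $n$ rows each, consecutive levels joined by bounded-degree bipartite ``splitters'' whose expansion is exactly the $(\alpha,\beta)$-expansion hypothesis --- every row-set $S$ with $|S| \le \alpha n$ has at least $\beta|S|$ neighbours on the adjacent level. A $\log n$-hop path in this picture from row $i$ to row $j$ corresponds to a route that corrects the address bits one level at a time, and a multi-hypercube edge is ``good'' iff both its endpoints are non-faulty.

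The heart of the proof is a fixed-point (``doomed set'') computation. Initialize $D$ to the $f$ faulty nodes and repeatedly add any non-faulty node more than a $1-1/\beta$ fraction of whose next-level out-neighbours (counted with splitter multiplicity) already lie in $D$; let $D_{\mathrm{in}}$ and $D_{\mathrm{out}}$ be the restrictions of the final $D$ to level $0$ and level $\log n$. By construction, any row outside $D_{\mathrm{in}}$ always has a good forward edge into a non-doomed node at the next level, and --- using the recursive splitter structure --- these choices can be steered so that after $\log n$ hops every address bit has been fixed; hence every row outside $D_{\mathrm{in}}$ reaches every row outside $D_{\mathrm{out}}$ along a fault-free $\log n$-path. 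It then remains to bound $|D_{\mathrm{in}}|$ and $|D_{\mathrm{out}}|$: peeling $D$ off level by level and applying $(\alpha,\beta)$-expansion to each newly added layer (each such layer embeds with multiplicity into the previously added layers, so expansion forces it to shrink by a factor $1/\beta$) gives a geometric series of ratio $1/\beta$, summing to $f\cdot\frac{1}{1-1/\beta}=\frac{\beta f}{\beta-1}$ on the source side; carrying the $\alpha$-threshold (against which ``too few reachers'' is measured) through the same accounting on the destination side yields $\frac{f}{\alpha(\beta-1)}$.

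The step I expect to be delicate is precisely this expansion accounting. Expansion is only hypothesized for sets of size at most $\alpha n$, so one must first certify that $D$ never crosses that threshold --- which is exactly where the (implicit) requirement that $f$ be a small enough fraction of $n$, namely $\frac{\beta f}{\beta-1}\le\alpha n$, enters --- and one must count ``out-neighbour'' with the correct multiplicity, or the per-layer shrink factor degrades below $1/\beta$ and the series blows up. The routing claim (that dodging $D$ at every level genuinely lets one fix all $\log n$ bits) is routine given the recursive structure of the multibutterfly, so I would only sketch it and defer the remaining details to \cite{datar2002butterflies, upfal1992log}.
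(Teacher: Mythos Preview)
The paper does not prove this lemma at all: it is stated as a verbatim restatement of Theorem~1 of \cite{datar2002butterflies} and is used as a black box in the subsequent Lemma~\ref{lemma:low-dia}. Your proposal therefore already exceeds what the paper does, since you sketch the underlying doomed-set/blocked-set argument from the multibutterfly fault-tolerance literature rather than merely citing it. For the purposes of this paper a one-line pointer to \cite{datar2002butterflies} is all that is required; your expanded sketch is a reasonable (if informal) summary of the original argument, and the caveats you flag---the implicit size constraint $\frac{\beta f}{\beta-1}\le \alpha n$ needed for expansion to apply, and the multiplicity bookkeeping in the geometric-series bound---are indeed the places where a full proof has to be careful.
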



\begin{lemma}\label{lemma:const-exp-2}
(Restated, Lemma 4.6 of \cite{fiat2007censorship}) Let $l, l', r, 'r, d, k, \lambda$ and $n$ be any positive values where $l' \leq l$, $r' \leq r$, $0 < \lambda < 1$ and
\begin{equation*}
    d \geq \frac{2 r}{r' l' (1-\lambda)^2}\left( l'\ln\left(\frac{le}{l'}\right) + r'\ln\left(\frac{re}{r'}\right) + k\ln n \right)
\end{equation*}
Let $G$ be a random bipartite multigraph with left side $L$ and right side $R$ where $|L| = l$, $|R| = r$ and each node in $L$ has $d$ random neighbours in $R$. Then, with probability at least $1-n^{-k}$, for any subset of $L' \subset L$ where $|L'| = l'$, there is no set $R' \subset R$ where $|R'| \subset R$, such that all nodes in $R'$ share less than $\lambda l' d / r$ edges with $L'$.
\end{lemma}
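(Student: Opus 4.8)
The plan is to establish this as a routine first-moment / union-bound estimate: for each fixed pair of subsets $(L',R')$ of the prescribed sizes, I bound — via a Chernoff lower-tail inequality — the probability that the number of edges between $L'$ and $R'$ is too small, and then I sum these probabilities over all such pairs, choosing the threshold on $d$ so that the sum is at most $n^{-k}$.

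The first step is to fix $L' \subseteq L$ with $|L'| = l'$ and $R' \subseteq R$ with $|R'| = r'$. Since every left vertex picks its $d$ right-neighbours independently and uniformly at random, the $l'd$ edge-endpoints emanating from $L'$ are mutually independent, and each lands inside $R'$ with probability $r'/r$ (this independence is unaffected by the graph being a multigraph). Let $X$ count the edges between $L'$ and $R'$ with multiplicity, so $\E[X] = \mu := l' d r' / r$. The key translation step is that if \emph{every} node of $R'$ had fewer than $\lambda l' d / r$ edges to $L'$, then summing over the $r'$ nodes of $R'$ forces $X < r'\cdot \lambda l' d/r = \lambda\mu$; since $\lambda < 1$ this is a lower-tail deviation, and the multiplicative Chernoff bound yields $\mathrm{Pr}[X < \lambda\mu] \le \exp(-(1-\lambda)^2\mu/2) = \exp\!\big(-(1-\lambda)^2 l' d r'/(2r)\big)$.

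The second step is to union-bound over the $\binom{l}{l'} \le (le/l')^{l'}$ choices of $L'$ and the $\binom{r}{r'} \le (re/r')^{r'}$ choices of $R'$, so that the total failure probability is at most
\[
\left(\tfrac{le}{l'}\right)^{l'}\left(\tfrac{re}{r'}\right)^{r'}\exp\!\left(-\tfrac{(1-\lambda)^2 l' d r'}{2r}\right).
\]
Taking logarithms, this is $\le n^{-k}$ exactly when $(1-\lambda)^2 l' d r'/(2r) \ge l'\ln(le/l') + r'\ln(re/r') + k\ln n$, which is precisely the hypothesised lower bound on $d$ after dividing through by $(1-\lambda)^2 l' r'/(2r)$. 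The computations are entirely routine; the only points that need care are (i) checking the containment of the stated "bad" event inside $\{X < \lambda\mu\}$ — i.e. converting "every node of $R'$ has few edges" into a bound on the total edge count — and (ii) invoking the correct multiplicative form of the Chernoff lower tail, noting that the hypothesis on $d$ has been reverse-engineered precisely to cancel the two binomial-coefficient exponents and leave a slack of $k\ln n$. I expect no genuine obstacle beyond this bookkeeping.
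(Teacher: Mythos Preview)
The paper does not supply its own proof of this lemma: it is quoted verbatim from \cite{fiat2007censorship} and used as a black box. Your argument is correct and is exactly the standard proof one expects for such an expansion statement --- fix $L',R'$, observe that the $l'd$ edge-endpoints from $L'$ are i.i.d.\ uniform in $R$, reduce the per-vertex condition on $R'$ to the aggregate event $X<\lambda\mu$, apply the multiplicative Chernoff lower tail, and union-bound over $\binom{l}{l'}\binom{r}{r'}$ using $\binom{m}{j}\le(me/j)^j$. The algebra you wrote out matches the stated threshold on $d$ line for line, so there is nothing to add.
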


\begin{lemma}\label{lemma:low-dia}
No matter which $f$ nodes are made faulty in a multi-hypercube with $n$ nodes and $(\alpha, \beta)$ expansion, there exists a connected component of at least $n - 3f/2$ non-faulty nodes with a diameter of at most $2 \log n$, for $\alpha(\beta - 1) \geq 2/3$ and $\beta \geq 3$.
\end{lemma}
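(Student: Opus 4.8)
The plan is to derive the statement as a short corollary of Lemma~\ref{lemma:multi-hyp}, with one extra \emph{symmetrization} step that converts the one-directional reachability ``$A$ reaches $B$'' into a two-sided diameter bound. First I would dispose of the trivial regime: if $f \ge 2n/3$ then $n - 3f/2 \le 0$ and any single non-faulty node (or the empty set) works, so assume $f < 2n/3$. Then I would apply Lemma~\ref{lemma:multi-hyp} to the multi-hypercube with the $f$ faulty nodes fixed, obtaining a set $A$ of non-faulty ``sources'' with $|A| \ge n - \tfrac{\beta f}{\beta-1}$ and a set $B$ of non-faulty ``targets'' with $|B| \ge n - \tfrac{f}{\alpha(\beta-1)}$, such that every node of $A$ is joined to every node of $B$ by a path of length at most $\log n$ all of whose vertices are non-faulty. (Following the multibutterfly fault-tolerance literature, this target set $B$ is the same for all sources in $A$; this is the form in which Datar's theorem is usually stated, and is what I will use. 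Note also that the endpoints lie on the paths, so both $A$ and $B$ consist of non-faulty nodes.)

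Next I would plug in the numerical hypotheses. Since the map $t \mapsto t/(t-1)$ is decreasing for $t > 1$ and $\beta \ge 3$, we get $\beta/(\beta-1) \le 3/2$, hence $|A| \ge n - 3f/2$; and since $\alpha(\beta-1) \ge 2/3$, we get $1/(\alpha(\beta-1)) \le 3/2$, hence $|B| \ge n - 3f/2$, so in particular $B \ne \emptyset$. This is exactly why the two constraints are stated the way they are: one calibrates $|A|$ and the other calibrates $|B|$, and both have to clear the same threshold $n - 3f/2$.

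Then comes the symmetrization. Fix any $x \in B$. For arbitrary $u, w \in A$, concatenate the non-faulty $u$--$x$ path with the reverse of the non-faulty $w$--$x$ path; this is a non-faulty walk from $u$ to $w$ of length at most $2\log n$, so it contains a non-faulty path of length at most $2\log n$. Hence the set $S := A \cup \{x\}$ consists of non-faulty nodes, has $|S| \ge |A| \ge n - 3f/2$, and every pair of its vertices is at distance at most $2\log n$ in the subgraph induced on the non-faulty nodes; equivalently, $S$ is contained in a single connected component of that subgraph and witnesses diameter at most $2\log n$ on at least $n - 3f/2$ non-faulty nodes. That is precisely the claimed structure, in the form in which it is subsequently consumed (a large, low-diameter honest subgraph on which the blockchain guarantees of Section~\ref{sec:model} apply).

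I expect the only genuinely non-routine point to be this symmetrization: Lemma~\ref{lemma:multi-hyp} gives reachability \emph{from} $A$ \emph{into} $B$ but says nothing about $A$ being internally well-connected, so one is forced to route each pair of $A$-vertices through a shared target in $B$, paying a factor of two in path length (from $\log n$ to $2\log n$) and needing $|B|$, not merely $|A|$, to exceed $n - 3f/2$; everything else is bookkeeping with the constants. I would also record that this lemma is conditional on the existence of a multi-hypercube whose $(\alpha,\beta)$-expansion satisfies $\alpha(\beta-1)\ge 2/3$ and $\beta \ge 3$: that existence, together with the fact that the node degree stays within a constant factor of the hypercube degree, is established separately via the random-bipartite-multigraph expansion estimate of Lemma~\ref{lemma:const-exp-2} and Datar's construction, and is not re-derived here.
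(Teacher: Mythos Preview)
Your proof is correct and follows essentially the same approach as the paper: both invoke Lemma~\ref{lemma:multi-hyp}, plug in the parameter constraints to get both the source and target sets of size at least $n-3f/2$, and then route each pair through a common pivot to obtain the $2\log n$ diameter bound. The only cosmetic difference is that the paper pivots through a single source (input) to connect the $n-3f/2$ targets (outputs), whereas you pivot through a single target $x\in B$ to connect the sources $A$; since both sets clear the $n-3f/2$ threshold, the choice is immaterial.
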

\begin{proof}
This follows directly from Lemma \ref{lemma:multi-hyp}. Viewing the multi-hypercube as a $n$-node multi-butterfly network where a row of switches is simulated by a node, the theorem says that any of the $\left(n - \frac{\beta f}{\beta - 1}\right)$ (non-faulty) inputs can reach any of the (non-faulty) $\left(n - \frac{f}{\alpha(\beta - 1)}\right)$. For $\alpha(\beta - 1) \geq 2/3$ and $\beta \geq 3$, consider a connected component $c$ of any one such (non-faulty) input and the $n - 3f/2$ (non-faulty) outputs including the intermediate switches; each pair of them can reach other by a path of length at most $2 \log n$. In case of a multi-hypercube, since a row (consisting of one input, one output and $\log n$ switches) is simulated by one node, there are at least $n - 3f/2$ \emph{distinct} nodes (because one output is simulated by one node and there are $n - 3f/2$ such outputs) in $c$ that have a diameter of at most $2 \log n$.
\end{proof}

\begin{lemma}\label{lemma:advers-fail-good-comm}
For $e \geq 2$ and constant $\lambda_p > 0$, if at most $\delta$ fraction of peers are corrupted in a stable b-epoch $e-1$, then in b-epoch $e$, there are at least $(1-\epsilon')\mathcal{C}_e$ committees that are assigned with at least $\lambda_p \log N$ honest peers with high probability, where $\epsilon' \leq O(1 / \log N)$ and $\delta < 1$ is a small positive constant.
\end{lemma}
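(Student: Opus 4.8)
The plan is to re-run the argument of Lemma~\ref{lemma:hon-peer-lower} with a larger constant $\lambda_n$, so that every committee of b-epoch $e$ \emph{starts} with many more than $\lambda_p\log N$ honest peers, and then to show that an adversary which additionally corrupts a $\delta$-fraction of all peers can push only an $O(1/\log N)$-fraction of committees below the $\lambda_p\log N$ threshold. Here ``honest peers assigned to committee $k$ in b-epoch $e$'' means peers that produced a node during b-epoch $e-1$ whose committee ID (the leftmost $\log\mathcal{C}_e$ bits of $P_{\mathit{join}}$, read in the hypercube of b-epoch $e$) is $k$ and that are still honest at the start of b-epoch $e$. First I would note that in the stable b-epoch $e-1$ at least a $(1-\rho-\delta)$-fraction of peers are honest and uncorrupted throughout, so by Remark~\ref{rem:exp-nodes} and a Chernoff bound $\Theta(\lambda_n\mathcal{C}_{e-1}\log N)$ fresh honest nodes are generated, each landing in an independent uniform committee of b-epoch $e$ (random oracle); since consecutive stable b-epochs differ in size by at most a constant factor, $\mathcal{C}_e=\Theta(\mathcal{C}_{e-1})$, and a balls-and-bins estimate plus a union bound over all $\mathcal{C}_e$ committees gives, for $\lambda_n$ large, that every committee receives at least $c\lambda_n\log N$ fresh honest nodes, hence at least $c'\lambda_n\log N$ fresh honest peers by Lemma~\ref{lemma:peer-node-comm}; absorbing normal churn as in Lemma~\ref{lemma:hon-peer-lower} costs only another constant factor. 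Writing $H_k$ for this honest-peer set of committee $k$ before the extra corruption, I set $\tau:=c'\lambda_n\log N-\lambda_p\log N=\Theta(\lambda_n\log N)$, so that committee $k$ ends b-epoch $e-1$ \emph{deficient} only if the adversary corrupts at least $\tau$ members of $H_k$.

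The core step bounds how many committees can be made deficient. Let $A$ be the set of peers corrupted by the end of b-epoch $e-1$; by the stable-b-epoch bound $|A|\le\delta\lambda_s\mathcal{C}_{e-1}=O(\delta\mathcal{C}_e)$, and by Lemma~\ref{lemma:peer-node-comm}/Lemma~\ref{lemma:totalnodes} every peer lies in $O(\log N)$ committees. Fix $\epsilon'=c_1/\log N$. For a \emph{fixed} size-$\epsilon'\mathcal{C}_e$ committee set $S$ and a \emph{fixed} candidate corrupted set $A$ of size $O(\delta\mathcal{C}_e)$, the incidence count $\sum_{p\in A}\lvert\{k\in S:p\in H_k\}\rvert$ is a sum of independent bounded terms (one per node of a peer of $A$, falling into $S$ with probability $\epsilon'$) with mean $O(\delta\mathcal{C}_e\cdot\log N\cdot\epsilon')=O(\delta\mathcal{C}_e)$, whereas making all of $S$ deficient would force this count up to $\epsilon'\mathcal{C}_e\tau=\Theta(\lambda_n\mathcal{C}_e)$, a constant factor $\Theta(\lambda_n/\delta)$ above the mean; a Chernoff bound rules this out except with probability $e^{-\Omega(\lambda_n\mathcal{C}_e)}$. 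I would then union-bound over all size-$\epsilon'\mathcal{C}_e$ sets $S$ and all size-$O(\delta\mathcal{C}_e)$ sets $A$: there are $e^{O(\delta\ln(1/\delta)\,\mathcal{C}_e)}$ such pairs (the $\epsilon'=c_1/\log N$ subset count being negligible), which is overwhelmed by the per-pair failure probability $e^{-\Omega(\lambda_n\mathcal{C}_e)}$ once $\lambda_n$ is a large enough constant (larger than the implied constant multiplying $\delta\ln(1/\delta)$); using $\mathcal{C}_e\ge N^{1/y}/\lambda_s$, the total failure probability is far below $N^{-k}$, so whp no size-$\epsilon'\mathcal{C}_e$ set of committees can be made deficient.

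On this whp event, if more than $\epsilon'\mathcal{C}_e$ committees were deficient I could pick $\epsilon'\mathcal{C}_e$ of them as $S$ and obtain $\epsilon'\mathcal{C}_e\tau\le\sum_{k\in S}\lvert A\cap H_k\rvert=O(\delta\mathcal{C}_e)$; but $\tau=\Theta(\lambda_n\log N)$ and $\epsilon'=c_1/\log N$ make the left side $\Theta(c_1\lambda_n\mathcal{C}_e)$, which exceeds $O(\delta\mathcal{C}_e)$ once $c_1$ is fixed so that $c_1\lambda_n$ beats the hidden constant times $\delta$ --- a contradiction. Hence at most $\epsilon'\mathcal{C}_e=O(\mathcal{C}_e/\log N)$ committees are deficient, i.e.\ at least $(1-\epsilon')\mathcal{C}_e$ committees keep $\ge\lambda_p\log N$ honest peers; a final union bound over the events of the two preceding steps completes the argument.

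The hard part is the core step: the target set $S$ is chosen \emph{adversarially} (the adversary knows the topology, so it will try to align the committees it sabotages with the random committee memberships of the peers it corrupts), so the concentration estimate must hold simultaneously over all $e^{O(\delta\ln(1/\delta)\,\mathcal{C}_e)}$ candidate pairs $(S,A)$; this is possible only because $\mathcal{C}_e\ge N^{1/y}$ is polynomially large and $\lambda_n$ can be made an arbitrarily large constant. This is also exactly where $\epsilon'=O(1/\log N)$ comes from: a peer lies in only $O(\log N)$ of the $\mathcal{C}_e$ committees, so once $\lvert S\rvert=O(\mathcal{C}_e/\log N)$ a typical corrupted peer meets just $O(1)$ committees of $S$, and the adversary's budget of $O(\delta\mathcal{C}_e)$ corrupted peers falls a $\Theta(\lambda_n)$ factor short of the $\Theta(\lambda_n\mathcal{C}_e)$ incidences that $\epsilon'\mathcal{C}_e$ simultaneously deficient committees would demand.
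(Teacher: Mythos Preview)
Your proposal is correct and follows essentially the same strategy as the paper: both arguments exploit the bipartite expansion of the random peer-to-committee assignment graph to show that no adversarial choice of a $\delta$-fraction of peers can simultaneously deplete more than an $O(1/\log N)$-fraction of committees. The only difference is packaging. The paper invokes Lemma~\ref{lemma:const-exp-2} (the Fiat--Saia bipartite expander lemma) as a black box, taking the honest peers as the left side $L$ with $d=\Theta(\lambda_n\log N)$ random edges each to committees on the right, and reading off directly that for any $L'$ of the relevant size there can be no $\epsilon'\mathcal{C}_e$-sized set of committees all receiving too few edges from $L'$. You instead re-derive this expansion fact inline: your incidence-count argument with Chernoff plus the union bound over all $(S,A)$ pairs is exactly the proof of Lemma~\ref{lemma:const-exp-2} specialized to these parameters. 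Both routes hinge on the same quantitative tension---the adversary's $O(\delta\mathcal{C}_e)$ budget of peer-committee incidences versus the $\Theta(\lambda_n\mathcal{C}_e)$ incidences needed to kill $\epsilon'\mathcal{C}_e$ committees---and both need $\lambda_n$ to dominate the entropy term $\delta\ln(1/\delta)$ coming from the number of adversarial choices.
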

\begin{proof}
First, we get a lower bound on the number of honest peers that stay in the entire b-epoch $e-1$. By stable b-epoch property and the fact that at most $\rho$ fraction are Byzantine, this is at least $(1-\rho) \mathcal{C}_{e-1} / \lambda_s$. Then, we find a lower bound on the number of new nodes generated in b-epoch $e-1$ by each of these peers. Recall that the node difficulty threshold, $p_n = (\lambda_n \log N) / (q \alpha)$, is fixed. By blockchain liveness (for number of rounds in a b-epoch), the expected number of new honest nodes, in b-epoch $e-1$, is at least $(\lambda_n \log N) / \mu^2_b$. Applying Chernoff bounds and union bound, the total number of honest nodes by each of those peers generated is at least $d = \Theta(\lambda_n \log N)$ with high probability. (We drop the other constants unless necessary, as $\lambda_n$ controls the failure probability.)

These nodes get randomly mapped to $\mathcal{C}_{e}$ committees. (Note that $\mathcal{C}_e = \mathcal{C}_{e-1}$ if b-epoch $e-1$ is not a transformation b-epoch, and even otherwise, $\mathcal{C}_e = \Theta(\mathcal{C}_{e-1})$.) Here, in Lemma \ref{lemma:const-exp-2}, we can consider these peers as the left side $L$ and the committees in b-epoch $e$ as the right side $R$ where each vertex in $L$ has at least $d$ random neighbours in $R$. In other words, with probability less than $n^{-k}$ for some constant $k > 2$, there exists a set $L' \subset L$ such that $l' = \delta l$ and a set $R' \subset R$ of size $(1-\epsilon') \mathcal{C}_e$ where each vertex has less than $\lambda_p c_m \log N$ nodes, for a positive constant $c_m$. Specifically, $\lambda_n$ is set large enough so that $\lambda l' d / r \geq \lambda_p c_m \log N$. As every other parameter is fixed, obeying the condition of Lemma \ref{lemma:const-exp-2} by (again) setting $\lambda_n$ large enough, with high probability, we get that there are at least $(1-\epsilon')\mathcal{C}_e$ committees that are assigned with at least $\lambda_p c_m \log N$ honest nodes even after $\delta$ fraction of peers get corrupted.

Finally, by Lemma \ref{lemma:peer-node-comm}, each peer controls at most $c_m$ nodes in a committee with a high probability. Thus, in b-epoch $e$, there are at least $(1-\epsilon')\mathcal{C}_e$ committees that are assigned with at least $\lambda_p \log N$ honest peers with high probability.
\end{proof}

\begin{lemma}\label{lemma:replenish}
In any round $r$, if there are at least $\mu_n a I$ honest peers in a set $S$ of committees, the following statements hold with high probability, where $|S|, |S'| \geq \mu_n a b C$, $C$ is the number of committees, $\mu_n a b < 1$, $a > 1/(1-\epsilon')$ and $b = 1/(1-\epsilon'')$ for $\epsilon'' \leq O((\log^{1/2} N) / N^{1/2y})$ and an arbitrarily small positive constant $\epsilon'$. Here, $I$ is the total number of honest peers at round $r$. $J_{r'}$ and $L_{r'}$ are the number of new honest peers that joined the network and the number of peers that left the network respectively, from round $r$ to $r'$.
\begin{enumerate}
    \item There are at least $\mu_n a (I + J_{r_1} - L_{r_1})$ honest peers at round $r + r_1$ for any $\alpha / \mu^2_b \leq r_1 \leq \alpha$ in set $S'$ of committees.
    \item For any round $r \leq r_2 \leq \alpha$, there are at least $\mu_n  (I + J_{r_2} - L_{r_2})$ honest peers in the set $S$ of committees.
\end{enumerate}
\end{lemma}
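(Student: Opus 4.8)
Read round $r$ as the start of a b-epoch $e$ and prove the two assertions as the two halves of one inductive step, so that the lemma chains over the $O(1)$ b-epochs needed for recovery: statement~2 guarantees the current good set $S$ keeps the $\ge\mu_n\cdot(\text{honest population})$ honest peers in a connected, diameter-$\le2\log N$ sub-overlay that the blockchain model demands, while statement~1 produces a fresh good set $S'$, back up to the $\mu_n a$ fraction, one b-epoch later. Two structural facts drive everything. First, a node's committee is the leftmost $\log\mathcal C_e$ bits of its proof-of-work output, so every newly mined or re-mined honest node is assigned to a uniformly random committee, independently of the adversary. Second, the honest churn sequence is oblivious and honest peers get random identities, so the honest peers leaving in any window are, up to Chernoff error, a uniformly chosen subset, spread proportionally over committees. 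Using statement~2 (and the hypothesized $\mu_n a I$ honest peers in $S$) the blockchain keeps confirming blocks during $[r,r+\alpha]$; by fairness a constant fraction of the blocks --- hence of the directory nodes --- mined in this window are honest, and since one b-epoch spans $\Theta(\mathcal K)$ blocks, a brand-new, correctly populated directory (each bucket with $\Theta(\log^2N)$ honest nodes, by the argument of Theorem~\ref{the:actdir-robust}) is finished within the b-epoch.

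\textbf{Statement 2.} Fix $r_2\le\alpha$ and let $I''=I+J_{r_2}-L_{r_2}$ be the honest population at round $r+r_2$; track the honest peers participating in the good committees of $S$. A committee good at round $r$ (at least $\lambda_p\log N$, indeed $\ge20\log N$, honest peers) keeps at least half of them over $\le\alpha$ rounds by a Chernoff bound on its independent, probability at most $\tfrac12$ peer departures, and its serving buckets --- good at round $r$ by hypothesis --- similarly keep an honest majority of directory nodes unless a bucket exhausts its $T_{dl}$-block lifetime before its (honest, by fairness) successor is installed; the lifetime inequality $T_{dl}>(1+\mathcal B)\lambda_d\log^2N+T_l$ and the liveness bound on b-epoch length put only an $o(1)$ fraction of committees in that transient, so $S$ loses only an $o(1)$ fraction of its good committees. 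Of the $L_{r_2}$ honest departures, at most $(1+o(1))\mu_n a\,L_{r_2}$ come from $S$'s peers (uniform placement plus Chernoff), while of the $J_{r_2}$ new honest joins at least $|S|/\mathcal C\ge\mu_n a b$ land in $S$, of which a $\ge1/b$ fraction lie in still-good committees and --- being within one b-epoch of arrival --- complete \texttt{JOIN} with high probability by Lemma~\ref{lemma:join-succ}. Summing, $S$ holds at least $\mu_n a(I-L_{r_2}+J_{r_2})-o(I'')=\mu_n a\,I''-o(I'')$ honest peers, and since $a>1$ by a fixed positive constant this beats $\mu_n I''$.

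\textbf{Statement 1.} Conditioned on the blockchain running, the honest nodes mined during the b-epoch number at least a constant times the honest hash power, which statement~2 keeps $\ge\mu_n\cdot(\text{honest population})$; scattered uniformly over the $\mathcal C_e$ committees (or the $\mathcal C_{e+1}$ committees of the new hypercube if a dimension change occurs in the b-epoch), they give, by the bipartite-expansion bound of Lemma~\ref{lemma:const-exp-2} exactly as in Lemma~\ref{lemma:advers-fail-good-comm}, at least $(1-O(1/\log N))\mathcal C$ committees with $\ge\lambda_p\log N$ honest peers. Deleting the $O(\mathcal C/\log N)$ bad committees from the multi-hypercube and applying Lemma~\ref{lemma:low-dia} leaves a connected, diameter-$\le2\log N$ component $S'$ on $(1-O(1/\log N))\mathcal C\ge\mu_n a b\,\mathcal C$ committees. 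Finally, since $|S'|/\mathcal C=\Omega(1)$ and each honest peer controls $\Omega(\log N)$ nodes in distinct random committees (Lemmas~\ref{lemma:totalnodes} and~\ref{lemma:peer-node-comm}), all but an $\epsilon''$ fraction of the honest peers present at round $r+r_1$ --- those not too recently arrived to have finished \texttt{JOIN} --- have $\Omega(\log N)$ nodes in good committees of $S'$ except with probability $N^{-c}$; a union bound over the polynomially many peers yields at least $(1-O(1/\log N)-\epsilon'')(I+J_{r_1}-L_{r_1})\ge\mu_n a(I+J_{r_1}-L_{r_1})$ participating honest peers in $S'$, using $\mu_n a<\mu_n a b<1$.

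\textbf{Main obstacle.} The delicate part is the proportion accounting in statement~2: one must rule out the adversary concentrating honest departures on $S$ (handled by the oblivious-churn/random-identity model) and bound the damage from the directory hand-off that occurs mid-b-epoch to an $o(1)$ fraction of committees (handled by the lifetime inequalities of Section~\ref{subsec:setparam} and the liveness bounds on b-epoch length), and one must choose the slack constants $a>1/(1-\epsilon')$ and $b=1/(1-\epsilon'')$ large enough to absorb every lower-order error term yet small enough to keep $\mu_n a b<1$. The apparent circularity --- statement~2 needs the blockchain to run, which needs the $\mu_n$ bound, which is statement~2 --- is broken by the induction on b-epochs.
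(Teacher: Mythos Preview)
Your write-up conflates this lemma with the surrounding recovery theorem. The paper is explicit that Lemma~\ref{lemma:replenish} ``does not deal with successful joins, robust active directory, etc., but only talks about peers getting \emph{assigned} to committees''; all of the machinery you invoke --- directory hand-off, Lemma~\ref{lemma:join-succ}, the multi-hypercube component via Lemma~\ref{lemma:low-dia}, the bipartite-expansion bound of Lemma~\ref{lemma:const-exp-2} --- belongs to Theorem~\ref{theorem:rec-time}, not here. In particular, $S'$ is \emph{given} in the hypothesis (any set with $|S'|\ge\mu_nab\,C$), not constructed; the paper's proof of statement~1 is simply: every honest peer alive at $r+r_1$ has had $r_1\ge\alpha/\mu_b^2$ rounds to mine at least one fresh node (whp, for large $\lambda_n$), each such node lands in a uniformly random committee, so in expectation a $\mu_nab$ fraction hit $S'$, and the Chernoff slack is absorbed by the factor $b=1/(1-\epsilon'')$.

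The real gap is in your statement~2 accounting. You write the error as $o(I'')$ and then conclude $\mu_n a\,I''-o(I'')\ge\mu_n I''$ ``since $a>1$ by a fixed positive constant''. But $\epsilon'$ is a fixed constant and $L_{r_2}$ can be $\Theta(I)$, so the departure overshoot $(1+\epsilon')\mu_n a L_{r_2}-\mu_n a L_{r_2}=\epsilon'\mu_n a L_{r_2}$ is $\Theta(I'')$, not $o(I'')$; your inequality does not follow. The paper handles this by (i)~batching joins and leaves into groups of size $s=\Theta(\log N)$ and applying Chernoff to each batch, obtaining at most $(1+\epsilon')\mu_n a\,L_{r_2}+O(\log N)$ departures from $S$ and at least $(1-\epsilon')\mu_n a\,J_{r_2}-O(\log N)$ arrivals, and then (ii)~closing the algebra with the half-life bound $L_{r_2}\le I/2$: writing
\[
H_{r_2}\;\ge\;\mu_n(I-L_{r_2}+J_{r_2})\;+\;\mu_n\bigl((a-1)I-(a(1+\epsilon')-1)L_{r_2}\bigr)\;-\;O(\log N),
\]
the bracket is nonnegative precisely when $a\ge 1/(1-\epsilon')$ once $L_{r_2}\le I/2$ is substituted. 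That half-life step is exactly what makes the constraint $a>1/(1-\epsilon')$ tight, and it is absent from your argument.
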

\begin{proof}
The Lemma says that there exists a large enough set of committees that can be replenished with a set of distinct honest peers in a span of a b-epoch. And that in any round in the b-epoch, despite joins and leaves, a certain fraction of honest peers are maintained in those set of committees. Note that this Lemma does not deal with successful joins, robust active directory, etc., but only talks about peers getting assigned to committees. (They will be dealt with in Theorem \ref{theorem:rec-time}.)

Let us prove the first property. Let $L^1_{r_1}$ be the number of honest peers in round $r$ that left the network by round $r_1$. Let $L^2_{r_1}$ be the number of honest peers from the set of peers that joined in subsequent rounds left the network by round $r_1$. Thus, $L_{r_1} = L^1_{r_1} + L^2_{r_1}$. Let $J^1_{r_1}$ be the number of (new) honest peers that joined and stayed till round $r_1$. Let $J^2_{r_1}$ be the number of (new) honest peers that joined and left the network by $r_1$. Thus, $J_{r_1} = J^1_{r_1} + J^2_{r_1}$.

Note that $(I + J_{r_1} - L_{r_1}) = (I + J^1_{r_1} - L^1_{r_1})$. For a large enough $\lambda_n$, each of the $(I - L^1_{r_1})$ get at least one new (valid) node with high probability (via Chernoff bounds). Since $J^1_{r_1}$ peers joined the network, they must have joined with at least one (valid) node. Thus, out of the $(I + J^1_{r_1} - L^1_{r_1})$ peer joins, on expectation, $\mu_n a b (I + J^1_{r_1} - L^1_{r_1})$ go to the set $S'$ of committees (as each node is randomly mapped to a committee). Thus, applying Chernoff bounds, for $\epsilon'' \leq O((\log^{1/2} N) / N^{1/2y})$, there are at least $\mu_n a (I + J_{r_1} - L_{r_1})$ honest peers at round $r + r_1$ with high probability.

Consider a large enough $s = \Theta(\log N)$. We batch together $s$ leaves and $s$ joins over the rounds until round $r_2$ to prove the second property. Batching the joins together is not difficult as they are independent of each other. Batching the leaves together requires a little care. Recall that the sequence of honest leaves are obliviously specified by the churn adversary. Let us say that there are $n$ peers at round $r$. And, for some round $r' \geq r$, there are $l$ honest peer leaves and $j$ honest peer joins from round $r$ until round $r'$. Then, focusing on $r'$, we can claim that any set of $l$ honest peers (out of $n+j$ peers) has equal probability of having left.

Applying Chernoff bounds for a batch of $s$ honest leaves, there are at most $(1+\epsilon')\mu_n a s$ honest leaves from the set $S$ of committees with high probability for a small $\epsilon' > 0$ and large enough $s$. Similarly, applying Chernoff bounds for a batch of $s$ honest joins, there are at least $\mu_n a b (1-\epsilon')s \geq \mu_n a (1-\epsilon')s$ honest joins from the set $S$ of committees with high probability for a small $\epsilon' > 0$ and large enough $s$.

Let the number of honest peers at round $r$ in the set $S$ of committees be denoted as $H_{r}$. Then, for any round $r \leq r_2 \leq \alpha$,
\begin{align*}
   H_{r_2} &\geq \mu_naI - \mu_n a(1+\epsilon')L_{r_2} - O(\log N) + \mu_n a(1-\epsilon')J_{r_2} - O(\log N)\\
   &= \mu_n  (I + (a-1)I - L_{r_2} - (a(1+\epsilon')-1)L_{r_2} + J_{r_2} + (a(1-\epsilon') - 1)J_{r_2}) - O(\log N)\\
   &= \mu_n (I - L_{r_2} + J_{r_2}) + \mu_n ((a-1)I - (x(1+\epsilon')-1)L_{r_2} + (a(1-\epsilon') - 1)J_{r_2}) - O(\log N).
\end{align*}
For $a \geq 1/(1-\epsilon')$, we can ignore the second join term as that would only increase the RHS, 
\begin{equation*}
   H_{r_2} \geq  \mu_n (I - L_{r_2} + J_{r_2}) + \mu_n ((a-1)I - (a(1+\epsilon')-1)L_{r_2}) - O(\log N).
\end{equation*}
By using the fact that if the number of honest peers at round $r$ is $I$, then at any $r \leq  r_2 \leq \alpha$, then at most half the number of those peers can leave the network, i.e., considering all peers are honest (which maximizes $L_{r_2}$), we get that $L_{r_2} \leq I/2$. Thus, for $a > 1/(1-\epsilon')$,
\begin{align*}
   H_{r_2} &\geq  \mu_n (I - L_{r_2} + J_{r_2}) + \mu_n I((a-1) - (a(1+\epsilon')-1)/2) - O(\log N)\\
   &\geq \mu_n (I - L_{r_2} + J_{r_2}) + \mu_nI\left( a\left(\frac{1-\epsilon'}{2}\right) - \frac{1}{2} \right) - O(\log N)\\
   &\geq  \mu_n (I - L_{r_2} + J_{r_2}).
\end{align*}

\end{proof}

\begin{theorem}\label{theorem:rec-time}
If the overlay network experiences an $(\epsilon, \delta)$-catastrophic failure, then it becomes partition-resilient in a constant number of epochs with high probability.
\end{theorem}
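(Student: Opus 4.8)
The plan is to show that once a catastrophic failure has occurred, the blockchain keeps making progress, and that the ordinary ``steady-state'' machinery (Theorems~\ref{the:actdir-robust}, \ref{the:partres}, \ref{theorem:main}) can be re-run starting from the post-failure state, so that after a constant number of epochs the overlay is again partition-resilient. First I would observe that by Definition~\ref{def:cat-fail} there is, at the moment of failure, a subgraph $G_S$ of safe committees of diameter at most $2\log N$ carrying at least a $\mu_n a$ fraction of the honest peers; since $a > 1/(1-\epsilon')$ and at most a $\delta$ fraction of peers are corrupted, this exceeds the $\mu_n(1-\rho)n$ threshold required by the blockchain model of Section~\ref{sec:model}, so safety, liveness and fairness continue to hold. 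The next ingredient is that the introductory service is unaffected (one of the two recovery requirements of Section~\ref{sec:recovery}), so new peers can still obtain a recent chain and run \texttt{JOIN}. Because new blocks keep being confirmed, fresh buckets are formed entirely from post-failure blocks within $O(\lambda_{\mathit{dl}}\mu_b)$ epochs, after which every pre-failure bucket (middle-aged or veteran, failed or not) has died and been replaced; by blockchain fairness these fresh buckets each contain $\Omega(\log^2 N)$ honest directory nodes, so the argument of Theorem~\ref{the:actdir-robust} re-establishes a robust active directory. In parallel, new non-directory nodes are generated at the fixed rate $p_n$ and mapped uniformly to \emph{all} $\mathcal{C}_e$ committees, including the previously-failed ones; Lemma~\ref{lemma:advers-fail-good-comm} (applicable since at most a $\delta$ fraction are corrupted) together with a balls-and-bins count over the $\Theta(\mathcal{C}_e \log N)$ honest nodes generated per epoch gives that within $O(\lambda_l \mu_b)$ epochs every committee has $\ge \lambda_p \log N$ honest peers, and Lemma~\ref{lemma:join-succ} then yields successful joins, hence partition-resilience.

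The delicate point is the circularity: the blockchain must keep progressing \emph{throughout} the recovery window, which requires a persistent low-diameter honest subgraph, which in turn requires enough honest peers in enough committees --- essentially the very property we are trying to restore. This is where Lemma~\ref{lemma:replenish} and the multi-hypercube fault tolerance (Lemma~\ref{lemma:low-dia}, via Lemma~\ref{lemma:multi-hyp}) do the work. Lemma~\ref{lemma:replenish} guarantees that a large set $S$ of committees retains at least a $\mu_n a$ fraction of honest peers round-by-round over a b-epoch despite ongoing churn; treating the at most $\epsilon\mathcal{C}_e$ still-failed committees and the committees still lagging in repopulation as adversarially removed nodes, Lemma~\ref{lemma:low-dia} says the remaining committees still contain a connected component of $\ge |S| - 3f/2$ committees of diameter at most $2\log N$. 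The constants $a$, $b$, $\epsilon'$, $\epsilon''$ of Definition~\ref{def:cat-fail} are exactly what is needed so that this component carries $\ge \mu_n(1-\rho)n$ honest peers at every round, which feeds back into the blockchain guarantee and closes the loop; iterating this over the $O(1)$ recovery epochs (as in the inductive chain of Theorem~\ref{theorem:main}) keeps the argument self-consistent.

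Finally I would handle the dynamic network size. During recovery the size may still vary, so I must check that \texttt{NET\_SIZE\_EST} (Algorithm~\ref{alg:sizeest}) and the dimension-change protocol of Section~\ref{subsec:dimchange} behave. The random committee $s$ chosen for the estimate is safe with probability at least $1-\epsilon$, and the honest peers in the surviving low-diameter subgraph can broadcast/receive its count within $\Delta$ rounds, so the analysis of Lemma~\ref{lemma:netsizeest} goes through with the corrupted honest peers simply absorbed into the $(1-\rho)$ slack already present in the estimate window; thus $M'_e$ is still within the required multiplicative bounds and, via blockchain fairness, the peers still agree on the new parameters. A dimension change triggered during recovery is moreover harmless, because old non-directory nodes are discarded on any dimension change anyway (Section~\ref{subsec:dimchange}), so the new hypercube is populated from scratch by exactly the fresh honest nodes counted above. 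Combining these pieces, after a constant number of epochs all pre-failure nodes and buckets are gone, the active directory is robust, every committee (failed or not) has $\Omega(\log N)$ honest peers with the required intra- and inter-committee connections, and the overlay is partition-resilient with high probability.

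\textbf{Main obstacle.}
The hardest part will be the self-referential bookkeeping of the second paragraph: making the constants $a,b,\epsilon,\delta,\lambda_n,\mu_n$ line up so that the multi-hypercube slack is never exhausted --- i.e., that the low-diameter honest subgraph supporting the blockchain (and the size-estimation broadcast) survives every round of the $O(1)$ recovery epochs even while a constant fraction of committees are simultaneously broken and being repopulated --- rather than any single step in isolation.
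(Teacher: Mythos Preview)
Your overall architecture matches the paper's: both argue that the large low-diameter honest component keeps the blockchain progressing, that Lemma~\ref{lemma:replenish} and the multi-hypercube guarantee (Lemma~\ref{lemma:low-dia}) sustain that component across b-epochs, and that once the pre-failure buckets and nodes age out the ordinary machinery of Theorems~\ref{the:actdir-robust} and~\ref{the:partres} applies. The ``main obstacle'' you flag is also the right one.

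There is, however, a genuine gap in your treatment of network size estimation. You write that ``the random committee $s$ chosen for the estimate is safe with probability at least $1-\epsilon$,'' and then proceed as if Lemma~\ref{lemma:netsizeest} goes through. But $\epsilon$ is a \emph{constant}, so ``probability $1-\epsilon$'' is nowhere near the whp guarantee the theorem claims; with constant probability the sampled committee is failed, the broadcast count can be arbitrarily suppressed by the adversary, and the resulting $M'_e$ is garbage --- which then corrupts $\mathit{ch\_dim}$ and $\mathcal{C}_{e+1}$ and propagates into the dimension-change logic you rely on in your last paragraph. The paper does not absorb this into existing slack; it \emph{modifies the protocol}: instead of one random committee, sample $\Theta(\log N)$ random committees (using the block hash with appended counters), have each broadcast its phase-1 join set, and take the \emph{maximum} count. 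Since failed committees can only under-report and whp at least one sample lands in $S$, the max recovers a valid estimate. Without this change your estimation step simply does not hold whp.

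A second, smaller point: the paper structures the argument as an explicit three-way case split on \emph{when} a transformation b-epoch occurs relative to the failure (never during recovery; at some $i>e$; at $e$ itself). The last case is the delicate one, because the honest peers seeding the new hypercube in b-epoch $e$ may themselves be among the $\delta$ fraction corrupted, so one cannot directly invoke the multi-hypercube bound on the new topology; the paper handles this by invoking Lemma~\ref{lemma:advers-fail-good-comm} to bound the extra $O(1/\log N)$ fraction of under-populated committees and then checking $(1 - 3\epsilon/2 - \delta') \geq \mu_n a b$. Your proposal folds dimension change into a single paragraph and asserts it is ``harmless'' because old nodes are discarded anyway; that misses precisely this interaction between the failure timing and the repopulation of the \emph{new} hypercube. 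You should separate out this case.
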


\begin{proof}
The high-level intuition for attaining recovery is to replace the entire active directory by a new one (that has no bucket failures). If that is posssible, then new active directory facilitates honest nodes joins to committees as required. Catastrophic failure is defined in a way that there is one large connected component of sufficient number of honest peers with a low diameter, that is responsible for the progress of the blockchain. First, we need to prove that the component gets replenished with new (honest) peers as peers join and leave. Then, we focus on that component of committees and show that the lemmas and theorems in Section \ref{sec:full-ana} hold for them, albeit for some small changes. If we can show that the overlay can rely on those subset of committees until a new active directory is formed, then after one additional b-epoch, the overlay reverts to being (fully) partition-resilient as in Section \ref{sec:full-ana}. (Recall that an active directory consists of $O(\alpha / \beta)$ blocks, amounting up to a $O(\alpha)$ rounds, via blockchain liveness.) The main technical difficulty lies in handling dimension change and showing that the new topology has a large connected component of committees with a low diameter, having a sufficient number of honest peers. And to handle that difficulty, we rely on the properties of multi-hypercube \cite{datar2002butterflies} and bipartite expanders \cite{fiat2002censorship, fiat2007censorship}.

Let us say that it takes at most $K$ half-lives for an active directory to be formed. We split the proof into three cases. (Case 2 and 3 can be merged together, but we keep them separate for ease of understanding.)

\textit{Case 1.} In this case, the overlay does not initiate a dimension change in any b-epoch $e, e+1, \dots, e+K+1$. If $\mu_n a b < (1-\epsilon)$ in Lemma \ref{lemma:replenish} where $a, b$ as in Definition \ref{def:cat-fail}, the set $S$ of committees have a sufficient number of honest peers in b-epoch $e$ and the subsequent b-epochs with high probability. We now focus on the set $S$ of committees (and their buckets) and show that the lemmas and theorems in Section \ref{sec:full-ana} also hold after a catastrophic failure. The lemmas related to bounds on the number of node assignments to committees, number of nodes controlled by a peer and the number of join requests handled by a peer (and the total communication cost) apply here without any change. Instead of showing that the entire active directory is robust, Theorem \ref{the:actdir-robust} can be used to show that the non-failed buckets have the required properties of robustness (Definition \ref{def:actrobust}). This can be done by increasing $\lambda_{d}$ by a factor of 2, after which the same analysis holds in Theorem \ref{the:actdir-robust}, as at most 1/2 fraction of honest peers in a non-failed bucket get corrupted. Thus, the node joins for those buckets are successful.

The network is not partition-resilient, even for the set $S$ of committees, because by definition, safe committees have at least $20 \log N$ honest peers (and not at least $\lambda_{p} \log N$ honest peers). As the safe committees have at least $20 \log N$ honest peers, by Chernoff and union bounds, at least $\Omega(\log N)$ in each safe committee would stay for another 2 b-epochs with high probability. Thus, connectivity among committees in $S$ maintained during those two b-epochs. Then, by Lemma \ref{lemma:hon-peer-lower}, using the node joins in b-epoch $e+1$, we can show that the properties of partition resilience (Definition \ref{DEFpartres}) hold for the set $S$ of committees from b-epoch $e+2$ onwards. (The definition of partition-resilience is deliberately made to be topology-oblivious, as in, it can also apply to any subset of committees.)

The tricky part is to show that partition-resilience properties can be maintained for $S$ if network size is allowed to significantly vary over time. Specifically, a small modification is required for Algorithm \ref{alg:sizeest} used to estimate the network size. A random committee is selected using the hash of the block that determines the end of phase 1 of that b-epoch. Instead of choosing a random committee, we make a small modification to the protocol to choose $\Theta(\log N)$ random committees. (For example, this can be done by appending $1, 2, \dots, \log N$ to the input, consisting of hash of the block, for the hash function, and obtaining the output for each of those inputs. In other words, the hash function can be used to generate the required verifiable randomness, where the computational cost is negligible. Recall that each peer can query the hash function $q > 0$ number of times in a round, where $q$ is substantially large.) Once a node receives the entry information of the nodes that belonged to those committees, it chooses the committee that encountered the maximum number of node joins. If this modification is done, then with a high probability, that chosen committee would belong to $S$ because $|S| > \mathcal{C}_e$. (The nodes consider the committee with maximum number of node joins because the failed committees can only under-represent node joins.) Thus, the rest of the lemmas and theorems regarding network size estimation and b-epoch synchronization hold if the network size estimation can be done securely. (Also, network size estimation is required for the subsequent cases in which dimension change needs to be done.)

\textit{Case 2.} In this case, b-epoch $i$ is a transformation b-epoch for $i > e$. The proof arguments from Case 1 also carry over to this case until b-epoch $i$. The tricky part is to handle dimension change, as it is important to maintain a set $S'$ in the new topology that has similiar properties as $S$ so that the proof arguments from Case 1 can also apply from b-epoch $i+1$ onwards. Our first observation is that at most $\epsilon$ fraction of buckets have failed in a directory. This is because if there are more bucket failures, then this would result in a number of committee failures greater than $\epsilon \mathcal{C}_e$. Since this applies to all the directories in the active directory, at most $\epsilon$ buckets fail in the active directory during a catastrophic failure. This means that at most $\epsilon$ committees have failed in the next topology as well, as the committee-directory mappings are such that each bucket is responsible for the same number of committees, and the sets of committees that any two buckets are responsible for, are disjoint. We now rely on the guarantees provided by the multi-hypercube. By Lemma \ref{lemma:multi-hyp}, at least $\mathcal{C}_{i+1}(1-(3 \epsilon/2))$ form a connected component\footnote{Refer to Definition \ref{def:safe-comm} for connectivity between committees.} with a diameter of at most $2\log N$. If $(1-(3 \epsilon/2)) > \mu_n a b$ as in Definition \ref{def:cat-fail}, then using Lemma \ref{lemma:replenish}, the same arguments carry over to the next topology. This applies to multiple dimension changes that can occur over the $K+1$ b-epochs.

\textit{Case 3.} In this case, b-epoch $e$ itself is a transformation b-epoch. We cannot just rely on the multi-hypercube as the honest peers that replenish the committees in $S'$ in the next topology, may get corrupted during (or towards the end of) b-epoch $e$. We first compute an upper bound on the number of committees that can have less than $\lambda_p \log N$ honest peers if in total $\delta$ fraction of honest peers are corrupted. By Lemma \ref{lemma:advers-fail-good-comm}, there are at most $\delta' \leq O(1/\log N)$ fraction of committees that have less than $\lambda_p \log N$ honest peers. Thus, if $(1-(3 \epsilon/2) - \delta') \geq \mu_n a b$ as in Definition \ref{def:cat-fail}, then using Lemma \ref{lemma:replenish}, there are a sufficient number of honest peers in $S'$ throughout b-epoch $i+1$ by Lemma \ref{lemma:replenish} with high probability. In other words, the overlay can rely on the set $S'$ for the progress of the blockchain for b-epoch $i+1$. Moreover, as in previous case, there are at most $\epsilon$ fraction of committees fail due to bucket failures. From b-epoch $i+2$ onwards, the committees that had less than $\lambda_p \log N$ honest peers in b-epoch $i+1$, but whose buckets had not failed, have at least $\lambda_p \log N$ honest peers with high probability by Lemma \ref{lemma:hon-peer-lower}. Thus, the set $S'$ again resorts to the large connected component of committees guaranteed by the multi-hypercube. Then, from b-epoch $i+2$ onwards, the overlay can rely on that set of committees for the progress of the blockchain, where the same proof arguments of Case 1 apply.

Thus, in all these cases, the overlay network becomes partition-resilient in a constant number of b-epochs with high probability.

\end{proof}

{
\bibliographystyle{prahladhurl}
\bibliography{main}
}

\appendix

\section{More on Recovery}\label{sec:more-rec}

In this section, we argue that the existing solutions for join-leave attacks cannot easily recover from the committee failures considered in Section \ref{sec:recovery}. Their join (and leave) protocols depend on the honest majority of committees\footnote{They are also referred to as quorums \cite{awerbuch2004group, awerbuch2009towards} or swarms \cite{fiat2005making} or clusters \cite{guerraoui2013highly} in the literature.}. Once a committee loses the honest majority, it can initiate new (malicious) joins, either to replenish itself with more malicious peers or make other committees fail. In particular, even when a small number of committees, say $O(\log N)$ committees have a malicious majority, then the network can continue to have at least $\Omega(\log N)$ committees with malicious majority over time. Before delving into the technical details, we strive to provide a high-level intuition for vulnerability to arbitrary committee failures in the existing solutions.

First, we specify the combination of network and join protocols for which recovery is hard to achieve. We are interested in a virtual network (that has low diameter, low degree and good expansion) of committees of $\Theta(\log N)$ peers. Typically, this is termed as a ``quorum topology'' \cite{young2010practical}. More specifically, the committees are functional units of the (virtual) network. We recall a few important invariants of a quorum topology \cite{young2010practical}.

\begin{definition} \label{def:quo-topo}
An overlay network is said to have a \text{\normalfont quorum topology} if it satisfies the following invariants.
\begin{enumerate}
    \item \textbf{Network of committees.} The overlay network is defined by a virtual graph of committees $G_C$ where vertices are committees and edges between nodes represent connections between committees.
    \item \textbf{Committee size.} Each committee consists of $\Theta(s)$ peers where $s = \Omega(\log N)$.
    \item \textbf{Membership.} Every peer belongs to at least one committee.
    \item \textbf{Intra-committee communication.} Every peer can communicate with all other members of its committees.
    \item \textbf{Inter-committee communication.} If $C_i$ and $C_j$ share an edge in $G_C$, then a peer belonging to $C_i$ can communicate directly with any member of $C_j$ and vice-versa.
\end{enumerate}
\end{definition}

In addition to the five invariants mentioned in Definition \ref{def:quo-topo}, the challenge is to maintain honest majority in each committee despite churn and a constant fraction of peers being Byzantine. The committees themselves form connections amongst themselves according to rules of an efficient network such as Chord \cite{awerbuch2004group, fiat2005making}, de Bruijn graph \cite{awerbuch2009towards}, an expander graph \cite{guerraoui2013highly} etc. Here, two committees are said to be \emph{connected} if each peer of a committee is connected to every other peer in the other committee (and vice versa). Thus, each committee is connected with a small number of other committees; we say that those other committees are its ``neighbouring'' committees. 

Although it is difficult to generalize the join protocols in the literature, we identify a few important steps or invariants that are central to all the existing join protocols (including this work) without getting to the details of how they are achieved.

\begin{definition} \label{def:churn-res-join}
The \texttt{JOIN} protocol for a network with a quorum topology is said to be \text{\normalfont churn-resilient} if it consists of the following steps (in the same order).
\begin{enumerate}
    \item \textbf{Random ID.} Firstly, when a peer $p$ joins the network, it is assigned a random committee $C_r$.
    
    \item \textbf{Placement.} Peers in $C_r$ and its neighbouring committees are informed about the new peer $p$ after which they verify that $p$ actually belongs to $C_r$.
    
    \item \textbf{Perturbation.} After the new peer gets placed in a random committee, the network is ``perturbed'' where at most $O(\polylog N)$ peers shift to different (typically random) committees.
\end{enumerate}
\end{definition}

To get an idea of how each of those steps can be achieved, we delve deeper into the existing solutions.
\begin{enumerate}
    \item \textbf{Random ID.} Typically, a new peer $p$ is assumed to know the contact address of an existing (honest) peer $q$. Peer $q$ belonging to committee $C_q$ informs all the peers in $C_q$ about the new peer $p$. For structured routable topologies \cite{awerbuch2004group, fiat2005making, awerbuch2009towards}, the peers within the committee $C_q$ run a distributed random number generation (RNG) protocol to determine a \textit{random location} in the network. (We use the term ``location'' because the previous solutions are based on placement of peers in the virtual (continuous) interval $[0, 1)$.) For the expander topology \cite{guerraoui2013highly}, the authors heavily rely on random walks over committees. A committee internally runs a distributed RNG protocol to determine the next walk step (random neighbour) until the walk ends. In our work and in \cite{jaiyeola2018tiny}, a peer gets bound to a random committee via hash function (which can be verified by any other peer).
    
    \textbf{Placement.} In structured routable topologies \cite{awerbuch2004group, fiat2005making, awerbuch2009towards}, the peers in $C_r$ are informed about $p$ by $C_q$ through an efficient route. Then, peers in $C_r$ send their contact information to peer $p$, and also inform their neighbouring committees about the membership of peer $p$. In the expander topology \cite{guerraoui2013highly}, the path used in the random walk is used to do the same. In our work, the bootstrapping service directly helps $p$ in contacting the $C_r$ and its neighbouring committees.
    
    \textbf{Perturbation.} Shuffling peers between committees is necessary upon join and leaves to maintain honest majority in all the committees against adaptive join-leave attacks \cite{awerbuch2004group, guerraoui2013highly}. Typically, this pertubation of network should be ``small''. For e.g., every peer can be assigned a random committee whenever a new peer joins the network, to maintain honest majority in all committees. But that would not be efficient at all. There are different ways to achieve small perturbation: adopting limited lifetime for peers \cite{awerbuch2004group}, k-rotation \cite{scheideler2005spread, fiat2005making}, cuckoo rule \cite{awerbuch2009towards}, and exchange all peers in $C_r$ with other random peers in the network \cite{guerraoui2013highly}. In our work, we rely on the limited lifetime method. The limited lifetime method is indeed a small perturbation \textit{per join} because the lifetime depends on churn rate (half-life) of the system. In other words, there is a linear number of joins and leaves in a half-life period, and therefore, setting a lifetime of a constant number of half-lives, essentially perturbs the network (forces re-joins) over a ``batch'' of joins and leaves.
\end{enumerate}

The fundamental problem in the existing solutions is that steps in Definition \ref{def:churn-res-join} heavily depends on the committees having honest majority. For example, the distributed RNG protocol is secure only if the committee has an honest majority. Moreover, the (random) placement of a new peer is verified through ``majority vote'' routing, i.e., when a committee receives a join request of a new peer from a majority of peers in one of its neighbouring committees, it verifies the randomness of the placement assuming such a majority vote occurred at each committee in the route, until the committee that executed the distributed RNG protocol. The majority vote verification is required for random walks solution too. Thus, honest majority of committees becomes crucial in proving the security of such join protocols. To this end, we strive to define ``committee-based'' joins that heavily rely on honest majority of committees.

\begin{definition}
A churn-resilient \texttt{JOIN} protocol for a network with quorum topology is said to be \text{\normalfont committee-based} if it has the following properties.
\begin{enumerate}
    \item A peer $p$ joins the network by contacting another peer $q$ which in turn informs its committee $C_q$ (introducing committee). $C_q$ initiates the random ID generation protocol so that $p$ gets placed in a random committee $C_r$ (placed committee).
    \item Honest majority in committees is required for carrying out and verifying the steps of the \texttt{JOIN} protocol (as mentioned in Definition \ref{def:churn-res-join}).
    \item The introducing and placed committees ($C_q$ and $C_r$) are responsible for initiating and carrying out network perturbation if it is not done via limited lifetime method.
\end{enumerate}
\end{definition}

Our work primarily differs in random ID and placement of the joining peer. The insight is that blockchain provides a globally known, network-generated and unpredictable \textit{input} to the hash function. If the input is globally known, then the hash function output can be verified by any peer. If the input is unpredictable, then the adversary cannot launch a pre-computation attack to populate a committee. If the input is generated by a set of committees (and not by the entire network), then that input can be corrupt if those set of committees have a malicious majority. Such an input is hard to generate using completely localized algorithms. The hash function is used to map the new peer to a random committee. The new peer is then placed in that committee by a secure bootstrapping service, constructed using the blockchain. Finally, by making use of the limited lifetime method to perturb the network, we do not rely on honest majority of committees for the security of join protocol.

As it is difficult to devise a single attack that works for the family of network and join protocols, we provide attacks to the existing solutions (primarily exploiting the reliance on honest majority). We want to show that a small number, say, $O(\log N)$ committees that have malicious majority to continually maintain malicious majority with low cost. Let $C_M$ be the committee having malicious majority that needs to get replenished with more malicious peers.
\begin{enumerate}
    \item \textbf{Structured routable topologies \cite{awerbuch2004group, fiat2005making, awerbuch2009towards}.} The key idea is that the $O(\log N)$ malicious majority committees simply route new malicious peers to $C_M$ with a join request. This attack is sufficient to keep adding new malicious peers into $C_M$ in \cite{awerbuch2004group}. In the k-rotation paper \cite{scheideler2005spread, fiat2005making}, the introducing committee selectively picks three locations to add and shift peers. In the cuckoo rule paper \cite{awerbuch2009towards}, the other $O(\log N)$ malicious majority committees pick disjoint constant-length arcs in $C_M$ for placing new malicious peers so that the perturbation does not affect the newly added peers.
    \item \textbf{Expander network topology \cite{guerraoui2013highly}.} Malicious committees can have edges amongst themselves to create join requests via random walks. Committee $C_M$ informs its neighbours about a new (malicious) peer. Instead of actually exchanging all other peers in $C_M$, malicious peers in $C_M$ can reserve a small fraction of the peers for exchanges between legitimate committees and replaces the rest of the peers with new malicious peers, thereby maintaining a malicious majority. More importantly, as new malicious peers get added to $C_M$, the committee can get split into two committees, each with malicious majority. (Locally splitting/merging is done to expand/shrink the network as the network size can vary polynomially over time.)
\end{enumerate}

For some specific attacks, statistical measures such as join rate, message rate, etc., can be used in practice, to detect and mitigate the effects of committee failures. However, not only are those measures not consistent over time in open P2P networks but even effectively carrying out such system-wide measurements is a hard problem. If committees are given the power of rejecting new peers, protesting against other committees, etc., then malicious majority committees can easily misuse such powers. Thus, recovery from arbitrary committee failures in networks that heavily rely on honest majority in committees, is difficult to achieve.

\section{Pseudocodes for Sub-routines} \label{appendix:subroutines}

\begin{algorithm}
\caption{\texttt{VERIFY\_PROOFS} protocol}
\label{alg:verifyproofs}
\begin{algorithmic}
\REQUIRE Let $R$ be the set of messages of the form $(m, e)$ received in this round, where $m$ is either \texttt{JOINING} and \texttt{REQ\_INFO} and $e$ is the entry information of the node that sent the message. Let $\mathbf{H}(.)$ denote the hash function. Let $\mathbf{Blk}(i)$ be a function that returns the block from the confirmed chain with block number $i$ if it exists, otherise, returns $\varnothing$. Let $l$ be the block number of the most recent block in the confirmed chain.
\ENSURE Return the subset of $R$ in which each message has a valid proof.
\STATE $V \leftarrow \{ \}$.
\FOR{each $(m, e) \in R$}
    \STATE $\mathit{blk\_num}, N_c, \mathit{net\_addr}  \leftarrow e$. \COMMENT{Retrieve entry information.}
    \STATE $\mathit{blk} \leftarrow \mathbf{Blk}(\mathit{blk\_num})$.
    \STATE $P_{\mathit{join}} \leftarrow \mathbf{H}(\mathbf{H}(\mathit{blk})  \mathbin\Vert \mathit{net\_addr} \mathbin\Vert N_c)$.
    \STATE $\mathit{c} \leftarrow$ Leftmost $\lceil \log \mathcal{C}_e \rceil$ bits of $P_{\mathit{join}}$.
    \STATE $\mathit{valid\_blk} \leftarrow (l - \mathit{blk\_num}) \leq \mu_s$ \AND ($\mathit{blk}$ is not $\varnothing$).
    \IF{the verification is for the directory}
    \STATE $C_n \leftarrow$ Set of IDs of neighbouring committees of committee $\mathit{c}$.
    \STATE $C_{\mathit{rel}} \leftarrow \{ \mathit{c} \} \bigcup C_n$.
    \STATE $C \leftarrow$ Set of all IDs of committees that the directory node is responsible for.
    \ELSE
    \STATE $C_{\mathit{rel}} \leftarrow \{ \mathit{c} \}$.
    \STATE $C \leftarrow$ Singleton set of ID of the committee that the node belongs to.
    \ENDIF
    \IF{$P_{\mathit{join}} < T_{\mathit{join}}$ \AND $C_{\mathit{rel}} \bigcap C$ is not $\emptyset$ \AND $\mathit{valid\_blk}$}
    \STATE $V \leftarrow V \bigcup \{(m, e)\}$.
    \ENDIF
\ENDFOR
\STATE Return $V$.
\end{algorithmic}
\end{algorithm}

\begin{algorithm}
\caption{\texttt{STORE\_INFO} protocol}
\label{alg:storeinfo}
\begin{algorithmic}
\REQUIRE Set $V$ consisting of all the valid \texttt{JOINING} messages in this round. Let $C_i$ be set of all the tuples of \texttt{JOINING} message and network address of valid nodes (that this directory node is aware of) in committee $i$. Let $\mathbf{H}(.)$ denote the hash function. Let $\mathbf{Blk}(i)$ be a function that returns the block from the confirmed chain with block number $i$ if it exists, otherwise, returns $\varnothing$.
\ENSURE Store entry information of all the nodes that provided a valid proof in this round.
\FOR{each $(\mathtt{JOINING}, e) \in V$}
    \STATE $\mathit{blk\_num}, N_c, \mathit{net\_addr}  \leftarrow e$. \COMMENT{Retrieve entry information.}
    \STATE $\mathit{blk} \leftarrow \mathbf{Blk}(\mathit{blk\_num})$.
    \STATE $P_{\mathit{join}} \leftarrow \mathbf{H}(\mathbf{H}(\mathit{blk})  \mathbin\Vert \mathit{net\_addr} \mathbin\Vert N_c)$.
    \STATE $\mathit{c} \leftarrow$ Leftmost $\lceil \log \mathcal{C}_e \rceil$ bits of $P_{\mathit{join}}$.
    \STATE $C_c \leftarrow C_c \bigcup \{e\}$.
\ENDFOR
\end{algorithmic}
\end{algorithm}

\begin{algorithm}
\caption{\texttt{REPLY\_INFO} protocol}
\label{alg:replyinfo}
\begin{algorithmic}
\REQUIRE Set $V$ consisting of all the valid \texttt{REQ\_INFO} messages in this round. Let $C_i$ be set of all the tuples of \texttt{REQ\_INFO} message and network address of valid nodes (that this directory node is aware of) in committee $i$.
\ENSURE Send committee entry information to nodes that provided a valid proof in this round.
\FOR{each $(\mathtt{REQ\_INFO}, c, e) \in V$}
    \STATE SEND (\texttt{COMM\_INFO}, $C_c$) to the node with network address $\mathit{net\_addr}$.
\ENDFOR
\end{algorithmic}
\end{algorithm}

\end{document}